\pgfplotsset{compat=newest}
\newtheorem{thm}{Theorem}[section]
\newtheorem{lem}[thm]{Lemma}
\newtheorem{pro}[thm]{Proposition}
\theoremstyle{remark}
\DeclareMathOperator{\re}{Re}
\DeclareMathOperator{\im}{Im}
\DeclareMathOperator*{\argmax}{arg\,max}
\DeclareMathOperator*{\argmin}{arg\,min}
\newcommand{\mb}[1]{\ensuremath{\boldsymbol{#1}}}
\newcommand{\diag}{\mathrm{diag}}
\newcommand{\inte}{\mathrm{int}}
\newcommand{\ubar}[1]{\underaccent{\bar}{#1}}
\numberwithin{equation}{section}
\begin{document}
	
	\title{Solvability of Power Flow Equations Through Existence and Uniqueness of Complex Fixed Point}
	
	\author{Bai Cui
		\thanks{
			B. Cui is with the Energy Systems Division, Argonne National Laboratory, 9700 Cass Avenue, Lemont, IL 60439 (e-mail: bcui@anl.gov).
		}, Xu Andy Sun
		\thanks{X. A. Sun is with the School of Industrial and Systems Engineering, Georgia Institute of Technology, 765 Ferst Drive NW, Atlanta, Georgia 30332-0205 (e-mail: andy.sun@isye.gatech.edu).}
	}
	\vspace{-5mm}
	
	\markboth{SUBMITTED TO IEEE TRANSACTIONS ON POWER SYSTEMS, September 2014}%
	{Shell \MakeLowercase{\textit{et al.}}: Bare Demo of IEEEtran.cls for Journals}
	\maketitle
	
	\tableofcontents
	\newpage
	\begin{abstract}
		Variations of loading level and changes in system topological property may cause the operating point of an electric power systems to move gradually towards the verge of its transmission capability, which can lead to catastrophic outcomes such as voltage collapse blackout. From a modeling perspective, voltage collapse is closely related to the solvability of power flow equations. Determining conditions for existence and uniqueness of solution to power flow equations is one of the fundamental problems in power systems that has great theoretical and practical significance. In this paper, we provide strong sufficient condition certifying the existence and uniqueness of power flow solutions in a subset of state (voltage) space. The novel analytical approach heavily exploits the contractive properties of the fixed-point form in complex domain, which leads to much sharper analytical conditions than previous ones based primarily on analysis in the real domain. Extensive computational experiments are performed which validate the correctness and demonstrate the effectiveness of the proposed condition.
	\end{abstract}
	
	\section{Introduction}
	
	\subsection{Motivation}
	
	Electric power system is regarded by the National Academy of Engineering as the greatest engineering achievement in the 20th century \cite{NAE00}, which supplies electric power worldwide from generating units to end users through extremely vast and complex power networks. Maintaining the stable and reliable operation of systems with such complexity is by no means an easy task. Power systems have traditionally been designed with sufficient resilience against disturbances and contingencies. However, with ever increasing power demand and competitive electricity market, they are being operated ever closer to the operational boundaries \cite{Schavemaker08}, in other words, their loading margins to the operational boundaries are being gradually lowered. Systems with insufficient loading margins run the risk of resulting in catastrophic outcomes such as cascading failure and large-scale blackout. Several major blackouts worldwide are associated with voltage collapse --- a phenomenon manifests itself as the gradual decline of system voltage profiles followed by a sharp voltage drop that leads to system instability and collapse \cite{Dobson89}. It is known that voltage collapse is closely related to the singularity of the associated algebraic power flow equations, and the point of voltage collapse coincides with the singularity of the set of power flow equations \cite{Sauer90, Dobson94}. However, explicit characterization of the boundary of the power flow solvability set\footnote{Mathematically, for a power system modeled by quadratic power flow equations parametrized by nodal power injections, the solvability set is the set of parameters such that the quadratic system admits a `high-voltage' solution. Physically, this set describes power injections that are realizable by the networks.} is difficult: it has been shown that the solvability set can have quite complex and nonconvex structure \cite{Hiskens01}.
	
	Reliable numerical tools to calculate the distance from a given operating point to the power flow solvability boundary are available \cite{Ajjarapu92, Cutsem91}. However, the computational overhead renders them unsuitable for online applications especially under uncertain power injections where the patterns of load variations can not be precisely determined. In addition, they provide no analytical insights into how different system characteristics such as network parameters and topology, loading conditions, and generator set-points affect system steady-state stability. It remains a challenging problem to exploit the distinct properties of power flow equations and to derive strong explicit conditions under which the power flow equations admit high-voltage solutions.
	
	\subsection{Literature survey}
	
	There has been a resurgence in recent years in the search for explicit conditions certifying the existence and uniqueness of power flow solutions along the lines of works done by early pioneers in the field in the late 20th century \cite{Wu82,Chiang90,Ilic92,Chiang00}. Wu \cite{Wu82} and Ili\'c \cite{Ilic92} are among the first to derive sufficient conditions for the solvability of decoupled power flow equations in transmission system, whereas early analytical results on distribution system power flow solution existence and uniqueness have been proposed by Chiang in \cite{Chiang90,Chiang00}. Recently, energy function method and monotone operator theory has been applied to characterize convex domain in which the (non-)existence of power flow solutions can be certified \cite{Dvijotham15a,Dvijotham15b,Park19}. For decoupled real power flow equations on acyclic networks, necessary and sufficient condition for existence and uniqueness of desirable solution has recently been proposed in \cite{Dorfler13}. Sufficient solvability condition for the counterpart decoupled reactive power flow equations appears in \cite{Simpson-Porco16}. Solvability results on DC network, which shares similar model with decoupled reactive power flow model, include \cite{Barabanov16,Liu18}. Extending the analysis from decoupled power flow models to the coupled one and obtaining solvability conditions with similar quality turns out to be challenging. For coupled full power flow model, a sufficient condition for existence and uniqueness of high-voltage solution is obtained using fixed-point argument in \cite{Bolognani16}. Similar techniques have subsequently been applied to yield stronger results in \cite{Yu15,WangC16,Dvijotham18,Nguyen17}, with results in \cite{WangC16,Dvijotham18} dominating earlier ones. While the condition proposed in \cite{Dvijotham18} does not dominate the one in \cite{WangC16}, it has been shown empirically in \cite{Nguyen17} that the condition outperform the one in \cite{WangC16} most of the time. However, the improved sharpness comes at a price of no solution uniqueness guarantee. Conditions on solution existence and uniqueness in lossless radial system with voltage-controlled buses are given in \cite{Simpson17a, Simpson17b}. Extensions of the conditions to multi-phase distribution systems appear in \cite{WangC17,Bernstein18}. For a comprehensive and up-to-date summary of research on power flow solvability, see \cite{Simpson17b}. As mentioned in \cite{Simpson17b}, we now have a fairly good understanding of solution existence and uniqueness for decoupled power flow models, while the quest for sharp analytical conditions for coupled full power flow model, despite substantial research efforts \cite{Bolognani16,Yu15,WangC16,Dvijotham18,Nguyen17,Simpson17a, Simpson17b,WangC17,Bernstein18}, remains open. Apart from gaining deeper theoretical understandings of power flow solvability problem, these developed conditions are suitable for real-time monitoring and fast screening of voltage instability, as well as characterizing system stress level.
	
	\subsection{Contributions}
	
	In this work, we propose explicit sufficient solvability condition on nodal power injections that certify existence and uniqueness of solutions to power flow equations in a subset of state (voltage) space for given generator voltages and network topology. The condition relates system topology and network parameters, load power injections, and generator voltage set-points, and reveals their interplay in characterizing system stability level. For scenarios in which the existence and uniqueness of power flow solution can be certified, the condition also provides rigorous bound inside which the solution lies. The proposed condition significantly improves earlier conditions on power flow solvability. Specifically, the main contributions of the work are:
	\begin{enumerate}[label = \arabic*)]
		\item The proposed condition is shown to dominate all known solvability conditions \cite{Bolognani16,Yu15,Simpson-Porco16,WangC16,Dvijotham18,Nguyen17}. Specifically, we have analytically shown that it dominates the two strongest conditions reported in \cite{WangC16} and \cite{Dvijotham18}. In addition, unlike some existing conditions (for example, \cite{Dvijotham18}) which only guarantees power flow solution existence, the proposed condition guarantees solution existence and uniqueness within a desirable set in voltage space, characterizes a voltage subset devoid of solutions, and provides convergence guarantee for the iterative power flow algorithm.
		\item As far as we know, the proposed solvability condition is the first one to encode the effects of line resistance-to-reactance ratio and load power factors, as well as their interplay, on system solvability. As such, it serves as a better indicator on the effectiveness of different control actions for system stability and security enhancement. It can also be used as an on-line system stress monitoring tool, which provides an improved conservative estimate on the distance to steady-state feasibility and stability boundaries.
		\item Unlike previous conditions \cite{Bolognani16,Simpson-Porco16,WangC16,Simpson17b} that rely on Banach fixed point theorem for solution uniqueness, we develop a novel sufficient condition on solution uniqueness for holomorphic fixed-point equations in $\mathbb{C}^n$ that is significantly less restrictive. This general result is an extension of \cite[Thm. 6.12a]{Henrici74} from $\mathbb{C}$ to $\mathbb{C}^n$. We believe the technique is general enough to prove useful for other problems whose models display similar structural and numerical properties.
	\end{enumerate}
	
	\subsection{Applications}
	
	The condition can find a multitude of applications in power system operations and control. We briefly discuss some of them here. The interested readers can refer to \cite{Simpson-Porco16,WangC16,Nguyen17,Simpson17b} for further discussions on potential applications.
	\begin{enumerate}[label = \arabic*)]
		\item Power system contingency analysis is routinely performed by system operators to assess the system's resilience to withstand possible component (generator, transmission line, etc.) failures. To access the potential impact of possible contingencies on system steady-state response, power flow analyses need to be performed, which can be time consuming. The proposed condition can be used to certify scenarios for which the power flow solution exists and lies inside the operational constraints with minimal computational overhead, so that a large number of scenarios can be pre-screened. 
		\item To evaluate system stability and resilience against projected variations of nodal power injections, the standard computational tool is continuation power flow (CPF) \cite{Ajjarapu92}, which employs a predictor-corrector scheme that perform a sequence of power flow computations until power flow Jacobian singularity. Heuristic proxy of CPF exists, which tries to extrapolate the PV curve using a reduced number of power flow solutions \cite{Chiang97}. However, these methods are not applicable when the power injections are uncertain. On the other hand, the proposed condition provides rigorous sufficient certificates to ensure the feasibility of uncertain power injections.
		\item For the sake of security and physical limitations, it is often important to make sure that the power flow solutions not only exist, but also satisfy operational constraints such as bus voltage and line flow limits. Power flow feasibility set identifies the set of power injections such that the power flow solutions are guaranteed to exist and satisfies these constraints.  While characterizing power injections whose corresponding power flow solutions satisfy operational constraints is relatively easy, finding certificates to ensure existence of power flow solutions for the set of power injections is in fact a bottleneck for designing tractable algorithms to construct the feasibility set, where the proposed condition can be useful. For applications of recent solvability conditions on power flow feasibility set characterization, see \cite{Wang18,Wang19,Nguyen18,Lee18}.
	\end{enumerate}
	Other applications where the proposed condition can be used to quantify the system stability level and to certify power flow solution existence include: preventive and corrective control selection \cite{Simpson-Porco16,Mansour13}; allowable renewable generation certification \cite{Nguyen17}; stability-constraint optimal power flow (OPF) problem \cite{Cui17}, system stress minimization problem \cite{Todescato16}, as well as robust OPF problem \cite{Molzahn18, Louca19}.

	\section{Problem Modeling} \label{pfmodel}
	
	Since we are concerned with the long-term behavior of power system governed by balance of network flows, we adopt the algebraic model that does not incorporate electro-mechanical dynamics on the generator or load side which are relevant for short-term analysis. This modeling perspective is based upon the time-scale separation principle: the transient effects that take place on the order of seconds and the long-term effects that spans minutes to hours can be considered independently --- when evaluating long-term effects we assume that the fast transients are not excited during slow changes \cite[Sect.~5.4]{Cutsem08}.
	
	\subsection{Power system model}
	
	We consider a connected and phase-balanced power system with $n+m$ buses operating in steady-state. The underlying topology of the system can be described by an undirected connected graph $(\mathcal{N}, \mathcal{E})$, where buses are modeled as nodes $\mathcal{N}$ and lines are modeled as branches $\mathcal{E} \subseteq \mathcal{N} \times \mathcal{N}$. The buses are partitioned into two distinct types: generators ($\mathcal{N}_G$) and loads ($\mathcal{N}_L$) such that $\mathcal{N}_G \cup \mathcal{N}_L = \mathcal{N}$ and $\mathcal{N}_G \cap \mathcal{N}_L = \emptyset$. We denote the number of generators and loads as $m = |\mathcal{N}_G| \ge 1$ and $n = |\mathcal{N}_L| \ge 1$, respectively, and we assume buses $1, \ldots, m$ are generator buses and buses $m+1, \ldots, m+n$ are load buses. Every bus $i$ in the system is associated with a voltage phasor $V_i = |V_i| e^{\mathrm{i}\theta_i}$ where $|V_i|$ and $\theta_i$ are the magnitude and phase angle of the voltage.
	
	For steady-state analysis, the transmission line is generally modeled by the lumped $\pi$-equivalent model which incorporates transmission line impedance, line charging capacitors, shunt elements, and transformers \cite{Wood14}. The information is encoded in the complex admittance matrix $\mb{Y} \in \mathbb{C}^{(n+m)\times(n+m)}$ relating vector of bus voltage $\mb{V}$ and vector of bus current injection $\mb{I}$ by Ohm's Law and Kirchhoff's Law as 
	\begin{equation} \label{eq:ohm}
	\mb{I} = \mb{Y}\mb{V}.
	\end{equation}
	The admittance matrix has components $Y_{ij} = -y_{ij}$ for $(i,j) \in \mathcal{E}$ and $Y_{ii} = y_{ii} + \sum_{j=1}^{n+m} y_{ij}$, where $y_{ij}$ represents the line admittance seen from bus $i$ to $j$ while $y_{ii}$ is the shunt admittance at bus $i$. The matrices of real and imaginary parts of the admittance matrix are called conductance and susceptance matrix, respectively, and are denoted as $\mb{G}$ and $\mb{B}$ such that $\mb{Y} = \mb{G} + \mathrm{i}\mb{B}$.
	
	Generator and load buses are modeled differently in power system steady-state analysis. We model a load bus as a `PQ' bus, whose real and reactive power injections are specified and the voltage phasor is to be determined. On the other hand, since generators have voltage regulation capabilities under normal operation, generator buses are generally modeled as `PV' buses, whose real power injections and voltage magnitudes are specified and the voltage angle and reactive power injections are undetermined. However, we adopt one popular assumption in voltage stability analysis regarding generator bus modeling: we model the generator buses as `$\theta V$' buses, i.e., both the voltage magnitudes and angles are specified. For a justification of the modeling assumption, see \cite{Cui17} and references therein.
	
	The power flow equations relate bus power injections $\mb{S}$ with bus voltage through the admittance matrix. First note the vector of bus power injection can be calculated based on \eqref{eq:ohm} in the following way
	\begin{equation} \label{eq:power}
	\mb{S} = \diag(\mb{V}) \mb{I}^* = \diag(\mb{V}) \mb{Y}^* \mb{V}^*.
	\end{equation}
	By singling out the real and reactive powers and rearranging terms, we obtain the power flow equations for every load bus $i \in \mathcal{N}_L$ as functions of bus voltage magnitude and phase angles as follows
	\begin{subequations} \label{eq:powerflow}
		\begin{align}
		P_i &= \sum_{j \in \mathcal{N}} |V_i||V_j| \left( G_{ij} \cos(\theta_i - \theta_j) + B_{ij}\sin(\theta_i - \theta_j) \right), \qquad i \in \mathcal{N}_L \label{eq:pf:real}\\
		Q_i &= \sum_{j \in \mathcal{N}} |V_i||V_j| \left( G_{ij} \sin(\theta_i - \theta_j) - B_{ij}\cos(\theta_i - \theta_j) \right), \qquad i \in \mathcal{N}_L\label{eq:pf:react}
		\end{align}
	\end{subequations}
	
	\textbf{The fundamental question} addressed in this paper is the solvability of \eqref{eq:powerflow}, i.e., given load bus power injections $\mb{S}_L = \mb{P}_L + \mathrm{i}\mb{Q}_L$ and generator voltage set points $\mb{V}_G$, determine whether there exists a load voltage solution $\mb{V}_L$ that satisfies \eqref{eq:powerflow}. 
	
	To gain more analytical insights into this problem, we rewrite \eqref{eq:powerflow} in an alternative fixed point form.

	\subsection{Power flow equations in fixed point form}
	We explicitly recognize the generator and load buses in bus current, voltage vectors as well as bus admittance matrix, and rewrite \eqref{eq:ohm} as
	\begin{equation}
	\begin{bmatrix}
	\mb{I}_G \\
	-\mb{I}_L
	\end{bmatrix} = 
	\begin{bmatrix}
	\mb{Y}_{GG} & \mb{Y}_{GL} \\
	\mb{Y}_{LG} & \mb{Y}_{LL}
	\end{bmatrix}
	\begin{bmatrix}
	\mb{V}_G \\
	\mb{V}_L
	\end{bmatrix}.
	\label{admit}
	\end{equation}
	Solving for $\mb{V}_L$ in \eqref{admit} yields
	\begin{equation}
	\mb{V}_L = -\mb{Y}_{LL}^{-1}\mb{Y}_{LG}\mb{V}_G - \mb{Y}_{LL}^{-1}\mb{I}_L.
	\label{equi1}
	\end{equation}
	Denote the vector of equivalent voltage as $\mb{E} := -\mb{Y}_{LL}^{-1} \mb{Y}_{LG} \mb{V}_G$ and the impedance matrix as $\mb{Z} := \mb{Y}_{LL}^{-1}$ (the invertibility of $\mb{Y}_{LL}$ is shown in \cite{WangC16}). With the definitions, (\ref{equi1}) can be rewritten as 
	\begin{equation} \label{eq:nodekvl}
	\mb{V}_L = \mb{E} - \mb{Z}\mb{I}_L.
	\end{equation}
	Substitute $\mb{I}_L = \diag^{-1}(\mb{V}_L^*) \mb{S}_L^*$ from \eqref{eq:power} in \eqref{eq:nodekvl} and multiply both sides of \eqref{eq:nodekvl} by $\diag(\mb{E})^{-1}$, we arrive at the power flow equations in fixed point form
	\begin{equation} \label{eq:fppf}
	\mb{v}_L = \mb{1} - \hat{\mb{Z}}\diag^{-1}(\mb{v}_L^*)\mb{S}_L^*
	\end{equation}
	where the normalized load bus voltages and normalized impedance matrix are defined as
	\begin{equation}
	\mb{v}_L := \diag^{-1}(\mb{E})\mb{V}_L, \qquad \hat{\mb{Z}} := \diag^{-1}(\mb{E}) \mb{Z} \diag^{-1}(\mb{E}^*).
	\end{equation}
	Notice that $\hat{\mb{Z}}$ has the unit of $\text{watt}^{-1}$, i.e. the inverse of power. Under very mild assumption that $\mb{E}$ and $\mb{v}_L$ do not contain zero elements, which hold true for any practical power systems, the solution $\mb{v}_L$ to \eqref{eq:fppf} left-multiplied by $\diag(\mb{E})$ recovers a solution $\mb{V}_L$ to \eqref{eq:powerflow}, and vice versa. Therefore, we refer to \eqref{eq:fppf} as the power flow equations in the sequel unless otherwise stated.

	
	\section{Main Result}
	
	\subsection{A New Solvability Condition} \label{sect:main:condition}
	
	In this section, we introduce the main result of the paper: a new power flow solvability condition. The general approach to derive the condition can be roughly divided into two parts: In part one, we derive a sufficient condition on load power injection to ensure the existence of power flow solutions. The general idea has some similarity with earlier works on power flow solvability \cite{Bolognani16,Simpson-Porco16}, where we cast the power flow equations in fixed point form \eqref{eq:fppf} and derive conditions under which the fixed point mapping admits a compact convex invariant set\footnote{A set $C$ is an invariant set for $x = f(x)$ if $f(C) \subseteq C$.}. Brouwer fixed point theorem can then be used to certifies the existence of fixed point. The novelty of the proposed approach lies in the way we cast the fixed point power flow equations, which ensures the existence of invariant set for a wider range of loading conditions. Therefore solution existence can be certified for an enlarged solvability set. In part two, we prove the solution in the invariant set is unique and show the fixed point iteration converges to the power flow solution linearly as long as the initial point is inside the invariant set. Due to non-conservativeness of the certified solvability set, contraction mapping theorem is not applicable in our case. Nevertheless, we propose a novel result ensuring the uniqueness of fixed point for $n$-dimensional complex functions. The result can be used to ensure the uniqueness of fixed point of the power flow equations. Contrary to earlier works focusing on analysis in real domain, our approach heavily exploits properties of complex analysis and significantly improves the solvability condition. Furthermore, we rigorously prove that the proposed condition \emph{dominates} the existing ones. We provide the main result and some discussions in this section and defer details of the proof to the Supplementary Information.
	
	Let $\hat{z}_i^\top$ denote the transpose of vector $\hat{z}_i$ and $\hat{z}_i^\top$ is the $i$th row of the normalized impedance matrix $\hat{\mb{Z}}$ in \eqref{eq:fppf}. Before presenting the main results, we first define some system stress measures pertaining to power flow solvability. The unitless matrix $\hat{\mb{Z}} \diag(\mb{S}^*_L)$ is an important metric quantifying the stress between load buses. The row sums of $\hat{\mb{Z}} \diag(\mb{S}^*_L)$ are collected by the unitless vector $\hat{\mb{Z}} \mb{S}^*_L$, which captures the aggregated system stress on each load bus. Define $\eta_i$ in $\mathbb{C}$ to be the $i$th element of $\hat{\mb{Z}} \mb{S}^*_L$
	\begin{equation} \label{eq:eta}
	\eta_i := \hat{z}_i^\top \mb{S}^*_L =\sum_{j=1}^n \hat{z}_{ij}\mb{S}^*_{Lj}, \qquad \forall i \in \mathcal{N}_L,
	\end{equation}
	and define $\xi_i$ to be the $\ell_1$ norm of the $i$th row of $\hat{\mb{Z}} \diag(\mb{S}^*_L)$
	\begin{equation}\label{eq:xi}
	\xi_i := \|\hat{z}_i^\top \diag(\mb{S}^*_L) \|_1 =\sum_{j=1}^n |\hat{z}_{ij}\mb{S}^*_{Lj}|, \qquad \forall i \in \mathcal{N}_L.
	\end{equation} 
	Both $\eta_i$ and $\xi_i$ can be seen as stress measures for bus $i \in \mathcal{N}_L$, and appear in existing solvability literature \cite{WangC16,Dvijotham18,Nguyen17}. In addition to these two stress measures, we introduce an additional one fusing both $\eta_i$ and $\xi_i$ as
	\begin{equation}\label{eq:gamma}
	\gamma_i := 2\left(\xi_i + \re(\eta_i)\right) - \xi_i^2 - |\eta_i|^2, \qquad \forall i \in \mathcal{N}_L.
	\end{equation}
	Denote the maximum of $|\eta_i|$, $\xi_i$, and $\gamma_i$ over $i \in \mathcal{N}_L$ as $\eta, \xi$, and $\gamma$, that is
	\begin{subequations}\label{eq:xietagamma}
		\begin{align}
		\eta &:= \max_{i \in \mathcal{N}_L} |\eta_i| = \max_{i \in \mathcal{N}_L}\biggl|\sum_{j=1}^n \hat{z}_{ij}\mb{S}^*_{Lj}\biggr|, \\
		\xi &:= \max_{i \in \mathcal{N}_L} \xi_i = \max_{i \in \mathcal{N}_L}\sum_{j=1}^n |\hat{z}_{ij}\mb{S}^*_{Lj}|, \\
		\gamma &:= \max_{i \in \mathcal{N}_L} \gamma_i,
		\end{align}
	\end{subequations}
	and collect $\eta_i$, $\xi_i$, and $\gamma_i$ for all $i \in \mathcal{N}_L$ into vectors $\mb{\eta}$, $\mb{\xi}$, and $\mb{\gamma}$, respectively. With the above definitions, we present the proposed solvability condition as follows:
	
	\begin{thm} \label{thm:main}
		If $\xi, \eta, \gamma$ defined in \eqref{eq:eta}--\eqref{eq:xietagamma} satisfy the following conditions
		\begin{subequations} \label{eq:main}
			\begin{align} 
			\gamma + 2\xi\eta &< 1, \label{eq:main1}\\
			\xi - \eta &\le 1, \label{eq:main2}
			\end{align}
		\end{subequations}
		and we denote two scalars $\bar{r}$ and $\ubar{r}$ as
		\begin{equation} \label{eq:r}
		\bar{r} = \sqrt{\frac{1 - \gamma + \sqrt{(1-\gamma)^2 - 4\xi^2\eta^2}}{2\xi^2}}, \quad \ubar{r} = \sqrt{\frac{1 - \gamma - \sqrt{(1-\gamma)^2 - 4\xi^2\eta^2}}{2\xi^2}},
		\end{equation}
		then the following statements concerning solutions to the power flow equation \eqref{eq:fppf} hold:
		\begin{enumerate}
			\item There is a unique solution in the following polydisc
			\begin{align} \label{eq:v_range}
			\mathbb{D}(1-\mb{\eta}; \ubar{r}\mb{\xi}):=\big\{ \mb{v} \in \mathbb{C}^n
			\; : \; \left| v_i - (1 - \eta_i)\right| \le \ubar{r}\xi_i, \;\forall i \in \mathcal{N}_L \big\};
			\end{align} 
			
			\item There are no solutions in $\mathcal{U} \setminus D(1-\mb{\eta}; \ubar{r}\mb{\xi})$ where 
			\begin{equation}
			\mathcal{U} := \left\{ \mb{v} \in \mathbb{C}^n
			\; : \; \left| (v_i - 1)/v_i \right| < \bar{r}, \;\forall i \in \mathcal{N}_L \right\}
			\end{equation}
			
			\item The fixed point iteration \eqref{eq:fppf} converges to the unique power flow solution $\hat{\mb{v}}_L \in D(1 - \mb{\eta}; \ubar{r}\mb{\xi})$ for any starting point $\mb{v}_L^0 \in \mathcal{U}$ in such a manner that
			\begin{equation}
			|\mb{v}_L^n - \hat{\mb{v}}_L| < \bar{r} \xi(\mb{S}_L) ( 1 + \mu ) \left( \frac{2\mu}{ 1 + \mu^2} \right)^{n/2}, \quad n = 0, 1, 2, \ldots,
			\end{equation}
			for some number $0 \le \mu < 1$.
		\end{enumerate}
	\end{thm}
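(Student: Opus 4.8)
The plan is to pass to the M\"obius-transformed coordinates $t_i := (v_i-1)/v_i = 1 - 1/v_i$, in which \eqref{eq:fppf} becomes tractable by complex-analytic tools. Setting $d_i := v_i-(1-\eta_i)$ and using \eqref{eq:eta}, the fixed-point equation \eqref{eq:fppf} is equivalent to the identity
\begin{equation}
d_i \;=\; \sum_{j=1}^{n}\hat{z}_{ij}\mb{S}^*_{Lj}\Big(1-\tfrac{1}{v_j^*}\Big) \;=\; \sum_{j=1}^{n}\hat{z}_{ij}\mb{S}^*_{Lj}\,t_j^* ,\qquad i\in\mathcal{N}_L ,
\end{equation}
so that $|d_i|\le\sum_j|\hat{z}_{ij}\mb{S}^*_{Lj}|\,|t_j|\le\xi_i\max_j|t_j|$, while also $d_i=\eta_i+t_i/(1-t_i)$. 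Thus if $\mb{v}$ lies in the polydisc $\{\mb{v}\in\mathbb{C}^{n}: |t_j|\le r,\ \forall j\}$ --- a product of Euclidean discs $\{|v_i-(1-r^2)^{-1}|\le r/(1-r^2)\}$ that is compact, convex, and bounded away from the origin when $r<1$ --- its image under \eqref{eq:fppf} lands in the smaller $d$-polydisc $\{|d_i|\le r\xi_i\}$.

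First I would establish the self-mapping (invariance) property. The $d$-disc $\{|w-(1-\eta_i)|\le r\xi_i\}$ is contained in $\{|(w-1)/w|\le r\}$ precisely when the internal-tangency condition $|(1-\eta_i)-(1-r^2)^{-1}|+r\xi_i\le r/(1-r^2)$ holds; squaring and clearing denominators, the resulting cubic inequality in $r^2$ factors as
\begin{equation}
(1-r^2)\big(\xi_i^2 r^4-(1-\gamma_i)r^2+|\eta_i|^2\big)\le 0 ,
\end{equation}
with $\gamma_i$ exactly the fused stress measure \eqref{eq:gamma}. For $r<1$ this reduces to $\xi_i^2 r^4-(1-\gamma_i)r^2+|\eta_i|^2\le 0$. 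Since replacing $(\xi_i,\gamma_i,|\eta_i|)$ by the uniform maxima $(\xi,\gamma,\eta)$ only enlarges the left-hand side, the uniform quadratic $\xi^2\rho^2-(1-\gamma)\rho+\eta^2$ (in $\rho=r^2$) dominates every per-bus one; condition \eqref{eq:main1}, $\gamma+2\xi\eta<1$, is exactly positivity of its discriminant $(1-\gamma)^2-4\xi^2\eta^2$, yielding the two real roots $\ubar{r}^2<\bar{r}^2$ of \eqref{eq:r}, and \eqref{eq:main2} guarantees $\bar{r}<1$ so that the configuration is geometrically valid. Hence for every $r\in[\ubar{r},\bar{r}]$ the closed polydisc $\{|t_i|\le r\}$ is invariant, and Brouwer's fixed-point theorem yields a solution of \eqref{eq:fppf}. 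Taking $r=\ubar{r}$ and using that any fixed point obeys $|d_i|=|\sum_j\hat{z}_{ij}\mb{S}^*_{Lj}t_j^*|\le\xi_i\max_j|t_j|\le\ubar{r}\,\xi_i$ places that solution in $\mathbb{D}(1-\mb{\eta};\ubar{r}\mb{\xi})$ of \eqref{eq:v_range}; together with $\mathbb{D}(1-\mb{\eta};\ubar{r}\mb{\xi})\subseteq\{|t_i|\le\ubar{r}\}\subseteq\mathcal{U}$, statements (i) and (ii) reduce to the single claim that \eqref{eq:fppf} has at most one fixed point in $\mathcal{U}$.

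The crux --- and the hardest part --- is this uniqueness, where Banach's theorem is useless (the certified region is far too large to support a norm-contraction) and the genuine non-holomorphy of \eqref{eq:fppf} must be confronted. My plan is to factor the map as $T(\mb{v})=G(\mb{v}^*)=(G\circ\sigma)(\mb{v})$, where $G(\mb{w}):=\mb{1}-\hat{\mb{Z}}\diag^{-1}(\mb{w})\mb{S}^*_L$ is \emph{holomorphic} and $\sigma(\mb{v}):=\mb{v}^*$ is the anti-holomorphic conjugation. Because $\mathcal{U}=\{|(v_i-1)/v_i|<\bar{r}\}$ is cut out by moduli, with constituent discs centered on the real axis, $\sigma$ maps $\mathcal{U}$ onto itself; together with the invariance $T(\mathcal{U})\subseteq\mathcal{U}$ this exhibits $G=T\circ\sigma$ as a holomorphic, non-surjective self-map of the polydisc $\mathcal{U}$. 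I would then invoke the $\mathbb{C}^{n}$ extension of \cite[Thm.~6.12a]{Henrici74}: in the Kobayashi (coordinatewise Poincar\'e) metric $\kappa$ of $\mathcal{U}$, such a non-surjective holomorphic self-map is nonexpansive and strictly decreases $\kappa$-distance between distinct points. As $\sigma$ is an anti-holomorphic automorphism of $\mathcal{U}$, hence a $\kappa$-isometry, the composite $T=G\circ\sigma$ inherits the same strict contractivity. Consequently $T$ has at most one fixed point in $\mathcal{U}$, so the solution produced above in $\mathbb{D}(1-\mb{\eta};\ubar{r}\mb{\xi})$ is the unique one in $\mathcal{U}$, settling (i) and (ii).

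For (iii) I would read off convergence and its rate from the same picture. The interior fixed point $\hat{\mb{v}}_L$ lies in the $\ubar{r}$-level polydisc while the invariant domain is the $\bar{r}$-level polydisc, so a coordinatewise Schwarz--Pick estimate at $\hat{\mb{v}}_L$ bounds the derivative of $G$ strictly below one through a single $\mu\in[0,1)$ assembled from $\ubar{r}$ and $\bar{r}$. Since $T$ strictly contracts $\kappa$, every orbit started in $\mathcal{U}$ converges to $\hat{\mb{v}}_L$, and translating the per-step $\kappa$-contraction (factor $2\mu/(1+\mu^2)$) back to Euclidean distance --- the Poincar\'e and Euclidean metrics being comparable up to the scale $\bar{r}\,\xi(\mb{S}_L)(1+\mu)$ on the relevant discs --- yields the stated bound with exponent $n/2$. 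I expect the principal obstacle to be exactly this complex-analytic step: formulating and proving the polydisc analogue of Henrici's one-variable contraction, checking that $\sigma$ is an \emph{exact} $\kappa$-isometry, and verifying that the non-surjectivity of $G$ upgrades non-expansion to strict contraction, since any slack there would forfeit both uniqueness and the geometric rate.
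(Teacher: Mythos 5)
Your existence/invariance half is essentially the paper's own argument (Lemma \ref{thm:invariant} and Theorem \ref{thm:exist}): the same M\"obius coordinates $t_i=(v_i-1)/v_i$, the same internal-tangency condition collapsing to $\xi_i^2r^4-(1-\gamma_i)r^2+|\eta_i|^2\le 0$, and Brouwer on the resulting invariant polydisc. One step there is wrong, however: condition \eqref{eq:main2} does \emph{not} guarantee $\bar r<1$. Evaluating the uniform quadratic $\xi^2\rho^2-(1-\gamma)\rho+\eta^2$ at $\rho=1$ gives $\xi^2+\eta^2+\gamma-1$, which is negative under light loading, so $1$ lies strictly between the roots and $\bar r>1$. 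Your tangency analysis therefore covers only part of $[\ubar r,\bar r]$; the paper separately treats $r=1$ (disc inside a half-plane) and $r>1$ (disc outside a disc), and those cases are needed for the full range of $r$ on which claims (i)--(ii) rely. The same issue infects the analytic half, because $\mathcal{U}(\bar r)$ is an \emph{unbounded} domain precisely when $\bar r\ge 1$.

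For uniqueness you take a genuinely different route---a Kobayashi/Schwarz--Pick contraction argument for $G\circ\sigma$ on the conjugation-invariant set $\mathcal{U}$, instead of the paper's device of passing to the holomorphic iterate $F^2=F\circ F$ and applying a generalized Rouch\'e theorem on the bounded polydisc $\mathcal{D}(r)$---but the principle you invoke is false in $\mathbb{C}^n$: a non-surjective holomorphic self-map of a polydisc need not strictly decrease Kobayashi distance between distinct points. For example, $f(z_1,z_2)=(z_1,0)$ on $\mathbb{D}^2$ is non-surjective, preserves the Kobayashi distance between any two points whose first coordinates realize the coordinatewise maximum, and has a continuum of fixed points. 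The correct substitute is the Earle--Hamilton hypothesis that the image lie strictly inside a \emph{bounded} domain; your setup supplies ``strictly inside'' (via $\bar{\mathcal{D}}(r)\subseteq\mathcal{U}(r)$) but not boundedness when $\bar r\ge 1$, and this is structural rather than cosmetic: you are forced onto $\mathcal{U}(r)$ because it is the only conjugation-invariant candidate (so that $\sigma$ is a self-map), whereas the paper's $F^2$ trick removes the conjugation entirely and lets all the complex analysis run on the bounded polydisc $\mathcal{D}(r)$, with claim (ii) recovered afterwards from the purely set-theoretic inclusion $F(\mathcal{U}(\bar r))\subseteq\mathcal{D}(\bar r)$. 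Finally, part (iii) is asserted rather than proved: the prefactor $\bar r\,\xi(\mb{S}_L)(1+\mu)$, the factor $2\mu/(1+\mu^2)$, and in particular the exponent $n/2$ (which in the paper arises from iterating the holomorphic square $F^2$ and applying a several-variables Schwarz lemma after a M\"obius recentering at the fixed point, as in Theorem \ref{thm:ndimHenrici}) all come out of a computation that your sketch replaces with the phrase that the two metrics are ``comparable up to the scale $\bar r\,\xi(\mb{S}_L)(1+\mu)$.'' As written, the proposal does not close any of claims (i)--(iii).
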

	
	Conditions \eqref{eq:main} may seem complicated at first, however, they have some physical interpretations. For example, a higher value of $\gamma + 2\xi\eta$ indicates a more pronounced stress level with potentially less margin to the solvability boundary. The condition suggests how load power factors and system parameters interact and collectively impact system solvability. Specifically, the condition implies that system stress level is low when load power factors are out of phase with entries of the normalized impedance matrix $\hat{\mb{Z}}$. This is consistent with the general perception that high power factor and low reactive power consumption is beneficial from a stability perspective. As the transmission lines are dominantly inductive and generator voltage angles are small, $\hat{\mb{Z}}$ is dominantly imaginary, so high power factor implies $\re( \hat{\mb{Z}}\mb{S}_L^* )$, and consequently $\gamma$, is small. The quantity $\re( \hat{\mb{Z}}\mb{S}_L^* )$ is minimized when load injections are $180$ degrees out of phase with entries of the normalized impedance matrix $\hat{\mb{Z}}$. This is to be expected, as complete out-of-phase load direction for purely imaginary $\hat{\mb{Z}}$ corresponds to pure load side reactive power support. As far as we know, this is the first condition that reflects the impact of load power factors on system solvability.
	
	Proposition \ref{thm:increase} reveals an interesting fact about condition \eqref{eq:main}: if we fix the loading direction and scale the loads up to a point where $\xi - \eta = 1$, then the value $\gamma + 2\xi\eta$ is increasing along the way, and $\gamma + 2\xi\eta \ge 1$ if $\xi - \eta = 1$. In other words, when the loads are scaled along some direction, \eqref{eq:main1} is always violated ahead of \eqref{eq:main2}. Therefore, we can focus on $\gamma + 2\xi\eta$ as a system stress level indicator in on-line monitoring without worrying about \eqref{eq:main2}. Another implication of the fact is that the solvability set is connected: we can only scale the load to the point where $\gamma + 2\xi\eta = 1$, beyond which point at least one of the two constraints in \eqref{eq:main} is violated no matter how far we go. The complete proof of Theorem \ref{thm:main} can be found in Supplementary Information.
	
	\subsection{Approximation Quality}
	
	The proposed solvability condition \eqref{eq:main} characterizes an inner approximation of the true solvability set. At the same time, there are several existing solvability conditions for AC power flow equations. Among them, \cite{WangC16} and \cite{Dvijotham18} provide best certified solvability sets. It has been demonstrated numerically in \cite{Nguyen17} that the two conditions are incomparable to each other in terms of the certified solvability sets. We present Theorem \ref{thm:compare:main} in this section, which claims \emph{dominance} of the proposed condition over those in \cite{WangC16} and \cite{Dvijotham18}, and defer the proof to the Supplementary Information. A more general result is proved therein, where we show dominance even if we allow the three solvability conditions to be built in the neighborhood of any known power flow solution.
	
	We denote the certified solvability set of the proposed condition by $\mathcal{S}_p$, that is, $\mathcal{S}_p := \{\mb{S}_L \in \mathbb{C}^n :\; \mb{S}_L \text{ satisfies } \eqref{eq:main}\}$.
		On the other hand, the certified solvability set in \cite{WangC16} is
		\begin{equation} \label{eq:wang_noload}
		\mathcal{S}_w := \{\mb{S}_L \in \mathbb{C}^n :\; 4\xi < 1 \},
		\end{equation}
		whereas the certified solvability set in \cite{Dvijotham18} is
		\begin{equation} \label{eq:dj_noload}
		\mathcal{S}_d := \{\mb{S}_L \in \mathbb{C}^n :\; \sqrt{\xi} + \sqrt{\eta} \le 1 \}.
		\end{equation}
		The theorem on the quality of the three conditions can be stated as follows:
	
	\begin{thm} \label{thm:compare:main} 
			Denote the solvability set certified by the proposed condition \eqref{eq:main} by $\mathcal{S}_p$ and let the solvability sets $\mathcal{S}_w$ and $\mathcal{S}_d$ be defined as in \eqref{eq:wang_noload} and \eqref{eq:dj_noload}, respectively. We have
			\begin{enumerate}
				\item the proposed condition dominates \eqref{eq:wang_noload} and \eqref{eq:dj_noload}, or $\mathcal{S}_w \subseteq \mathcal{S}_p$ and $\mathcal{S}_d \subseteq \bar{\mathcal{S}}_p$ hold;
				\item the proposed condition strictly dominates conditions \eqref{eq:wang_noload} and \eqref{eq:dj_noload}, or $\mathcal{S}_w \subsetneq \mathcal{S}_p$ and $\mathcal{S}_d \subsetneq \bar{\mathcal{S}}_p$, when $\{ \mb{0} \} \subsetneq \mathcal{S}_p$.
		\end{enumerate}
	\end{thm}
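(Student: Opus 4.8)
The plan is to reduce both inclusions to a single \emph{master inequality} that bounds the left-hand side of \eqref{eq:main1} by an expression in $\xi$ and $\eta$ alone, and then to test that expression against the defining inequalities of $\mathcal{S}_w$ and $\mathcal{S}_d$. First I would establish
\[
\gamma + 2\xi\eta \;\le\; 2(\xi+\eta) - (\xi-\eta)^2 .
\]
To obtain this, fix the bus $i$ attaining $\gamma = \gamma_i$ and bound $\gamma_i$ using $\re(\eta_i) \le |\eta_i|$ together with the structural facts $|\eta_i| \le \xi_i$ (triangle inequality in \eqref{eq:xi}) and $|\eta_i| \le \eta$, $\xi_i \le \xi$ (definitions in \eqref{eq:xietagamma}). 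This replaces $\gamma_i$ by the concave expression $2\xi_i + 2|\eta_i| - \xi_i^2 - |\eta_i|^2$, whose maximum over the region $\{\,0 \le |\eta_i| \le \min(\xi_i,\eta),\ 0 \le \xi_i \le \xi\,\}$ is attained at the corner $(\xi_i,|\eta_i|)=(\xi,\eta)$, since the gradient is positive whenever $\eta \le \xi \le 1$ (the regime of both competing sets). This gives $\gamma \le 2(\xi+\eta) - \xi^2 - \eta^2$, and adding $2\xi\eta$ completes the square.

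For the first inclusion I would then discard $-(\xi-\eta)^2 \le 0$ and use $\eta \le \xi$ to get $\gamma + 2\xi\eta \le 2(\xi+\eta) \le 4\xi$; hence $4\xi < 1$ forces \eqref{eq:main1}, while \eqref{eq:main2} is immediate from $\xi - \eta \le \xi < 1$, proving $\mathcal{S}_w \subseteq \mathcal{S}_p$. For the second inclusion the natural device is the substitution $u := \sqrt{\xi}+\sqrt{\eta}$ and $v := \sqrt{\xi}-\sqrt{\eta}$, under which $2(\xi+\eta)-(\xi-\eta)^2 = u^2+v^2-u^2v^2 = 1-(1-u^2)(1-v^2)$. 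On $\mathcal{S}_d$ one has $0 \le v \le u \le 1$, so $(1-u^2)(1-v^2) \ge 0$ and thus $\gamma + 2\xi\eta \le 1$, with $\xi-\eta \le \xi \le 1$. Because \eqref{eq:main1} is met only \emph{non-strictly} (equality when $u=1$), this places $\mathcal{S}_d$ in the closure $\bar{\mathcal{S}}_p$ rather than $\mathcal{S}_p$ itself, matching the statement.

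For strictness I would fix a nonzero $\hat{\mb{S}}_L \in \mathcal{S}_p$ (available since $\{\mb{0}\}\subsetneq\mathcal{S}_p$) and scale it as $\mb{S}_L = t\hat{\mb{S}}_L$, where $\xi(t)$ and $\eta(t)$ are linear in $t$. By Proposition \ref{thm:increase} the map $t \mapsto \gamma(t)+2\xi(t)\eta(t)$ increases to a value $\ge 1$, so there is a critical scale $t^*$ at which $\gamma(t^*)+2\xi(t^*)\eta(t^*)=1$ while \eqref{eq:main2} still holds, whence $t^*\hat{\mb{S}}_L \in \bar{\mathcal{S}}_p$. Feeding this equality back into the master inequality forces $4\xi(t^*) \ge 1$ and $\sqrt{\xi(t^*)}+\sqrt{\eta(t^*)} \ge 1$ automatically, so the $\mathcal{S}_p$-boundary point already lies outside $\mathcal{S}_w$ and outside $\mathcal{S}_d$; when these hold strictly, points slightly inside the ray furnish elements of $\mathcal{S}_p \setminus \mathcal{S}_w$ and of $\bar{\mathcal{S}}_p \setminus \mathcal{S}_d$.

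The main obstacle I anticipate is the degenerate case in which the master inequality is tight, which forces $\eta = \xi$ (equivalently $u=1$); this occurs exactly when the summands $\hat{z}_{ij}\mb{S}^*_{Lj}$ are phase-aligned and $\eta_i$ is real and positive at the critical bus. Along such perfectly adversarial directions the two boundaries merely touch rather than separate, so to certify \emph{strict} dominance I must show that a nontrivial $\mathcal{S}_p$ always admits a non-degenerate direction (with $\eta < \xi$ strictly, or $\re(\eta_i) < |\eta_i|$ in the single-load case). This is where the structure of $\hat{\mb{Z}}$ enters: for a connected network it is dense with entries of differing phases, which guarantees the existence of such directions and thereby closes the strict inclusions.
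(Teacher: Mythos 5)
Your proof of part (i) is essentially the paper's own argument. The ``master inequality'' $\gamma + 2\xi\eta \le 2(\xi+\eta) - (\xi-\eta)^2$, obtained from $\re(\eta_i)\le|\eta_i|$, $|\eta_i|\le\xi_i$, and the monotonicity of $2x-x^2$ on $[0,1]$, is exactly the paper's inequality \eqref{eq:d6}; your substitution $u=\sqrt{\xi}+\sqrt{\eta}$, $v=\sqrt{\xi}-\sqrt{\eta}$ with $2(\xi+\eta)-(\xi-\eta)^2 = 1-(1-u^2)(1-v^2)$ is a cleaner algebraic packaging of the paper's parametrization $\xi+\eta = 1-p-2\sqrt{\xi\eta}$, and the $\mathcal{S}_w$ chain $\gamma+2\xi\eta\le 2(\xi+\eta)\le 4\xi<1$ together with $\xi-\eta\le\xi<1$ matches the paper's (the paper carries extra terms because it proves the version built around a nonzero base solution $\mb{S}_L^0$). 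So part (i) stands.

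Part (ii) has a genuine gap. You correctly identify the degenerate case --- along a ray where the summands $\hat z_{ij}S^*_{Lj}$ are phase-aligned and $\eta_i$ is real positive at the critical bus, one gets $\gamma+2\xi\eta = 4\xi$ identically, the boundaries of $\mathcal{S}_p$ and $\mathcal{S}_w$ coincide on that ray, and your scaling-to-$t^*$ construction produces nothing in $\mathcal{S}_p\setminus\mathcal{S}_w$. But your proposed escape --- that for a connected network $\hat{\mb{Z}}$ ``is dense with entries of differing phases'' --- is both unproven and false in general (for a lossless network with aligned generator angles, $\hat{\mb{Z}}$ is essentially purely imaginary, so all entries share a phase), and in any case the relevant degree of freedom is not in $\hat{\mb{Z}}$. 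The paper's resolution is a one-line rotation trick: take any $\mb{S}_L$ on the boundary $4\xi=1$ of $\mathcal{S}_w$ and replace it by $e^{\mathrm{i}\theta}\mb{S}_L$. This leaves every $|\hat z_{ij}S^*_{Lj}|$, hence $\xi$ and each $|\eta_i|$, unchanged (so the point stays on $\partial\mathcal{S}_w$), but rotates every nonzero $\eta_i$ off the positive real axis, so $\re(\eta_i)<|\eta_i|$ and the first step of your master-inequality chain becomes strict; the resulting point satisfies $\gamma+2\xi\eta<1$ and lies in $\mathcal{S}_p\setminus\mathcal{S}_w$. The same rotation applied to a boundary point of $\mathcal{S}_d$, followed by the continuity/scaling step you already describe, yields a point of $\bar{\mathcal{S}}_p\setminus\mathcal{S}_d$. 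You gesture at the right mechanism (``$\re(\eta_i)<|\eta_i|$ in the single-load case'') but attribute the needed phase diversity to the network rather than to the free global phase of the load vector, which is where it actually comes from.
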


	\section{Computational Experiments}
	
	We present three computational experiments on our main result (Theorem \ref{thm:main}) in this section. The numerical results show that the proposed solvability condition \eqref{eq:main} significantly improves the start-of-the-art in solvability literature \cite{WangC16,Dvijotham18} --- it halves the relative errors of the estimated solvability limits and provides much tighter bounds on solution locations. Standard IEEE test systems will be used for the experiments, the data of which are available in \textsc{Matpower} package \cite{Zimmerman11}, a \textsc{Matlab}-based power system steady-state analysis tool.
	
	\subsection{Solvability limit estimation}
	
	In the first computational experiment, we test the conservativeness of the proposed condition by comparing the maximum load power certified by \eqref{eq:main} against the true solvability limit. We also compare the predictive power of our condition with two sharpest conditions known so far.
	
	When talking about certifying maximum loading level, the loading direction needs to be specified. In this experiment, we assume the loading directions are consistent with the base loadings provided in the data files. Actual maximum system loading levels (or the solvability limits) along the loading directions are obtained by Continuation Power Flow (CPF) algorithm \cite{Ajjarapu92} available in \textsc{Matpower}.
	
	Let the base loading be $\mb{S}_L$, then the respective solvability limits $\lambda_w$ and $\lambda_d$ given by \eqref{eq:wang_noload} and \eqref{eq:dj_noload} are simply the minimum scaling factors such that $\lambda_w\mb{S}_L$ violates $4\xi(\lambda_w\mb{S}_L) < 1$ (or $\lambda_d\mb{S}_L$ violates $\sqrt{\xi(\lambda_d\mb{S}_L)} + \sqrt{\eta(\lambda_d\mb{S}_L)} < 1$), and are given by
		\begin{equation}
		\lambda_w = \frac{1}{4\xi(\mb{S}_L)}, \qquad \lambda_d = \frac{1}{\sqrt{\xi(\mb{S}_L)} + \sqrt{\eta(\mb{S}_L)}}.
		\end{equation} 
		However, computing the solvability limit $\lambda_p$ for the proposed condition \eqref{eq:main} is a little trickier: $\gamma_i(\lambda\mb{S})$ is quadratic in $\lambda$, so the critical load bus index $i := \argmax_{i \in \mathcal{N}_L} \gamma_i(\lambda_p\mb{S}_L)$ may vary depending on $\lambda_p$, and can not be determined by simply examining the coefficients of the quadratic equation at base loading condition. However, as discussed at the end of Section \ref{sect:main:condition}, we know from Proposition \ref{thm:increase} that when $\xi - \eta = 1$, $\gamma_i + 2\xi\eta \ge 1$ holds for at least one $i \in \mathcal{N}_L$. This suggests the general procedure to determine the solvability limit $\lambda_p$ given base loading $\mb{S}_L$ can be divided into the following four steps: 1) determine $\xi(\mb{S}_L)$ and  $\eta(\mb{S}_L)$; 2) find the scaling factor $\kappa := 1 / \left( \xi(\mb{S}_L) - \eta(\mb{S}_L) \right)$ if $\xi(\mb{S}_L) > \eta(\mb{S}_L)$, set $\kappa = 1$ if $\xi(\mb{S}_L) = \eta(\mb{S}_L)$; 3) when $\xi(\mb{S}_L) > \eta(\mb{S}_L)$, find the index set of load buses such that $\gamma_i(\kappa\mb{S}_L) + 2\xi(\kappa\mb{S}_L)\eta(\kappa\mb{S}_L) \ge 1$, and denote the index set by $\mathcal{L}$, if $\xi(\mb{S}_L) = \eta(\mb{S}_L)$, let $\mathcal{L} = \mathcal{N}_L$; 4) For each $i \in \mathcal{L}$, solve the quadratic equation $\gamma_i(\lambda_i\kappa \mb{S}_L) + 2\xi(\lambda_i\kappa \mb{S}_L)\eta(\lambda_i\kappa \mb{S}_L) = 1$, or
		\begin{equation} \label{eq:quad_lambdai}
		\left( 2\xi(\kappa\mb{S}_L)\eta(\kappa\mb{S}_L) - \xi_i(\kappa\mb{S}_L)^2 - |\eta_i(\kappa\mb{S}_L)|^2 \right) \lambda_i^2 + 2\left( \xi_i(\kappa\mb{S}_L) + \re(\eta_i(\kappa\mb{S}_L)) \right) \lambda_i = 1
		\end{equation}
		for $\lambda_i \in (0, 1]$. Then $\lambda_p = \kappa \min_{i \in \mathcal{L}} \lambda_i$.

	Estimated solvability limits $\lambda_p$ obtained with the proposed condition are compared against 1) the estimated limits $\lambda_w$ and $\lambda_d$ by the two existing conditions \eqref{eq:wang_noload} and \eqref{eq:dj_noload}, and 2) their actual counterparts. The computation results are shown in Table \ref{tb:exp1a}. The relative errors of the three conditions calculated as $(\text{actual} - \text{bound})/\text{actual}$ are tabulated in Table \ref{tb:exp1b}. Computation results from extensive test systems show the proposed condition consistently outperforms existing ones, which numerically justify Theorem \ref{thm:compare:main}.
	
	\begin{table}[!ht]
		\centering
		\caption{Lower bounds of solvability limits obtained using the proposed condition and two existing conditions versus the true solvability limits.}
		\begin{tabular}{lccccc}
			Test case       & $\lambda_p$ (proposed) & $\lambda_d$ (\hspace{1sp}\cite{Dvijotham18}) & $\lambda_w$ (\hspace{1sp}\cite{WangC16}) & Actual value \\ \midrule
			9-bus system & $2.4425$ & $1.7534$ & $1.7512$ & $2.6577$ \\
			14-bus system & $4.3246$ & $3.5384$ & $3.5229$ & $5.3320$ \\
			24-bus system & $2.3608$ & $1.6339$ & $1.6334$ & $2.7928$ \\
			30-bus system & $5.4223$ & $4.8230$ & $4.7919$ & $6.0160$ \\
			39-bus system & $2.1174$ & $1.3869$ & $1.3600$ & $2.4730$ \\
			57-bus system & $1.3456$ & $1.0998$ & $1.0935$ & $1.9074$ \\
			118-bus system & $4.7597$ & $3.9192$ & $3.9186$ & $5.4479$ \\
			300-bus system & $0.7712$ & $0.5251$ & $0.3641$ & $1.6585$ \\ 
			1354-bus system & $1.2751$ & $0.7376$ & $0.7273$ & $1.5332$ \\ 
			2383-bus system & $1.4594$ & $1.0489$ & $1.0474$ & $1.9739$ \\   \bottomrule
		\end{tabular}
		\label{tb:exp1a}
	\end{table}
	
	\begin{table}[!ht]
		\centering
		\caption{Relative errors of solvability limit approximations obtained using the proposed condition and two existing conditions}
		\begin{tabular}{lccc}
			Test case       & $\lambda_p$ (proposed) & $\lambda_d$ (\hspace{1sp}\cite{Dvijotham18}) & $\lambda_w$ (\hspace{1sp}\cite{WangC16}) \\ \midrule
			9-bus system & $8.10\%$ & $34.02\%$ & $34.11\%$ \\ 
			14-bus system & $18.89\%$ & $33.64\%$ & $33.93\%$ \\ 
			24-bus system & $15.47\%$ & $41.50\%$ & $41.51\%$ \\ 
			30-bus system & $9.87\%$ & $19.83\%$ & $20.35\%$ \\ 
			39-bus system & $14.38\%$ & $43.92\%$ & $45.01\%$ \\ 
			57-bus system & $29.45\%$ & $42.34\%$ & $42.67\%$ \\ 
			118-bus system & $12.63\%$ & $28.06\%$ & $28.07\%$ \\ 
			300-bus system & $53.50\%$ & $68.34\%$ & $78.05\%$ \\ 
			1354-bus system & $16.83\%$ & $51.89\%$ & $52.56\%$ \\ 
			2383-bus system & $26.06\%$ & $46.86\%$ & $46.94\%$ \\ 
			\textbf{Average} & $\mb{20.52\%}$ & $\mb{41.04\%}$ & $\mb{42.32\%}$ \\   \bottomrule
		\end{tabular}
		\label{tb:exp1b}
	\end{table}
	
	The computation results show that the improvement of solvability limit estimation is significant. As seen from the last row of Table \ref{tb:exp1b}, the average relative error by the proposed condition is less than half of that given by both existing methods. For most test systems, the proposed condition more than halves the relative errors. Except for 300-bus system, the relative errors for all other test systems fluctuate between $7\%$ and $30\%$. The results obtained for the 10 test systems also suggest that the relative errors are insensitive to system size. The proposed condition certifies power flow solvability under base loading condition for all systems except for 300-bus system (since the scaling factors are all greater than 1 except for 300-bus system in Table \ref{tb:exp1a}). We discuss in the Supplementary Information how to improve the solvability limit estimation using some known power flow solutions. This is particularly relevant when we are interested in certifying power flow solvability for power injections that vary around some known nominal point.
	
	\begin{figure*}[!ht]
		\centering
		\begin{subfigure}[b]{.32\linewidth}
			\centering
			\includegraphics[width=.99\textwidth]{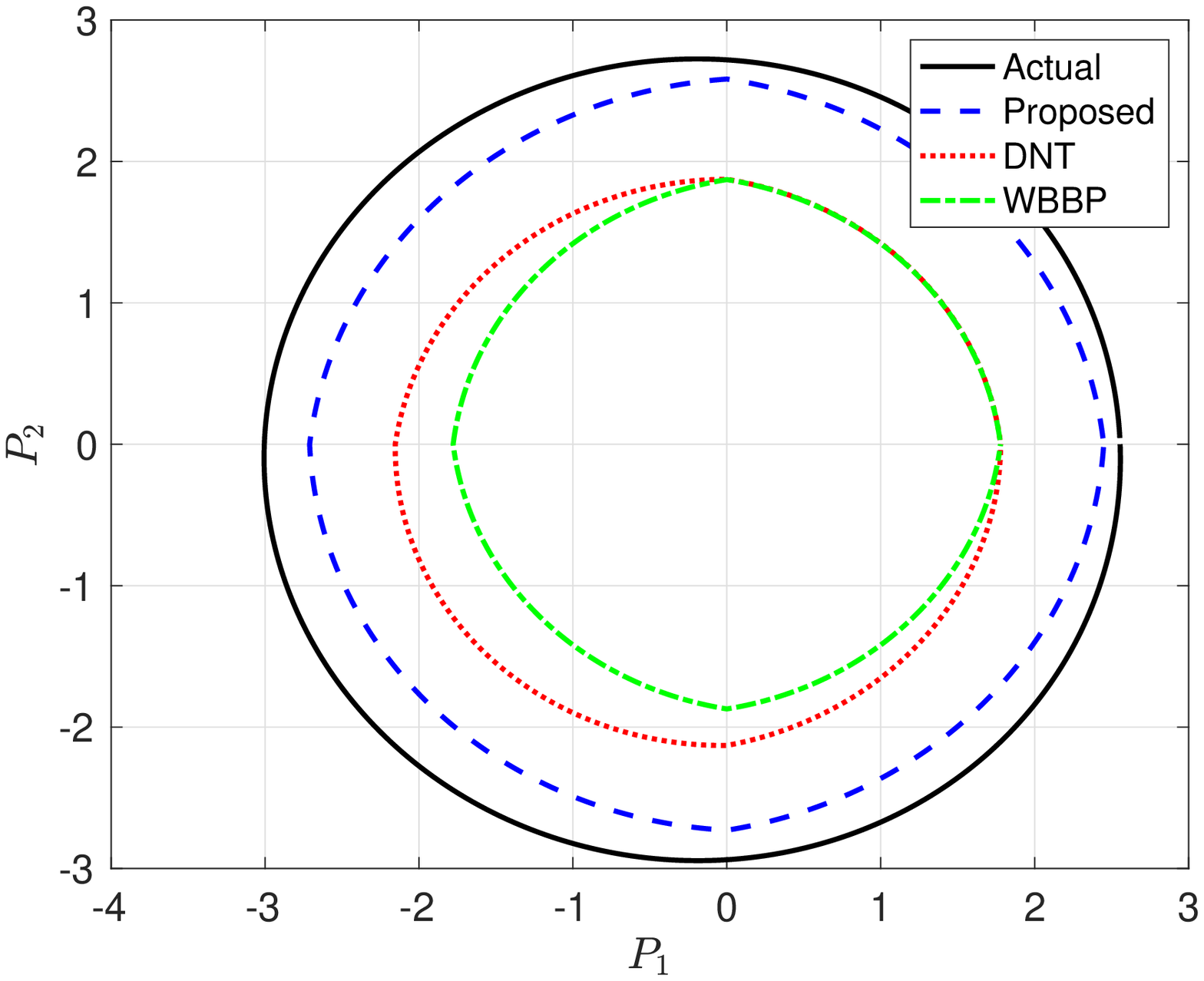}
			\caption{9-bus system}\label{fig:1a}
		\end{subfigure}%
		\begin{subfigure}[b]{.32\linewidth}
			\centering
			\includegraphics[width=.99\textwidth]{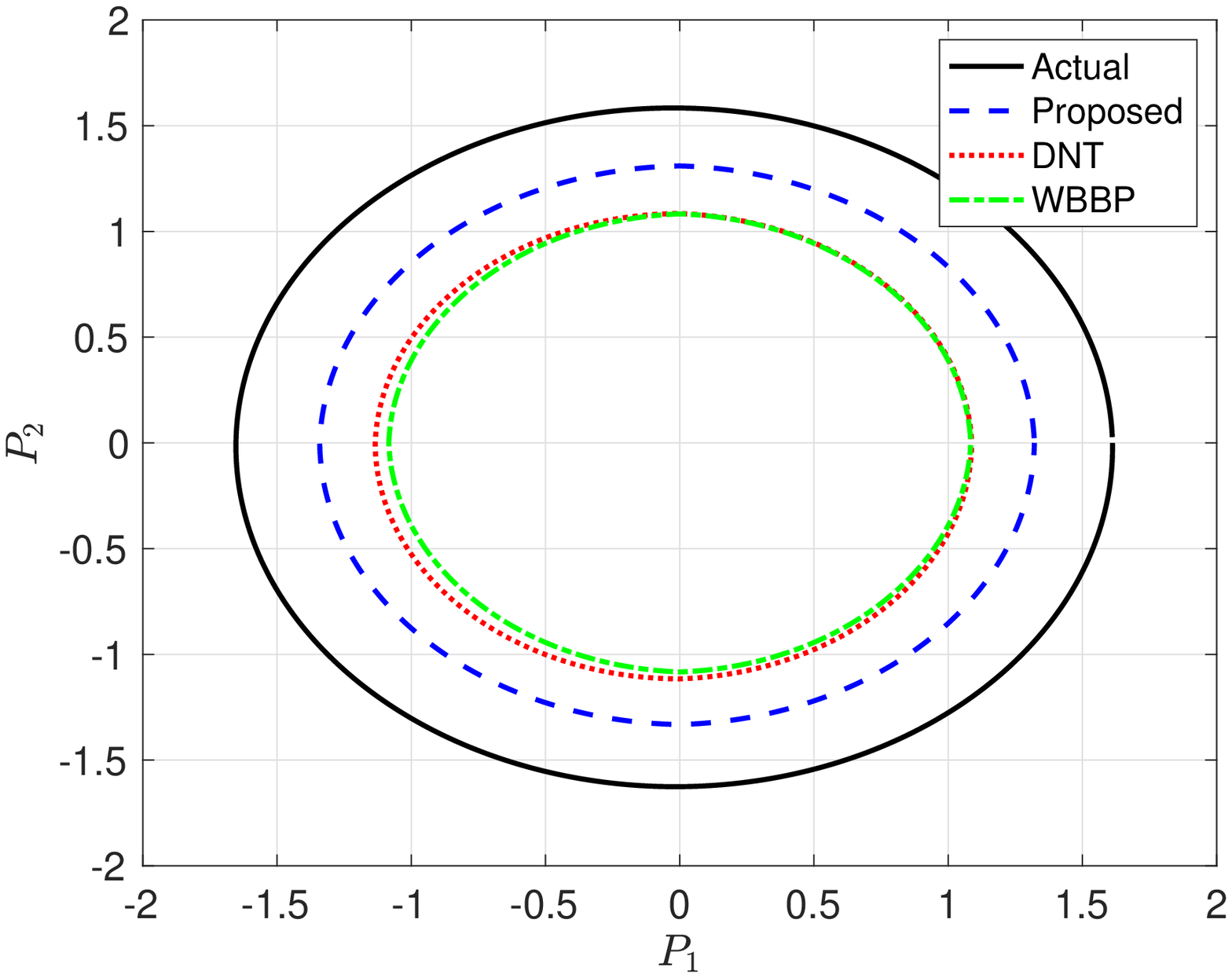}
			\caption{14-bus system}\label{fig:1b}
		\end{subfigure}%
		\begin{subfigure}[b]{.32\linewidth}
			\centering
			\includegraphics[width=.99\textwidth]{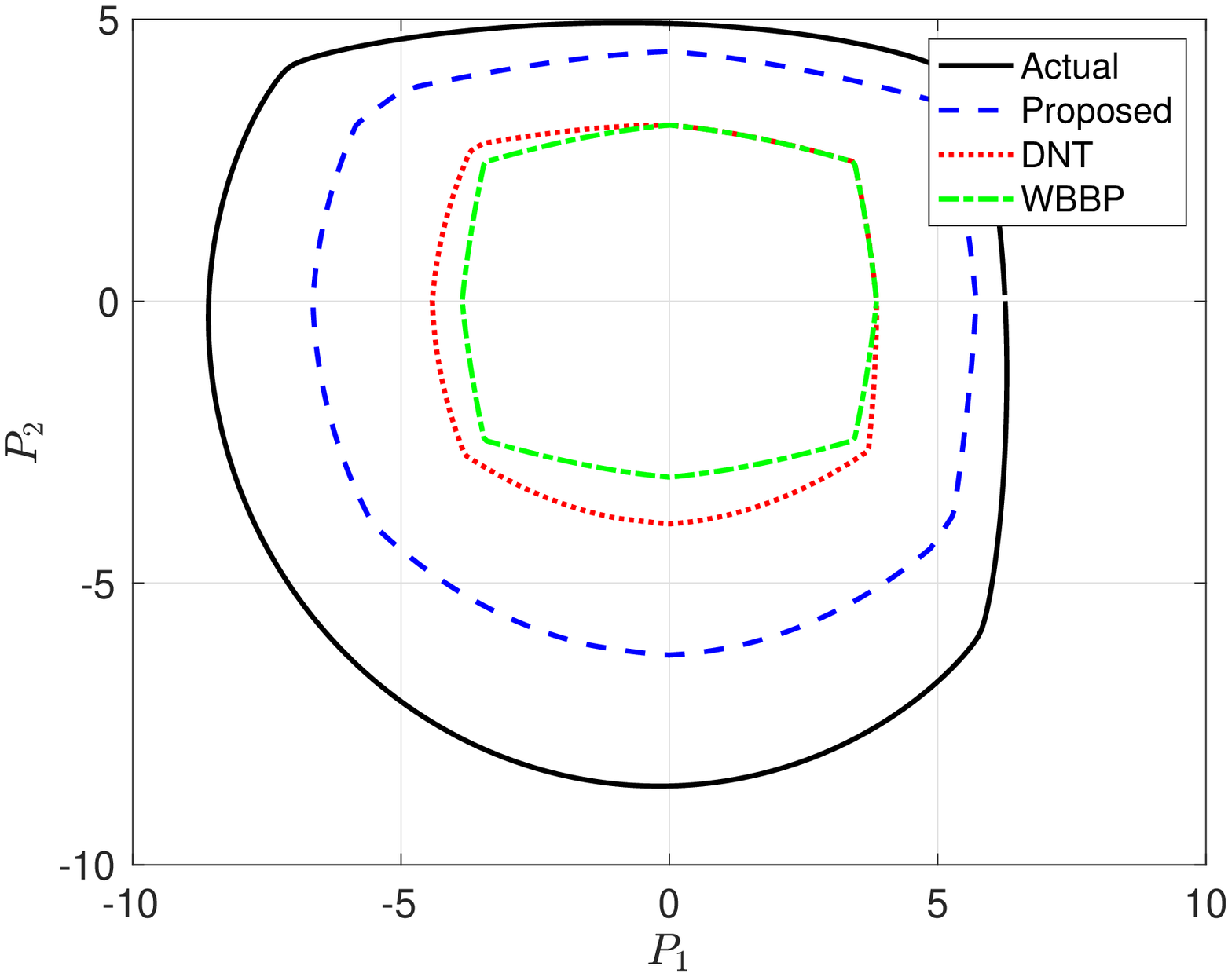}
			\caption{24-bus system}\label{fig:1c}
		\end{subfigure}\\%
		\begin{subfigure}[b]{.32\linewidth}
			\centering
			\includegraphics[width=.99\textwidth]{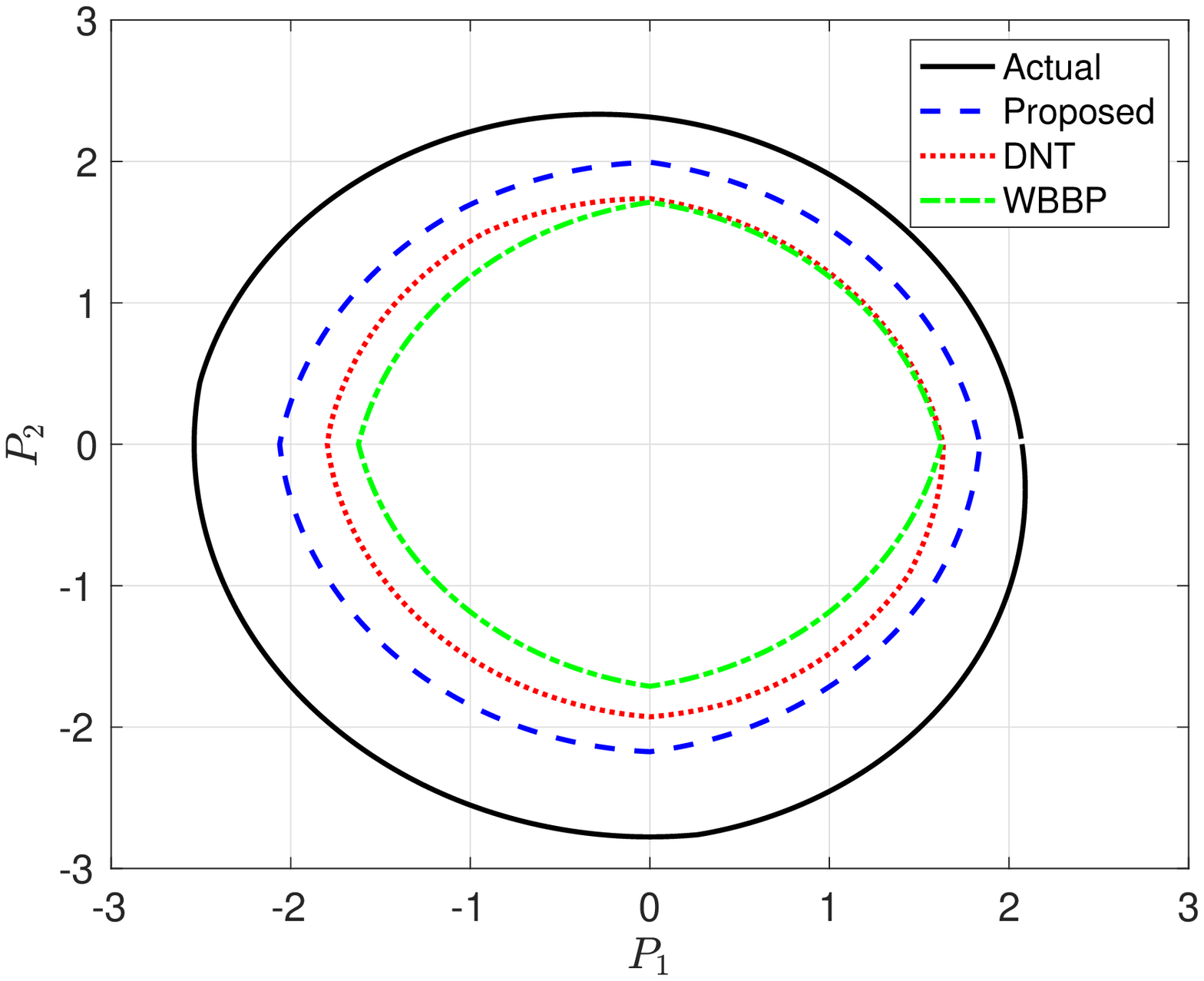}
			\caption{30-bus system}\label{fig:1d}
		\end{subfigure}%
		\begin{subfigure}[b]{.32\linewidth}
			\centering
			\includegraphics[width=.99\textwidth]{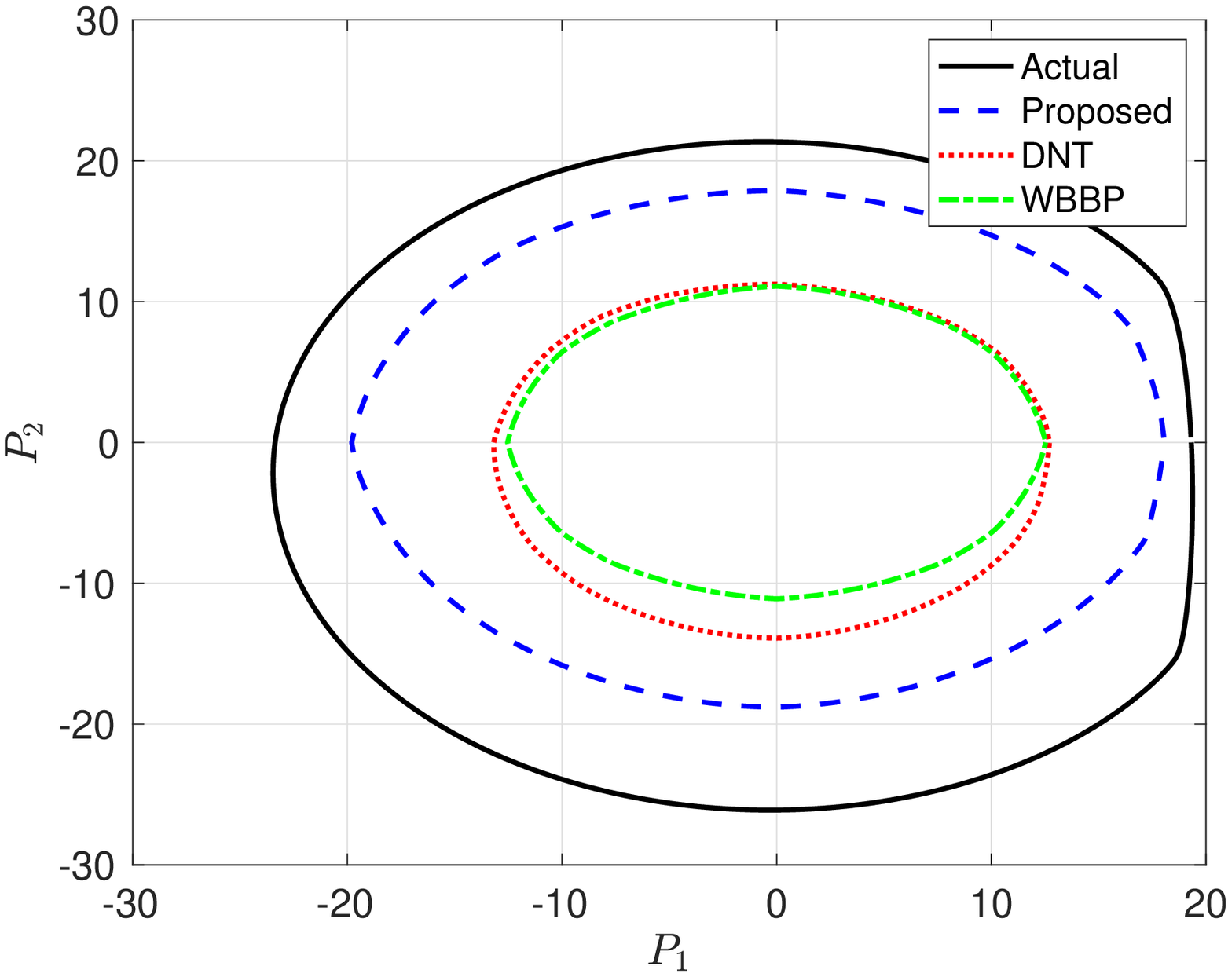}
			\caption{39-bus system}\label{fig:1e}
		\end{subfigure}%
		\begin{subfigure}[b]{.32\linewidth}
			\centering
			\includegraphics[width=.99\textwidth]{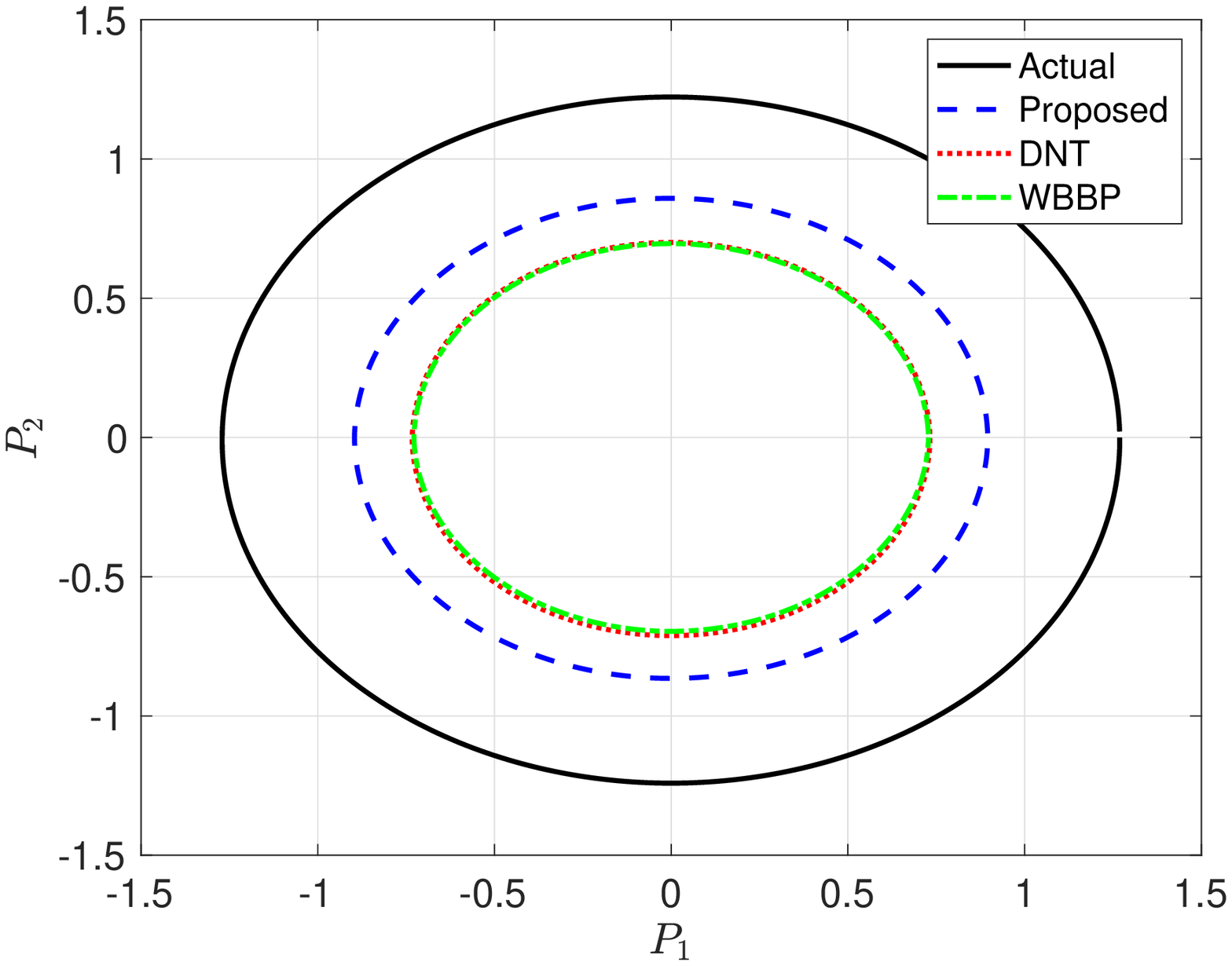}
			\caption{57-bus system}\label{fig:1f}
		\end{subfigure}\\%
		\begin{subfigure}[b]{.32\linewidth}
			\centering
			\includegraphics[width=.99\textwidth]{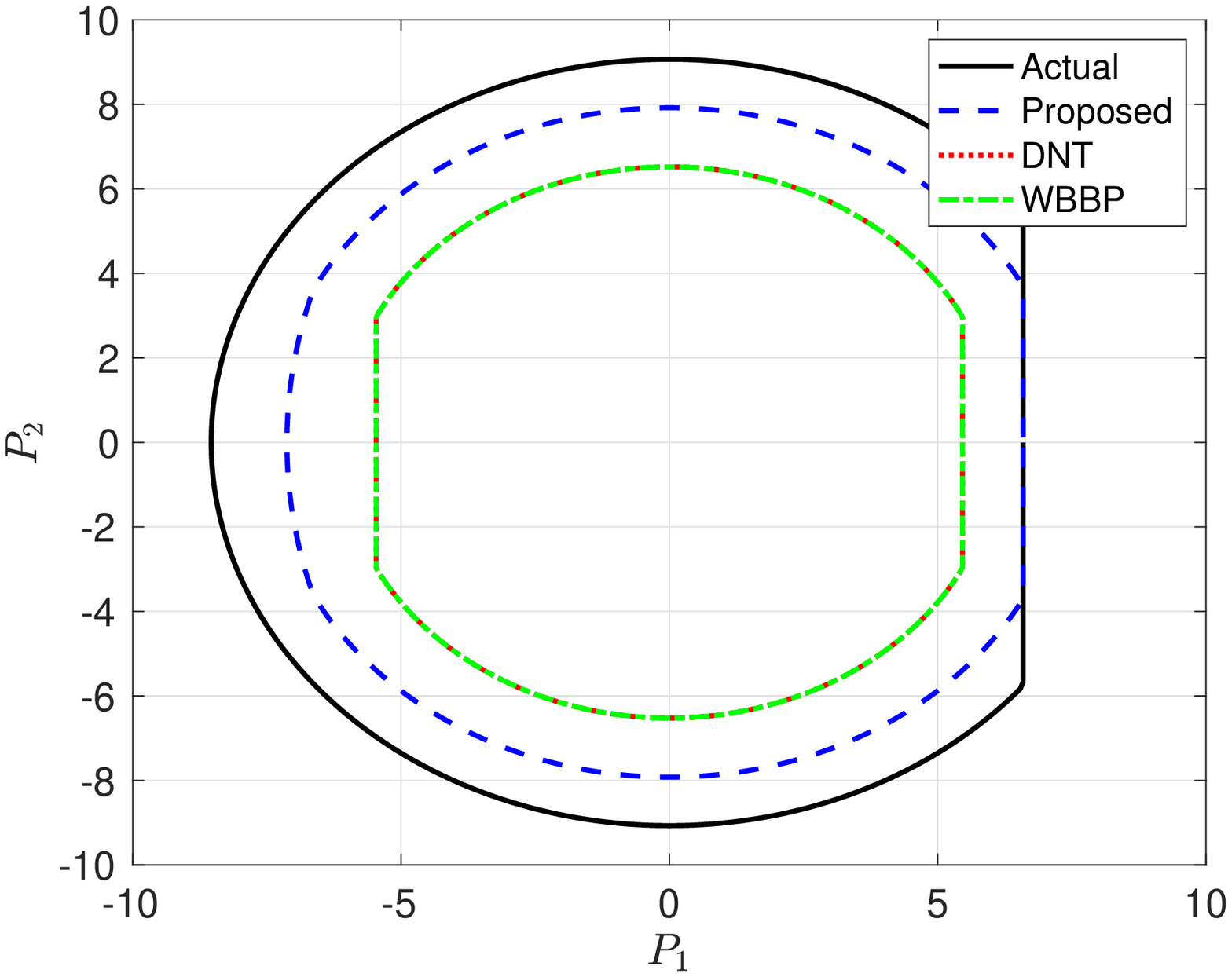}
			\caption{118-bus system}\label{fig:1g}
		\end{subfigure}%
		\begin{subfigure}[b]{.32\linewidth}
			\centering
			\includegraphics[width=.99\textwidth]{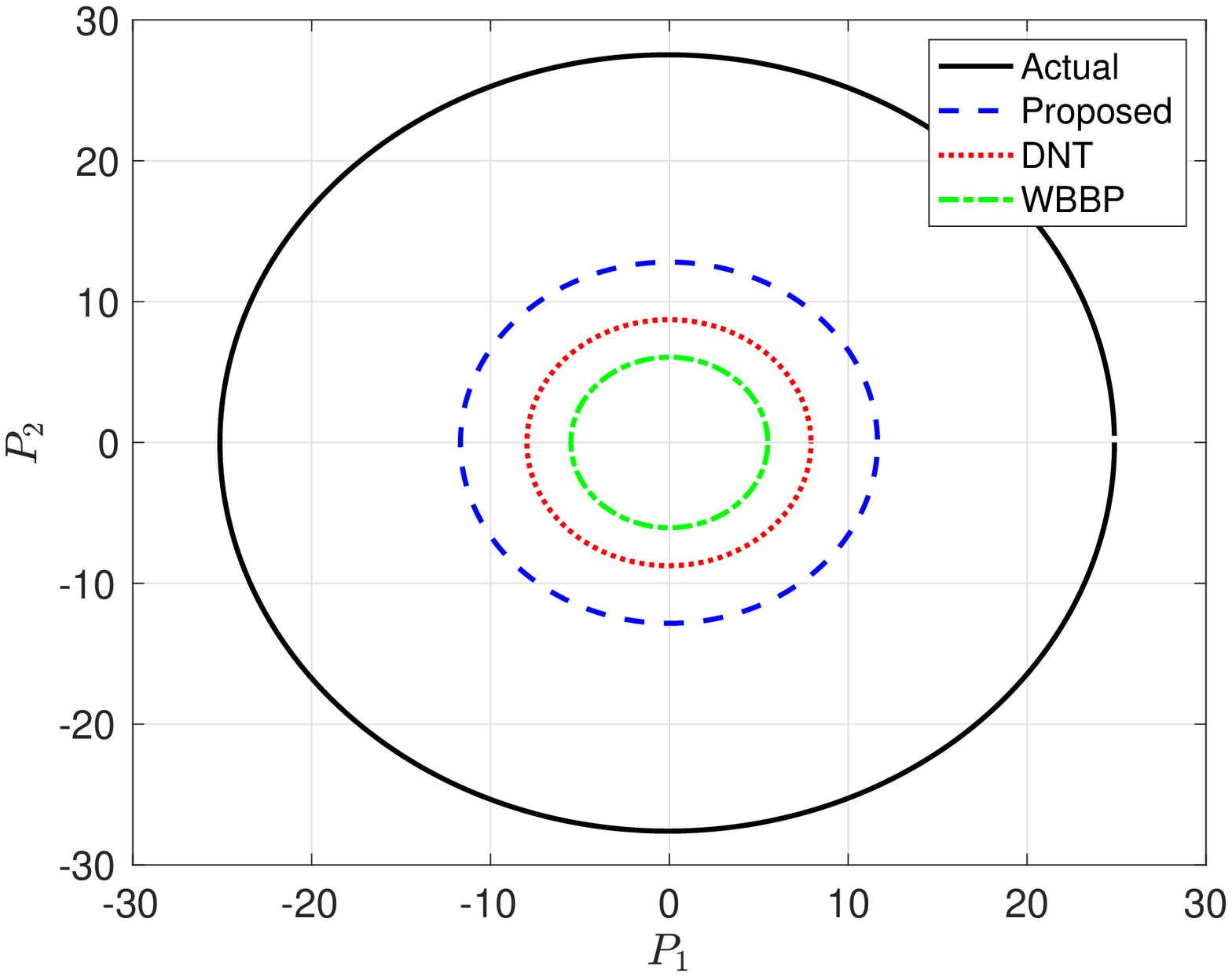}
			\caption{300-bus system}\label{fig:1h}
		\end{subfigure}%
		\begin{subfigure}[b]{.32\linewidth}
			\centering
			\includegraphics[width=.99\textwidth]{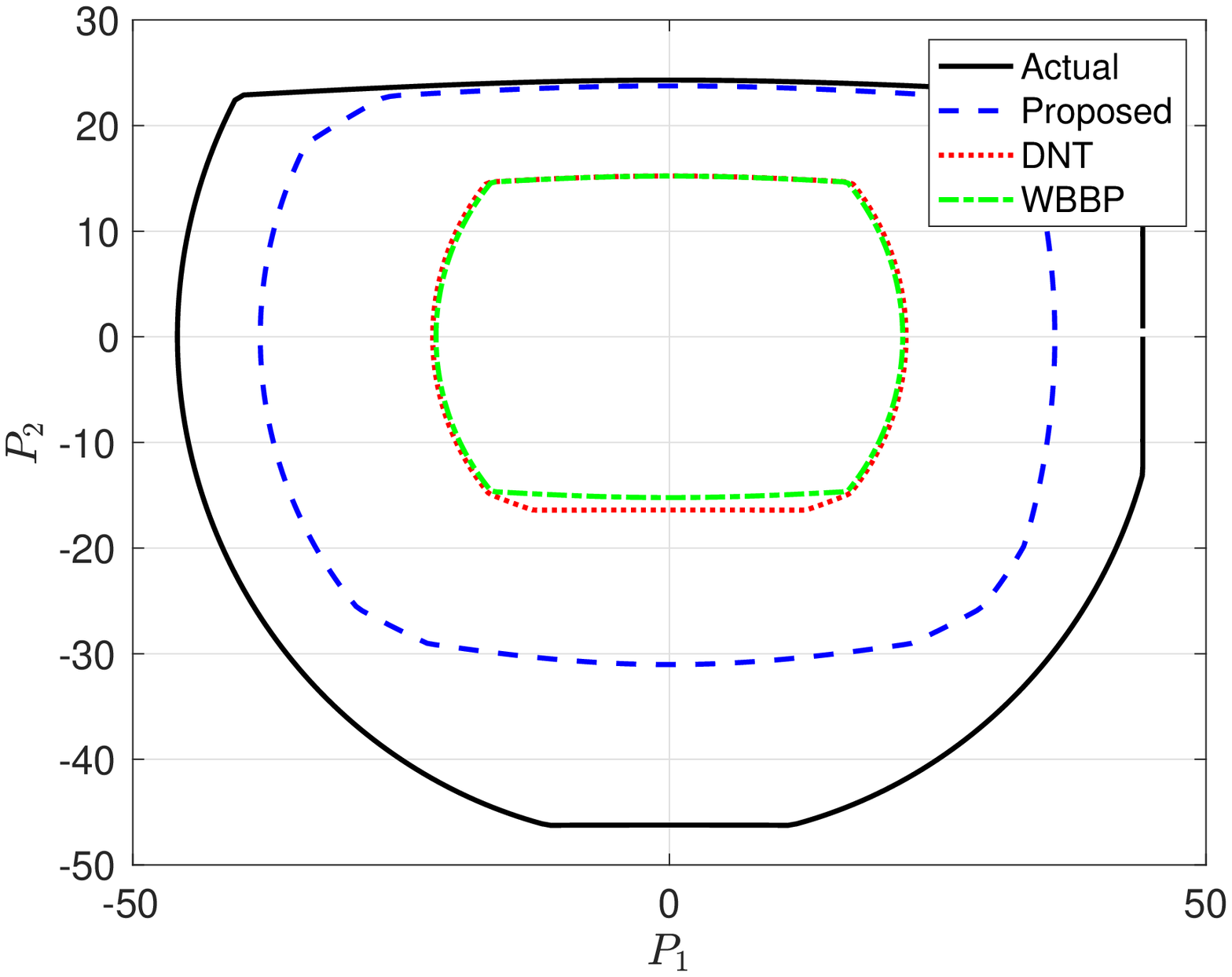}
			\caption{1354-bus system}\label{fig:1i}
		\end{subfigure}\\%
		\begin{subfigure}[b]{.32\linewidth}
			\centering
			\includegraphics[width=.99\textwidth]{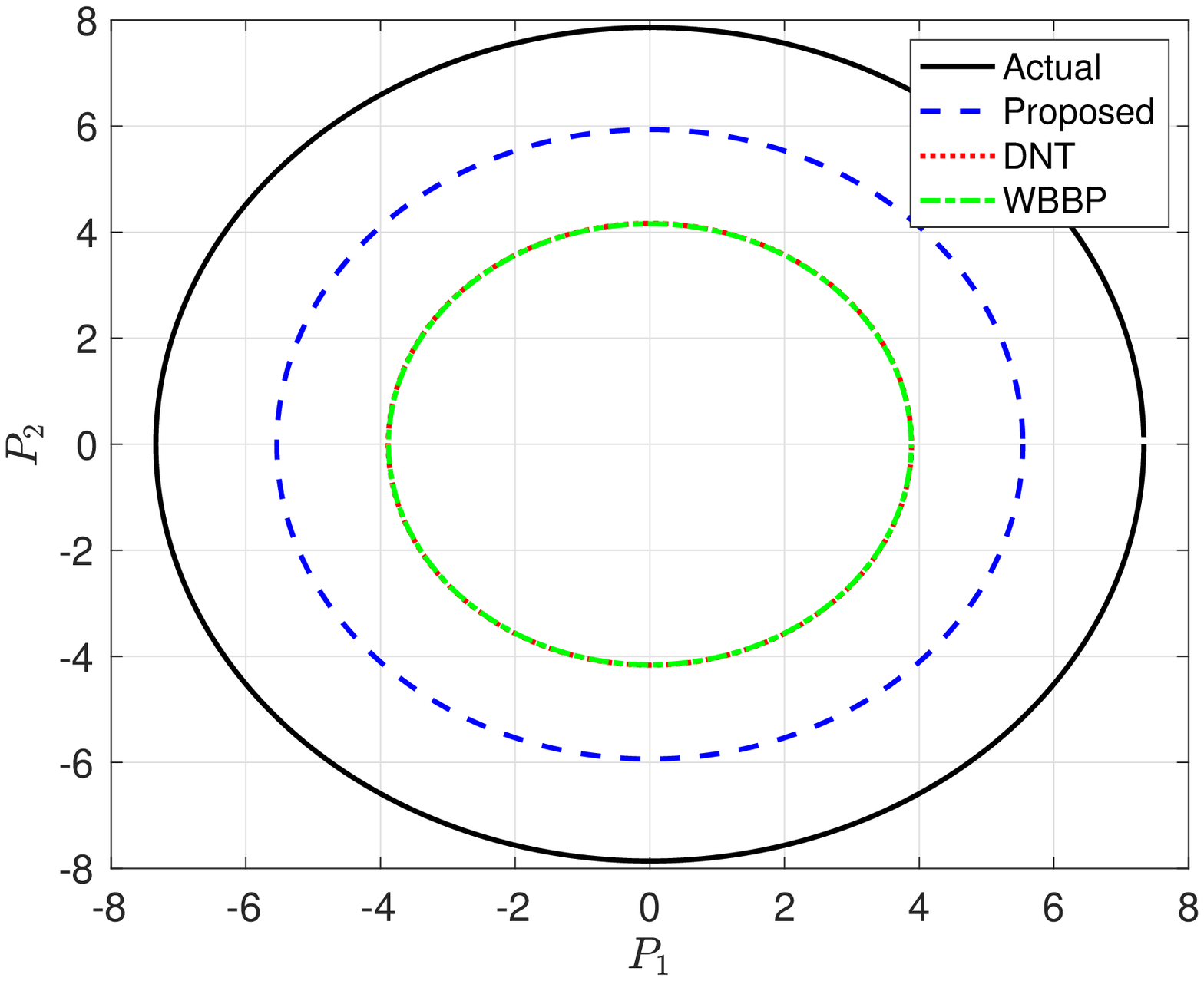}
			\caption{2383-bus system}\label{fig:1j}
		\end{subfigure}%
		\caption{Contour of solvability limit estimation for IEEE test systems.}\label{fig:contour}
	\end{figure*}
	
	\subsection{Contour of solvability limit estimation}
	
	In the second computational experiment, we perform solvability limit estimates along different loading directions and investigate the strength of the proposed condition under different loading patterns. For each test system, we change the loading directions of the first two load buses with nonzero real power demand while fixed the rest, and calculate the estimated solvability limit. To make sure the changes are pronounced enough, we artificially scale the powers of the first two loads such that they have equal magnitudes and the 2-norm of their load powers is equal to that of the rest of the load buses. By varying loading directions of the first two loads while keeping the 2-norm of their powers constant, we obtain the solvability contour as shown in Figure \ref{fig:contour}, which are the projections of the full-dimonsional solvability region to the two dimensions corresponding to the first two load buses. Similar to the first experiment, we again conclude from the simulation results that the proposed method produces the best solvability limit along all directions for all test systems. While two existing methods produce similar estimates, the proposed condition improves theirs by a wide margin. 
	
	\subsection{Voltage bound estimation}
	
	In the third computational experiment, we examine the conservativeness of the voltage bound estimation provided in Theorem \ref{thm:main}. We pick the IEEE 39-bus system for this experiment, which is a classic test system based on a reduced order New England power system commonly used for voltage stability analysis. We examine tightness of the voltage bounds under normal (base) loading conditions for buses across the system, as well as for buses in stressed system condition under progressive load increase. To this end, we develop one experiment for each scenario. 
	
	In the first experiment, we compute voltage bounds for all load buses at base loading condition based on the voltage bound \eqref{eq:v_range} in Theorem \ref{thm:main}, and compare the bounds with actual load bus voltages. We know from \eqref{eq:v_range} that the voltage upper and lower bounds for bus $i \in \mathcal{N}_L$ are given by $\bar{V}_{Li} = |E_i| (1-\eta_i + \ubar{r}\xi_i)$ and $\ubar{V}_{Li} = |E_i| (1 - \eta_i - \ubar{r}\xi_i)$, respectively, where $\ubar{r}$ is defined in \eqref{eq:r}, and the voltage angle bounds can be computed analogously. In addition, we take the center of the polydisc as the approximate voltage values. The simulation results on voltage magnitudes and phase angles are shown in Figures \ref{fig:exp3a} and \ref{fig:exp3b}, respectively. True values are shown in \ref{pgfplots:true} and approximate values are shown in \ref{pgfplots:approximate}, whereas voltage magnitude / angle bounds are marked by \ref{pgfplots:bound}. The results suggest that the error bounds of the linear power flow approximation works quite well under base loading condition, with error bounds for voltage magnitude less than $0.1$ p.u. and voltage angle less than $5$ degrees across the entire system. One more thing to note is that the voltage angle approximation is extremely accurate, the errors of which are all within $1$ degree.
	
	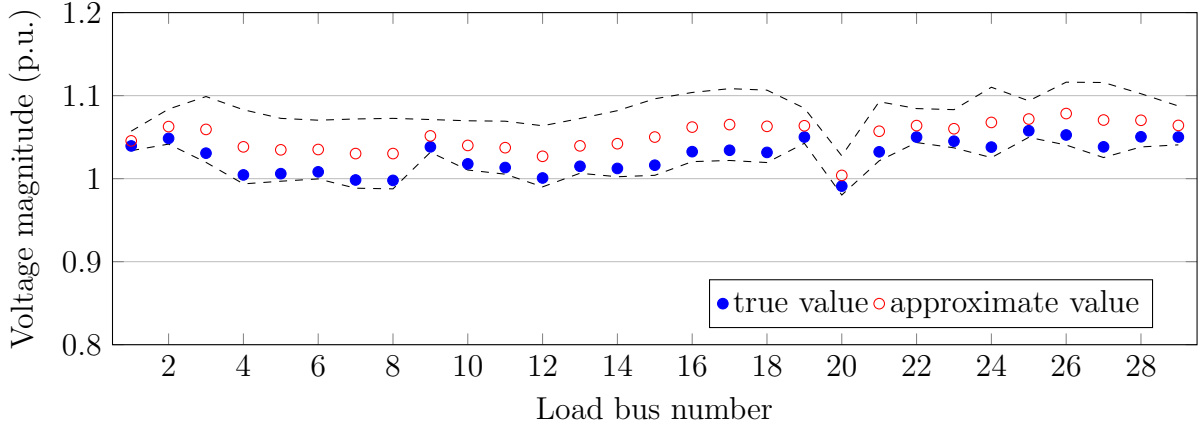
\begin{figure*}[!ht]
		\centering
		\begin{tikzpicture}
		\pgfplotsset{every axis legend/.append style={
				at={(0.55,0.05)},
				anchor=south west}}
		\begin{axis}[
		xlabel=Load bus number,
		ylabel=Voltage magnitude (p.u.),
		width=16cm,
		height=6cm,
		ymajorgrids=true,
		xmin = 0.5,
		xmax = 29.5,
		ymin = 0.8,
		ymax = 1.2,
		legend columns=4
		]
		\addplot[color=blue,mark=*, only marks] coordinates {
			(1, 1.0394) (2, 1.0485) (3, 1.0307) (4, 1.0045) (5, 1.0060) (6, 1.0082) (7, 0.9984) (8, 0.9979) (9, 1.0383) (10, 1.0178) (11, 1.0134) (12, 1.0008) (13, 1.0149) (14, 1.0123) (15, 1.0162) (16, 1.0325) (17, 1.0342) (18, 1.0316) (19, 1.0501) (20, 0.9910) (21, 1.0323) (22, 1.0501) (23, 1.0451) (24, 1.0380) (25, 1.0577) (26, 1.0526) (27, 1.0383) (28, 1.0504) (29, 1.0501) 
		};
		\addlegendentry{true value}
		\label{pgfplots:true}
		\addplot[color=red,mark=o, only marks] coordinates {
			(1, 1.0455) (2, 1.0627) (3, 1.0593) (4, 1.0383) (5, 1.0347) (6, 1.0351) (7, 1.0302) (8, 1.0302) (9, 1.0515) (10, 1.0400) (11, 1.0373) (12, 1.0269) (13, 1.0395) (14, 1.0421) (15, 1.0501) (16, 1.0621) (17, 1.0651) (18, 1.0630) (19, 1.0637) (20, 1.0039) (21, 1.0571) (22, 1.0639) (23, 1.0601) (24, 1.0677) (25, 1.0718) (26, 1.0784) (27, 1.0706) (28, 1.0703) (29, 1.0641)
		};
		\addlegendentry{approximate value}
		\label{pgfplots:approximate}
		\addplot[color = black, dashed] coordinates {
			(1, 1.0572) (2, 1.0838) (3, 1.0989) (4, 1.0830) (5, 1.0726) (6, 1.0704) (7, 1.0719) (8, 1.0726) (9, 1.0712) (10, 1.0697) (11, 1.0692) (12, 1.0637) (13, 1.0724) (14, 1.0819) (15, 1.0962) (16, 1.1038) (17, 1.1084) (18, 1.1066) (19, 1.0849) (20, 1.0275) (21, 1.0927) (22, 1.0844) (23, 1.0831) (24, 1.1101) (25, 1.0937) (26, 1.1162) (27, 1.1158) (28, 1.1023) (29, 1.0875) 
		};
		\label{pgfplots:bound}
		\addplot[color=black, dashed] coordinates {
			(1, 1.0337) (2, 1.0416) (3, 1.0196) (4, 0.9937) (5, 0.9969) (6, 0.9997) (7, 0.9885) (8, 0.9877) (9, 1.0318) (10, 1.0104) (11, 1.0054) (12, 0.9901) (13, 1.0065) (14, 1.0023) (15, 1.0040) (16, 1.0205) (17, 1.0219) (18, 1.0194)
			(19, 1.0426) (20, 0.9803) (21, 1.0215) (22, 1.0435) (23, 1.0371) (24, 1.0253) (25, 1.0498) (26, 1.0405) (27, 1.0254) (28, 1.0382) (29, 1.0407) 
		};
		\end{axis}
		\end{tikzpicture}
		\caption{Bus voltage magnitudes for IEEE 39-bus system at base loading condition. True voltage magnitudes are shown in \ref{pgfplots:true} and approximate values are shown in \ref{pgfplots:approximate}. Voltage magnitude upper and lower bounds as given in Theorem \ref{thm:main} are shown in \ref{pgfplots:bound}.} \label{fig:exp3a}
	\end{figure*}
	
	\begin{figure*}[!ht]
		\centering
		\begin{tikzpicture}
		\pgfplotsset{every axis legend/.append style={
				at={(0.55,0.05)},
				anchor=south west}}
		\begin{axis}[
		xlabel=Load bus number,
		ylabel=Voltage angle (deg),
		width=16cm,
		height=6cm,
		ymajorgrids=true,
		xmin = 0.5,
		xmax = 29.5,
		ymin = -20,
		ymax = 0,
		legend columns=4
		]
		\addplot[color=blue,mark=*, only marks] coordinates {
			(1, -13.5366) (2, -9.7853) (3, -12.2764) (4, -12.6267) (5, -11.1923) (6, -10.4083) (7, -12.7556) (8, -13.3358) (9, -14.1784) (10, -8.1709) (11, -8.9370) (12, -8.9988) (13, -8.9299) (14, -10.7153) (15, -11.3454) (16, -10.0333) (17, -11.1164) (18, -11.9862) (19, -5.4101) (20, -6.8212) (21, -7.6287) (22, -3.1831) (23, -3.3813) (24, -9.9138) (25, -8.3692) (26, -9.4388) (27, -11.3622) (28, -5.9284) (29, -3.1699) 
		};
		\addlegendentry{true value}
		\addplot[color=red,mark=o, only marks] coordinates {
			(1, -13.4652) (2, -9.6505) (3, -11.9854) (4, -12.3663) (5, -11.0173) (6, -10.2625) (7, -12.5247) (8, -13.0844) (9, -14.0184) (10, -8.0921) (11, -8.8457) (12, -9.0202) (13, -8.8343) (14, -10.5376) (15, -11.0842) (16, -9.7581) (17, -10.8215) (18, -11.6700) (19, -5.3311) (20, -6.7243) (21, -7.4667) (22, -3.1239) (23, -3.3072) (24, -9.5766) (25, -8.2493) (26, -9.2270) (27, -11.0702) (28, -5.7865) (29, -3.0906) 
		};
		\addlegendentry{approximate value}
		\addplot[color = black, dashed] coordinates {
			(1, -12.8224) (2, -8.5149) (3, -9.8403) (4, -9.9011) (5, -8.9224) (6, -8.3055) (7, -10.2028) (8, -10.7222) (9, -12.9449) (10, -6.4584) (11, -7.0824) (12, -6.9685) (13, -7.0183) (14, -8.3488) (15, -8.5686) (16, -7.5110) (17, -8.4963) (18, -9.3187) (19, -4.1917) (20, -5.3774) (21, -5.5370) (22, -2.0247) (23, -2.0637) (24, -7.2990) (25, -7.0741) (26, -7.2171) (27, -8.6489) (28, -4.0695) (29, -1.8322) 
		};
		\addplot[color=black, dashed] coordinates {
			(1, -14.1080) (2, -10.7861) (3, -14.1305) (4, -14.8314) (5, -13.1123) (6, -12.2195) (7, -14.8467) (8, -15.4466) (9, -15.0919) (10, -9.7257) (11, -10.6090) (12, -11.0719) (13, -10.6503) (14, -12.7265) (15, -13.5998) (16, -12.0053) (17, -13.1467) (18, -14.0214) (19, -6.4706) (20, -8.0712) (21, -9.3963) (22, -4.2232) (23, -4.5508) (24, -11.8542) (25, -9.4246) (26, -11.2369) (27, -13.4914) (28, -7.5034) (29, -4.3489) 
		};
		\end{axis}
		\end{tikzpicture}
		\caption{Bus voltage angles for IEEE 39-bus system at base loading condition. True voltage angles are shown in \ref{pgfplots:true} and approximate values are shown in \ref{pgfplots:approximate}. Voltage angle upper and lower bounds as given in Theorem \ref{thm:main} are shown in \ref{pgfplots:bound}.} \label{fig:exp3b}
	\end{figure*}
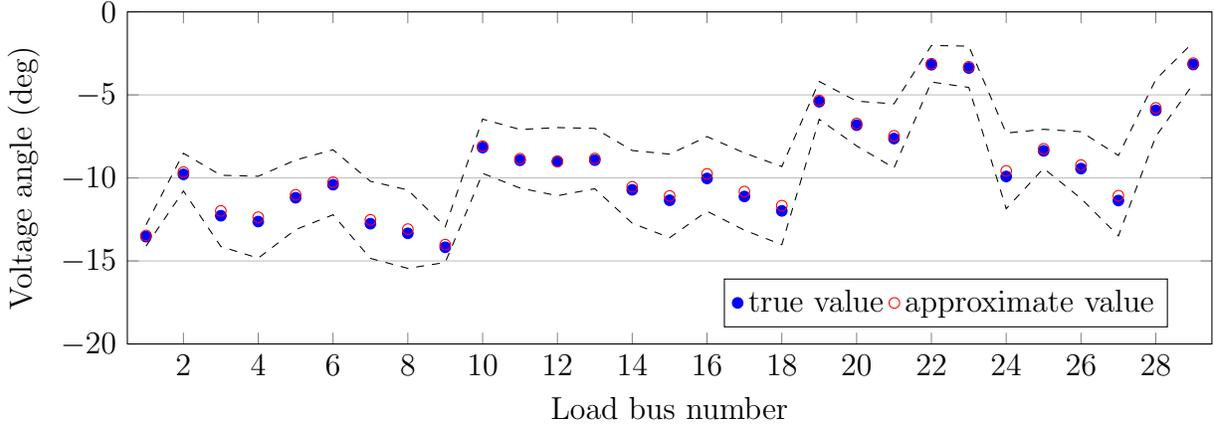
	
	We then test the proposed voltage bound estimation as system load powers progressively build up. The bound is calculated for bus 4, which is the most critical bus evaluated by the proposed condition\footnote{i.e., the load bus with index $i = \argmin_{i \in \mathcal{L}} \lambda_i$ where $\lambda_i$ is as defined in \eqref{eq:quad_lambdai}.}. Existing voltage bound estimation is available in \cite{WangC16}, and we compare their result with ours. The computational result is shown pictorially in Figure \ref{fig:exp3c}. For clarity, only voltage lower bounds are shown. As we have demonstrated in Tables \ref{tb:exp1a} and \ref{tb:exp1b}, the proposed condition provides a sharper estimate of the solvability limit, therefore able to provide voltage bounds for a larger interval of loading factors. In addition, it is observed that the quality of the estimates, both the proposed one and the one in \cite{WangC16}, degrade with increased load. However, the quality of the proposed one remains significantly better than that in \cite{WangC16} throughout system evolution, with the largest error in the order of $0.15$ p.u. at the very last voltage point.
	
	\begin{figure*}[!ht]
		\centering
		\begin{tikzpicture}
		\pgfplotsset{every axis legend/.append style={
				at={(0.5,0.02)},
				anchor=south west}}
		\begin{axis}[
		xlabel=Loading factor,
		ylabel=Voltage magnitude (p.u.),
		width=10cm,
		height=8cm,
		ymajorgrids=true,
		xmin = 1,
		xmax = 2.5,
		ymin = 0.4,
		ymax = 1.1
		]
		\addplot[color=black, very thick] coordinates {
			(1.00, 1.0045) (1.01, 1.0035) (1.02, 1.0025) (1.03, 1.0015) (1.04, 1.0004) (1.05, 0.9994) 
			(1.06, 0.9984) (1.07, 0.9974) (1.08, 0.9963) (1.09, 0.9953) (1.10, 0.9942) (1.11, 0.9932) 
			(1.12, 0.9921) (1.13, 0.9911) (1.14, 0.9900) (1.15, 0.9889) (1.16, 0.9878) (1.17, 0.9867) 
			(1.18, 0.9856) (1.19, 0.9845) (1.20, 0.9834) (1.21, 0.9823) (1.22, 0.9811) (1.23, 0.9800) 
			(1.24, 0.9788) (1.25, 0.9777) (1.26, 0.9765) (1.27, 0.9754) (1.28, 0.9742) (1.29, 0.9730) 
			(1.30, 0.9718) (1.31, 0.9706) (1.32, 0.9694) (1.33, 0.9682) (1.34, 0.9670) (1.35, 0.9657) 
			(1.36, 0.9645) (1.37, 0.9632) (1.38, 0.9620) (1.39, 0.9607) (1.40, 0.9594) (1.41, 0.9582) 
			(1.42, 0.9569) (1.43, 0.9556) (1.44, 0.9543) (1.45, 0.9529) (1.46, 0.9516) (1.47, 0.9503) 
			(1.48, 0.9489) (1.49, 0.9475) (1.50, 0.9462) (1.51, 0.9448) (1.52, 0.9434) (1.53, 0.9420) 
			(1.54, 0.9406) (1.55, 0.9392) (1.56, 0.9377) (1.57, 0.9363) (1.58, 0.9348) (1.59, 0.9334) 
			(1.60, 0.9319) (1.61, 0.9304) (1.62, 0.9289) (1.63, 0.9273) (1.64, 0.9258) (1.65, 0.9243) 
			(1.66, 0.9227) (1.67, 0.9211) (1.68, 0.9196) (1.69, 0.9180) (1.70, 0.9163) (1.71, 0.9147) 
			(1.72, 0.9131) (1.73, 0.9114) (1.74, 0.9097) (1.75, 0.9081) (1.76, 0.9064) (1.77, 0.9046) 
			(1.78, 0.9029) (1.79, 0.9011) (1.80, 0.8994) (1.81, 0.8976) (1.82, 0.8958) (1.83, 0.8940) 
			(1.84, 0.8921) (1.85, 0.8902) (1.86, 0.8884) (1.87, 0.8865) (1.88, 0.8845) (1.89, 0.8826) 
			(1.90, 0.8806) (1.91, 0.8786) (1.92, 0.8766) (1.93, 0.8746) (1.94, 0.8725) (1.95, 0.8705) 
			(1.96, 0.8683) (1.97, 0.8662) (1.98, 0.8640) (1.99, 0.8619) (2.00, 0.8596) (2.01, 0.8574) 
			(2.02, 0.8551) (2.03, 0.8528) (2.04, 0.8505) (2.05, 0.8481) (2.06, 0.8457) (2.07, 0.8432) 
			(2.08, 0.8407) (2.09, 0.8382) (2.10, 0.8356) (2.11, 0.8330) (2.12, 0.8304) (2.13, 0.8277) 
			(2.14, 0.8249) (2.15, 0.8222) (2.16, 0.8193) (2.17, 0.8164) (2.18, 0.8134) (2.19, 0.8104) 
			(2.20, 0.8074) (2.21, 0.8042) (2.22, 0.8010) (2.23, 0.7977) (2.24, 0.7943) (2.25, 0.7909) 
			(2.26, 0.7873) (2.27, 0.7837) (2.28, 0.7799) (2.29, 0.7761) (2.30, 0.7721) (2.31, 0.7680) 
			(2.32, 0.7638) (2.33, 0.7594) (2.34, 0.7548) (2.35, 0.7500) (2.36, 0.7450) (2.37, 0.7398) 
			(2.38, 0.7342) (2.39, 0.7284) (2.40, 0.7221) (2.41, 0.7154) (2.42, 0.7081) (2.43, 0.7000) 
			(2.44, 0.6909) (2.45, 0.6802) (2.46, 0.6667) (2.47, 0.6455)
		};
		\addlegendentry{\small Actual voltage}
		\label{pgfplots:actual_v}
		\addplot[color=blue, dashed, very thick] coordinates {
			(1.00, 0.9937) (1.01, 0.9924) (1.02, 0.9911) (1.03, 0.9898) (1.04, 0.9884) (1.05, 0.9871) 
			(1.06, 0.9857) (1.07, 0.9844) (1.08, 0.9830) (1.09, 0.9816) (1.10, 0.9802) (1.11, 0.9788) 
			(1.12, 0.9773) (1.13, 0.9759) (1.14, 0.9744) (1.15, 0.9729) (1.16, 0.9714) (1.17, 0.9699) 
			(1.18, 0.9684) (1.19, 0.9669) (1.20, 0.9653) (1.21, 0.9638) (1.22, 0.9622) (1.23, 0.9606) 
			(1.24, 0.9590) (1.25, 0.9574) (1.26, 0.9557) (1.27, 0.9541) (1.28, 0.9524) (1.29, 0.9507) 
			(1.30, 0.9490) (1.31, 0.9473) (1.32, 0.9455) (1.33, 0.9438) (1.34, 0.9420) (1.35, 0.9402) 
			(1.36, 0.9384) (1.37, 0.9365) (1.38, 0.9347) (1.39, 0.9328) (1.40, 0.9309) (1.41, 0.9290)
			(1.42, 0.9271) (1.43, 0.9251) (1.44, 0.9232) (1.45, 0.9212) (1.46, 0.9191) (1.47, 0.9171) 
			(1.48, 0.9150) (1.49, 0.9129) (1.50, 0.9108) (1.51, 0.9087) (1.52, 0.9065) (1.53, 0.9044) 
			(1.54, 0.9022) (1.55, 0.8999) (1.56, 0.8977) (1.57, 0.8954) (1.58, 0.8930) (1.59, 0.8907) 
			(1.60, 0.8883) (1.61, 0.8859) (1.62, 0.8835) (1.63, 0.8810) (1.64, 0.8785) (1.65, 0.8760) 
			(1.66, 0.8734) (1.67, 0.8708) (1.68, 0.8682) (1.69, 0.8655) (1.70, 0.8628) (1.71, 0.8601) 
			(1.72, 0.8573) (1.73, 0.8544) (1.74, 0.8516) (1.75, 0.8486) (1.76, 0.8457) (1.77, 0.8427) 
			(1.78, 0.8396) (1.79, 0.8365) (1.80, 0.8333) (1.81, 0.8301) (1.82, 0.8268) (1.83, 0.8234)
			(1.84, 0.8200) (1.85, 0.8166) (1.86, 0.8130) (1.87, 0.8094) (1.88, 0.8057) (1.89, 0.8019) 
			(1.90, 0.7980) (1.91, 0.7941) (1.92, 0.7900) (1.93, 0.7858) (1.94, 0.7815) (1.95, 0.7771) 
			(1.96, 0.7725) (1.97, 0.7678) (1.98, 0.7630) (1.99, 0.7579) (2.00, 0.7527) (2.01, 0.7472) 
			(2.02, 0.7414) (2.03, 0.7353) (2.04, 0.7289) (2.05, 0.7220) (2.06, 0.7146) (2.07, 0.7065) 
			(2.08, 0.6975) (2.09, 0.6872) (2.10, 0.6748) (2.11, 0.6579)
		};
		\addlegendentry{\small Proposed bound}
		\label{pgfplots:prop_bound}
		\addplot[color=red, dotted, very thick] coordinates {
			(1.00, 0.8183) (1.01, 0.8145) (1.02, 0.8105) (1.03, 0.8065) (1.04, 0.8024) (1.05, 0.7983) 
			(1.06, 0.7941) (1.07, 0.7899) (1.08, 0.7855) (1.09, 0.7811) (1.10, 0.7766) (1.11, 0.7720) 
			(1.12, 0.7673) (1.13, 0.7625) (1.14, 0.7577) (1.15, 0.7527) (1.16, 0.7475) (1.17, 0.7423) 
			(1.18, 0.7369) (1.19, 0.7314) (1.20, 0.7257) (1.21, 0.7198) (1.22, 0.7137) (1.23, 0.7074) 
			(1.24, 0.7008) (1.25, 0.6940) (1.26, 0.6868) (1.27, 0.6793) (1.28, 0.6714) (1.29, 0.6629) 
			(1.30, 0.6538) (1.31, 0.6439) (1.32, 0.6330) (1.33, 0.6205) (1.34, 0.6058) (1.35, 0.5866) 
		};
		\addlegendentry{\small Bound by \cite{WangC16}}
		\label{pgfplots:wang_bound}
		\addplot[color=green, dotted, very thick] coordinates {
			(1.00, 0.8296) (1.01, 0.8259) (1.02, 0.8222) (1.03, 0.8184) (1.04, 0.8146) (1.05, 0.8107) (1.06, 0.8067)
			(1.07, 0.8027) (1.08, 0.7987) (1.09, 0.7945) (1.10, 0.7903) (1.11, 0.7860) (1.12, 0.7817) (1.13, 0.7772)
			(1.14, 0.7727) (1.15, 0.7681) (1.16, 0.7633) (1.17, 0.7585) (1.18, 0.7536) (1.19, 0.7485) (1.20, 0.7433)
			(1.21, 0.7380) (1.22, 0.7325) (1.23, 0.7268) (1.24, 0.7209) (1.25, 0.7149) (1.26, 0.7086) (1.27, 0.7021)
			(1.28, 0.6953) (1.29, 0.6881) (1.30, 0.6806) (1.31, 0.6726) (1.32, 0.6641) (1.33, 0.6549) (1.34, 0.6448)
			(1.35, 0.6336) (1.36, 0.6208) (1.37, 0.6052) (1.38, 0.5837) 
		};
		\addlegendentry{\small Bound by \cite{Dvijotham18}}
		\label{pgfplots:dj_bound}
		\end{axis}
		\end{tikzpicture}
		\caption{Comparison of voltage lower bounds at bus 4 of IEEE 39-bus system given by Theorem \ref{thm:main} (shown by \ref{pgfplots:prop_bound}) and that given in \cite{WangC16, Dvijotham18} (shown by \ref{pgfplots:wang_bound} and \ref{pgfplots:dj_bound}, repectively) as system load powers build up. Actual voltage profile is shown by \ref{pgfplots:actual_v}. Voltage bound estimations cease to exist when the existence of power flow solutions cannot be certified by the corresponding methods.} \label{fig:exp3c}
	\end{figure*}
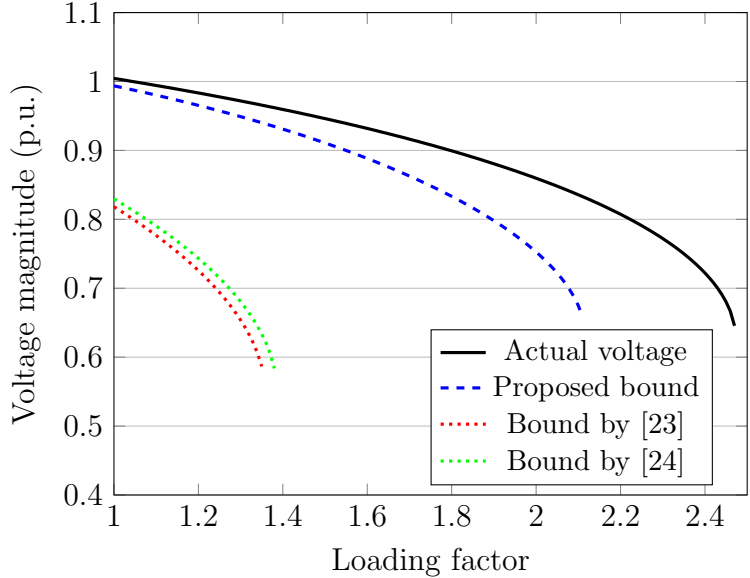
	
	\section{Discussion and Future Directions}
	
	We have presented a strengthened power flow solvability condition for large-scale power systems. The conservativeness issue in existing conditions has been significantly improved --- with negligible computational overhead, the condition provides much tighter lower bound of power flow solvability limit compared to existing ones. Thanks to the exploitation of properties of power flow equations in complex space, uniqueness of power flow solution in state space can be guaranteed for a wider range of power injections. As we show in Supplementary Information, this also ensures that the unique power flow solution can be obtained by iterating the fixed-point power flow equations. The proposed condition in Theorems \ref{thm:main} can help improve situational awareness of system operators by providing quick system stress assessment and critical area identification based on the interplay between load power injections and the normalized impedance matrix encoding system parametric and topological information. The real part of the vector inner product term in \eqref{eq:gamma} is novel, which strengthens the condition and also consolidates physical insights of the role load power factor plays in system long-term stability and power flow solvability.
	
	Some future research directions include extension of the condition to more realistic transmission system models. Specifically,  we would like to relax the assumption of constant generator voltage phasors. Although this assumption is widely adopted in power system steady-state stability analysis and works quite well under normal operating conditions, it may break down when systems are close to their steady-state stability limits. Results considering generators with varying phase angles (PV bus model) have been reported, for example, in \cite{Simpson17a, Simpson17b, Dorfler13}, but they are quite conservative and/or are restricted to systems under certain modeling assumptions. Another important future work is to investigate the applicability of the analytical tool to other system models. For instance, we are looking into ways to develop similar index for unbalanced three-phase distribution systems with transformers and other component models. Existing works along the line include \cite{Bernstein18,WangC17}. However, we believe the analytical tool we developed in this paper will facilitate the derivation of sharper solvability condition and provide novel physical insights into the problem. In addition, we expect similar approaches can be applied to analyze other complex infrastructure networks, such as water distribution systems \cite{Bazrafshan18}, natural gas systems \cite{Dvijotham15}, as well as their interconnections with electric power systems \cite{Shahidehpour05}.

	\newpage
	\appendix
	\section*{Supplementary Information}
	
	\section{Background} \label{sect2}
	
	\subsection{Notations}
	
	\subsubsection{Sets}
	$\mathbb{R}$, $\mathbb{R}_{++}$, and $\mathbb{C}$ are the set of real, positive real, and complex numbers, respectively. The disk in $\mathbb{C}$ with center $c \in \mathbb{C}$ and radius $r \in \mathbb{R}_{++}$ is denoted by $\mathbb{D}(c; r)$, that is, $\mathbb{D}(c; r) := \{ z \in \mathbb{C} :\; |z - c| < r \}$. The unit disk $\mathbb{D}((0,0); 1)$ is shorthanded as $\mathbb{D}$. Given $\mb{c} \in \mathbb{C}^n$ and $\mb{r} \in \mathbb{R}_{++}^n$, $\mathbb{D}^n(\mb{c}; \mb{r})$ is the $n$-dimensional polydisc defined as $\mathbb{D}^n(\mb{c}; \mb{r}) := \mathbb{D}(c_1; r_1) \times \cdots \times \mathbb{D}(c_n; r_n)$. The closure, interior, and boundary of a set $S$ are denoted by $\bar{S}$, $\inte(S)$, and $\partial S$, respectively.
	
	\subsubsection{Vectors and matrices} 
	Vectors and matrices are represented by bold letters while scalars are represented by normal ones. Let $\mb{e}_i^n$ be the $i$th canonical basis vector of $\mathbb{R}^n$, that is, the $i$th entry of $\mb{e}_i^n$ is $1$ and all other entries are $0$. For matrix $\mb{A} \in \mathbb{C}^{m\times n}$, $\mb{A}^\top$, $\mb{A}^H$ are respectively the transpose and conjugate transpose of $\mb{A}$. $a_i$ denotes the vector formed by the $i$th row of $\mb{A}$. For vector $\mb{x} \in \mathbb{C}^n$, $\| \mb{x} \|_p$ denotes the $\ell_p$ norm of $\mb{x}$ where $p \in [1, \infty) \cup \{\infty\}$ and $\diag(\mb{x}) \in \mathbb{C}^{n \times n}$ is the associated diagonal matrix. Unless otherwise stated, $|\mb{x}|$ denotes the $\ell_\infty$ norm. $\mb{0}$ and $\mb{1}$ are the vectors of all $0$'s and $1$'s of appropriate sizes. The cardinality of a set or the absolute value of a (possibly) complex number is denoted by $|\cdot|$. $\mathrm{i} = \sqrt{-1}$ is the imaginary unit. $\mathbf{I}_n$ denotes identity matrix of appropriate dimension. The real and imaginary parts of a complex number are denoted by $\re(\cdot)$ and $\im(\cdot)$, respectively.
	
	\subsection{System modeling}
	
	\subsubsection{Generator model}
	
	Synchronous generators are a primary source of power supply and are to a great extend responsible for maintaining proper voltage profile across power system through automatic voltage regulators \cite[Chap.~3]{Cutsem08}. Under normal operating conditions they are capable of maintaining constant real power outputs and voltage magnitudes. Therefore the generator buses are generally modeled as the so-called PV buses where the bus real power injections and voltage magnitudes are specified while the reactive power injections and voltage phase angles are not regulated. Under extreme conditions when excessive current flows through field winding of the synchronous generator, the overexcitation limiter takes effect, which results in loss of the capability to regulate voltage magnitude and output of incremental reactive power. A common modeling practice under this scenario is to switch the bus type from PV to PQ, thus fixing the real and reactive power injection of the bus while allowing voltage magnitude to vary. It is remarked that in almost all voltage instability incidents some generators were operating with limited reactive capability \cite[Chap.~3]{Cutsem08}.
	
	For practical power systems, the generator buses have regulated voltage magnitudes and small phase angles. It is common in voltage stability analysis to assume that the generator buses have constant voltage phasors \cite{Vu99, WangY11}. We adopt this assumption in this paper, and note that it can be partially justified by the fact that voltage instability/power flow insolvability are mostly caused by system overloading due to excess demand at load side, not the generator side.
	
	\subsubsection{Load model}
	
	We adopt the simple constant power load model in the study where the real and reactive power demand are known and specified for each load bus. More general load model can be incorporated, for instance the ZIP static load model, where the constant impedance (Z), constant current (I), and constant power (P) characteristics of the load are simultaneously taken into account. From a modeling perspective, constant current and impedance loads can be `adsorbed' by the system admittance matrix and normalized impedance matrix \cite{Wang17}, thus there is little generality lost when only considering constant power loads. Since our emphasis is on the investigation of voltage-power relationship of static power flow equations, dynamic loads are not considered. 
	
	\subsection{Power flow equations}
	
	In this section, we slightly generalize the power flow equations $\mb{v}_L = \mb{1} - \hat{\mb{Z}}\diag^{-1}(\mb{v}_L^*)\mb{S}_L^*$ in \eqref{eq:fppf} by incorporating in the power flow equations a known power flow solution $\mb{v}^0_L$ and its associated load power $\mb{S}^0_L$. All conditions we derived in the sequel uses this strengthened version of power flow equation. This allows the certificate of solution existence and uniqueness for incremental load powers around some nominal ones, which is a typical use case in many applications. We see that $\mb{v}^0_L = \mb{1}$ is a power flow solution for $\mb{S}^0_L = \mb{0}$, so the requirement of the existence of a known power flow solution does not affect the generality of the condition.
	
	The power flow equation with load $\mb{S}_L := \mb{S}^0_L + \mb{\sigma}_L$ is
	\begin{equation} \label{eq:pfv}
	\mb{v}_L = \mb{1} - \hat{\mb{Z}} \diag^{-1}(\mb{v}_L^*)\mb{S}_L^*.
	\end{equation}
	In addition, since $\mb{v}^0_L$ is a power flow solution for load $\mb{S}^0_L$, we have
	\begin{equation} \label{eq:pfv0}
	\mb{v}^0_L = \mb{1} - \hat{\mb{Z}} \diag^{-1}\left({\mb{v}_L^0}^*\right) {\mb{S}_L^0}^*.
	\end{equation}
	Substitute $\mb{1}$ in \eqref{eq:pfv0} back into \eqref{eq:pfv}, we get
	\begin{subequations} \label{eq:pfv01}
		\begin{align}
		\mb{v}_L &= \mb{v}^0_L + \hat{\mb{Z}} \diag^{-1}\left({\mb{v}^0_L}^*\right) {\mb{S}^0_L}^* - \hat{\mb{Z}} \diag^{-1}(\mb{v}_L^*) \mb{S}_L^* \\
		&= \mb{v}^0_L - \hat{\mb{Z}} \diag^{-1}\left({\mb{v}^0_L}^*\right) \mb{\sigma}_L^* + \hat{\mb{Z}} \left( \diag^{-1}(\mb{v}^0_L) - \diag^{-1}(\mb{v}_L)\right)^* \mb{S}_L^*. \label{eq:pfv02}
		\end{align}
	\end{subequations}
	Apply change of variable $\mb{u} := \diag^{-1}(\mb{v}^0_L)\mb{v}_L$ and left multiply both sides of \eqref{eq:pfv02} by $\diag^{-1}(\mb{v}^0_L)$ yields
	\begin{equation} \label{eq:pfu}
	\mb{u} = F(\mb{u}) := \mb{1} - \tilde{\mb{Z}} \mb{\sigma}_L^* + \tilde{\mb{Z}} \left( \mathbf{I} - \diag^{-1}(\mb{u}^*) \right) \mb{S}_L^*
	\end{equation}
	where the voltage-normalized impedance is defined as $\tilde{\mb{Z}} = \diag^{-1}(\mb{v}_L^0) \hat{\mb{Z}} \diag^{-1}({\mb{v}_L^0}^*)$. We work with the power flow equations \eqref{eq:pfu} in the sequel.
	
	\subsection{Some system theoretical quantities}
	
	Now we introduce some system stress measures that will be used in deriving solvability conditions. For load bus $i \in \mathcal{N}_L$, the quantities
	\begin{subequations}
		\begin{align}
		\eta_i(\mb{\sigma}_L) &:= \tilde{z}_i^\top \mb{\sigma}^*_L, \\ \xi_i(\mb{S}_L) &:= \left\| \tilde{z}_i^\top \diag(\mb{S}^*_L) \right\|_1
		\end{align}
	\end{subequations}
	quantify nodal stress levels resulted from incremental and total loads. They appear in existing solvability literature \cite{WangC16,Dvijotham18,Nguyen17}. In addition to the two stress measures above, we introduce an additional one fusing the two as
	\begin{equation}
	\gamma_i(\mb{S}_L, \mb{\sigma}_L) = 2\left(\xi_i(\mb{S}_L) + \re\left( \eta_i(\mb{\sigma}_L) \right) \right) - \xi_i(\mb{S}_L)^2 - \left| \eta_i(\mb{\sigma}_L) \right|^2.
	\end{equation}
	The solvability conditions are given in terms of the maxima of $\xi_i$, $|\eta_i|$, and $\gamma_i$ over the set of load buses. We denote the maxima of the corresponding quantities as
	\begin{subequations}
		\begin{align}
		\gamma(\mb{S}_L, \mb{\sigma}_L) &:= \max_{i \in \mathcal{N}_L} \gamma_i(\mb{S}_L, \mb{\sigma}_L), \\
		\eta(\mb{\sigma}_L) &:=  \max_{i \in \mathcal{N}_L} \left| \eta_i(\mb{\sigma}_L) \right| = \left\| \tilde{z}_i^\top \mb{\sigma}^*_L \right\|_\infty, \\
		\xi(\mb{S}_L) &:= \max_{i \in \mathcal{N}_L} \xi_i(\mb{S}_L) = \left\|\tilde{z}_i^\top \diag(\mb{S}^*_L) \right\|_\infty.
		\end{align}
	\end{subequations}
	
	Next we define the following set in $\mathbb{C}^n$ parameterized by $r \in \mathbb{R}_{++}$. It is an $n$-dimensional polydisc when $\xi_i(\mb{S}_L)$ are strictly positive for all $i\in \mathcal{N}_L$
	\begin{equation} \label{eq:Ir}
	\mathcal{D}(r) = \left\{ \mb{u} \in \mathbb{C}^n :\; \begin{cases} \left| 1 - \eta_i(\mb{\sigma}_L) - u_i \right| < r \xi_i(\mb{S}_L), & \xi_i(\mb{S}_L) > 0 \\
	u_i = 1 - \eta_i(\mb{\sigma}_L), & \xi_i(\mb{S}_L) = 0
	\end{cases} \right\}.
	\end{equation}
	We will derive conditions under which $\bar{\mathcal{D}}(r)$ is an invariant set for power flow mapping $F(\mb{u})$.

	\section{Existence of Power Flow Solutions}
	
	In this section, we derive sufficient condition guaranteeing the existence of solutions to fixed point power flow equations \eqref{eq:pfu}. We first introduce a basic result, the Brouwer fixed point theorem due to L. E. J. Brouwer, which establishes the existence of fixed point for equations in Euclidean space of the form $x = f(x)$.
	
	\begin{thm}[Brouwer fixed point theorem] \label{thm:Brouwer}
		Let $f : C \subset E^n \to E^n$ be continuous on the compact, convex set $C$ which is a subset of the $n$-dimensional Euclidean space $E^n$, and suppose that $f(C) \subseteq C$. Then $f$ has a fixed point in $C$.
	\end{thm}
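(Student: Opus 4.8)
The plan is to reduce the statement in two stages to a single topological fact---the nonexistence of a continuous retraction of a ball onto its boundary sphere---and then to establish that fact by a smooth-approximation argument. First I would reduce the general compact convex domain to the closed unit ball. Since $C \subset E^n$ is compact, it is contained in some closed ball $B$. Convexity and closedness of $C$ guarantee that the nearest-point projection $p : B \to C$, defined by $p(x) = \argmin_{c \in C} \|x - c\|$, is single-valued and (non-expansive, hence) continuous. The composition $g := f \circ p : B \to C \subseteq B$ is then continuous, and any fixed point $x^\star = g(x^\star)$ necessarily lies in $g(B) \subseteq C$; since $p$ is the identity on $C$ we get $p(x^\star) = x^\star$ and therefore $f(x^\star) = x^\star$. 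Thus it suffices to prove the theorem when $C = B^n$, the closed unit ball.

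Second, I would reduce the ball case to the \emph{no-retraction theorem}: there is no continuous map $r : B^n \to S^{n-1} = \partial B^n$ with $r|_{S^{n-1}} = \mathrm{id}$. Arguing by contradiction, suppose $f : B^n \to B^n$ is continuous and fixed-point free, so that $f(x) \neq x$ for every $x$. Define $r(x)$ to be the point at which the ray emanating from $f(x)$ and passing through $x$ meets $S^{n-1}$. Writing $r(x) = x + t(x)\,(x - f(x))$ and imposing $\|r(x)\| = 1$ yields a quadratic in $t$ with a unique nonnegative root depending continuously on $x$ (the relevant discriminant is strictly positive because $f(x)\neq x$), so $r$ is continuous; and $r(x) = x$ whenever $x \in S^{n-1}$. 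Hence $r$ is a retraction, contradicting the no-retraction theorem.

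The main obstacle is the no-retraction theorem itself, which carries the genuine topological content. I would prove it by smoothing. Given a hypothetical continuous retraction $r$, a standard mollification produces a smooth map close to $r$ in the supremum norm, which after a radial correction can be promoted to a \emph{smooth} retraction $\tilde{r} : B^n \to S^{n-1}$ equal to the identity on the boundary. By Sard's theorem $\tilde{r}$ admits a regular value $y \in S^{n-1}$; then $\tilde{r}^{-1}(y)$ is a compact $1$-manifold whose boundary is $\tilde{r}^{-1}(y) \cap S^{n-1} = \{y\}$, because $\tilde{r}$ restricts to the identity on $S^{n-1}$. This is impossible, since the boundary of a compact $1$-manifold always consists of an even number of points. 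The resulting contradiction establishes the no-retraction theorem and hence, unwinding the two reductions, the theorem. An alternative, purely combinatorial route would replace this last step by Sperner's lemma applied to successively finer triangulations of a simplex, extracting a convergent subsequence of almost-fixed points; I would regard the smooth argument as the cleaner of the two to write out in detail.
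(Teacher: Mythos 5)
The statement you are proving is one the paper does not prove at all: Theorem \ref{thm:Brouwer} is quoted verbatim as a classical black-box result (it is invoked once, to get existence of a fixed point of $F$ on the invariant polydisc $\bar{\mathcal{D}}(r)$ in Theorem \ref{thm:exist}), so there is no in-paper argument to compare against. Your outline is the standard and correct differential-topological proof: the reduction from a general compact convex $C$ to the closed ball via the non-expansive nearest-point projection is fine (including the observation that any fixed point of $f\circ p$ lands in $C$ where $p$ is the identity), the reduction of the ball case to the no-retraction theorem via the ray construction is fine (and your discriminant claim does hold everywhere, since for $\|x\|=1$ one has $\langle x, x-f(x)\rangle = 1-\langle x,f(x)\rangle>0$ unless $f(x)=x$), and the Sard/compact-$1$-manifold argument is Milnor's proof of no-retraction. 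The one place you are slightly glib is the smoothing step: mollifying $r$ and renormalizing radially produces a smooth map into $S^{n-1}$ that is only \emph{close} to the identity on the boundary, not equal to it, so you must either interpolate on a collar (and check the interpolated map still avoids the origin before renormalizing) or, more cleanly, follow the usual route of first proving the \emph{smooth} no-retraction theorem, deducing the smooth Brouwer theorem, and then handling continuous $f$ by uniform approximation with smooth maps and extracting a convergent subsequence of approximate fixed points. This is a standard, repairable technicality rather than a genuine gap; with that adjustment the proof is complete. The Sperner's-lemma alternative you mention would also work and is the more elementary (analysis-free) route, at the cost of a combinatorial argument of comparable length.
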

	
	Before applying Theorem \ref{thm:Brouwer} to derive sufficient condition for the existence of fixed points for power flow equations \eqref{eq:pfu}, we provide sufficient condition on the existence of positive $r$ such that $\mathcal{D}(r)$ is an invariant set for \eqref{eq:pfu} in Lemma \ref{thm:invariant} below.
	
	\begin{lem} \label{thm:invariant}
		Given load power $\mb{S}_L^0$ and power flow solution $\mb{v}_L^0$ satisfying \eqref{eq:pfv0} and denote
		\begin{equation} \label{eq:Delta}
		\Delta := \left( 1 - \gamma(\mb{S}_L, \mb{\sigma}_L) \right)^2 - 4 \xi^2(\mb{S}_L) \eta^2(\mb{\sigma}_L),
		\end{equation}
		then the power flow mapping $\mb{u} = F(\mb{u})$ defined in \eqref{eq:pfu} with load power $\mb{S}_L = \mb{S}_L^0 + \mb{\sigma}_L$ maps the closure of $\mathcal{D}(r)$ to itself, that is, $F(\bar{\mathcal{D}}(r)) \subseteq \mathcal{D}(r)$, for
		\begin{equation} \label{eq:rbound}
		\begin{cases}
		r \in \left( \sqrt{ \frac{1 - \gamma(\mb{S}_L, \mb{\sigma}_L) - \sqrt{\Delta}}{2 \xi(\mb{S}_L)^2} }, \sqrt{ \frac{1 - \gamma(\mb{S}_L, \mb{\sigma}_L) + \sqrt{\Delta}} {2 \xi(\mb{S}_L)^2} } \right), & \xi(\mb{S}_L) > 0 \\
		r > 0, & \xi(\mb{S}_L) = 0,
		\end{cases}
		\end{equation} 
		when the following two conditions hold
		\begin{subequations} \label{eq:cond}
			\begin{align}
			\gamma(\mb{S}_L, \mb{\sigma}_L) + 2 \xi(\mb{S}_L) \eta(\mb{\sigma}_L) < 1 \label{eq:cond_bl}, \\
			\xi(\mb{S}_L) - \eta(\mb{\sigma}_L) \le 1 \label{eq:condb}.
			\end{align}
		\end{subequations}
	\end{lem}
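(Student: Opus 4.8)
The plan is to verify the defining inequalities of $\mathcal{D}(r)$ in \eqref{eq:Ir} coordinatewise for the image $F(\mb{u})$, reducing the whole claim to a single scalar estimate on each factor disc. First I would compute, for $\mb{u}\in\bar{\mathcal{D}}(r)$ and $i\in\mathcal{N}_L$, the deviation of $F_i(\mb{u})$ from the disc center $1-\eta_i(\mb{\sigma}_L)$. Using $\tilde{z}_i^\top\mb{\sigma}_L^*=\eta_i(\mb{\sigma}_L)$ in \eqref{eq:pfu}, the first two terms of $F_i$ cancel against the center, leaving
\[
1-\eta_i(\mb{\sigma}_L)-F_i(\mb{u})=-\sum_{j=1}^{n}\tilde{z}_{ij}\,\mb{S}_{Lj}^{*}\,\frac{u_j^{*}-1}{u_j^{*}}.
\]
Taking moduli and using $|u_j^{*}-1|=|u_j-1|$, the triangle inequality gives $|1-\eta_i(\mb{\sigma}_L)-F_i(\mb{u})|\le\sum_j|\tilde{z}_{ij}\mb{S}_{Lj}^{*}|\,\bigl|\tfrac{u_j-1}{u_j}\bigr|$. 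Since $\xi_i(\mb{S}_L)=\sum_j|\tilde{z}_{ij}\mb{S}_{Lj}^{*}|$, both the target bound $<r\,\xi_i(\mb{S}_L)$ and (when $\xi_i=0$) the exact identity $F_i=1-\eta_i$ follow once I establish the key per-coordinate estimate $\bigl|\tfrac{u_j-1}{u_j}\bigr|\le r$ for every $u_j\in\bar{\mathbb{D}}(1-\eta_j(\mb{\sigma}_L);r\,\xi_j(\mb{S}_L))$, with strictness on at least one active term.

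The heart of the proof is this scalar estimate, which I would treat geometrically. For $r<1$ the sublevel set $\{w:|\tfrac{w-1}{w}|\le r\}$ is exactly the disc $\bar{\mathbb{D}}\bigl(\tfrac{1}{1-r^{2}};\tfrac{r}{1-r^{2}}\bigr)$ (complete the square in $|w-1|^{2}\le r^{2}|w|^{2}$), so the estimate holds on the entire factor disc iff $\bar{\mathbb{D}}(1-\eta_j;r\,\xi_j)$ is contained in it. The containment criterion ``distance of centers plus inner radius $\le$ outer radius'' splits into two requirements: squaring the distance inequality and simplifying with $\gamma_j$ from \eqref{eq:gamma} collapses everything into the quadratic in $r^{2}$
\[
\xi_j^{2}\,r^{4}-\bigl(1-\gamma_j\bigr)r^{2}+|\eta_j|^{2}\le 0,
\]
together with the radius condition $\xi_j(1-r^{2})\le1$ needed to legitimize the squaring. (For $r\ge1$ the sublevel set is instead the exterior of a disc, and the identical algebra reproduces the same quadratic as a disjointness condition, with the radius condition no longer required.)

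I would then pass from the per-coordinate quadratic to the stated hypotheses by domination. Because $\xi_j\le\xi$, $|\eta_j|\le\eta$, and $\gamma_j\le\gamma$, the left side above is dominated for every $r\ge0$ by
\[
\xi^{2}\rho^{2}-(1-\gamma)\rho+\eta^{2},\qquad \rho=r^{2},
\]
so it suffices that $r^{2}$ lie between the roots of this maxima-quadratic, which is precisely the interval \eqref{eq:rbound} with $\Delta$ as in \eqref{eq:Delta}. Nonemptiness of that interval—real, positive roots—is equivalent to $\Delta>0$ together with $\gamma<1$, i.e. to the discriminant condition \eqref{eq:cond_bl}. The remaining radius requirement $\xi(1-r^{2})\le1$, binding at the left (smallest $r$) endpoint of \eqref{eq:rbound}, is exactly where \eqref{eq:condb} enters: it guarantees the inner voltage disc never exceeds the outer disc, so that genuine containment—not merely the squared inequality—holds throughout the interval.

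Finally I would collect strictness and the degenerate coordinates. Choosing $r$ in the \emph{open} interval makes the maxima-quadratic strictly negative, hence each $\bar{\mathbb{D}}(1-\eta_j;r\,\xi_j)$ lies in the open sublevel set and $\bigl|\tfrac{u_j-1}{u_j}\bigr|<r$ there; summing against the nonnegative weights $|\tilde{z}_{ij}\mb{S}_{Lj}^{*}|$ yields the strict inequality $|1-\eta_i-F_i(\mb{u})|<r\,\xi_i$ whenever $\xi_i>0$, while $\xi_i=0$ forces every $\tilde{z}_{ij}\mb{S}_{Lj}^{*}=0$ and hence $F_i(\mb{u})=1-\eta_i$ exactly. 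Together these place $F(\mb{u})$ in $\mathcal{D}(r)$, giving $F(\bar{\mathcal{D}}(r))\subseteq\mathcal{D}(r)$. I expect the main obstacle to be the scalar estimate of the second paragraph: correctly reducing the Möbius image of a disc to the quadratic and recognizing that \eqref{eq:cond_bl} and \eqref{eq:condb} encode, respectively, solvability of that quadratic (the distance side) and the containment/radius side condition.
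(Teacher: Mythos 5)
Your proposal follows essentially the same route as the paper's proof: reduce invariance to the per-coordinate containment $\bar{\mathbb{D}}(1-\eta_i;\,r\xi_i)\subseteq\{w:|(w-1)/w|<r\}$ via the triangle inequality, convert that M\"obius containment into the quadratic $\xi_i^2r^4-(1-\gamma_i)r^2+|\eta_i|^2<0$ together with the auxiliary radius condition $\xi_i(1-r^2)<1$, dominate by the maxima $\xi,\eta,\gamma$, and read off \eqref{eq:cond_bl} as the discriminant condition and \eqref{eq:condb} as what rescues the radius condition. The only place you are looser than the paper is in asserting rather than verifying that \eqref{eq:condb} keeps $\xi(1-r^2)<1$ across the whole interval \eqref{eq:rbound}; the paper's own proof checks this only at the single point $r^2=(1-\gamma)/(2\xi^2)$, so the two arguments are on equal footing there.
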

	\begin{proof}
		When $\xi_i(\mb{S}_L) = 0$, $\mb{S}_L = \mb{0}$ since $\tilde{\mb{Z}}$ is full rank, so the $i$th dimension of $\mathcal{D}(r)$ degenerates to a point $u_i = 1 - \tilde{z}_i^\top \mb{\sigma}_L^*$ and $F_i(\mb{u}) = 1 - \tilde{z}_i^\top \mb{\sigma}_L^*$ for all $\mb{u} \in \{\mb{u} \in \mathbb{C}^n \mid u_i \ne 0, i \in \mathcal{N}_L \}$, so the lemma trivially holds for these dimensions. Hence, we may assume for the rest of the proof that $\xi_i(\mb{S}_L) \ne 0$ for all $i \in \mathcal{N}_L$.
		
		We know that for $z \in \mathbb{C}$ and $r \in \mathbb{R}_{++}$, the set of $z$ characterized by the inequality $|z^* - 1| / |z^*| < r$ can be a ball ($r < 1$), a half plane ($r = 1$), or the complement of a closed ball ($r > 1$):
		\begin{equation} \label{eq:ineq_r}
		\left\{ z \in \mathbb{C} :\; \left| \frac{z^*-1}{z^*} \right| < r \right\} =
		\begin{cases}
		\left\{ z \in \mathbb{C} :\; \left| z - \frac{1}{1-r^2} \right| < \frac{r}{1-r^2} \right\}, & r < 1, \\
		\left\{ z \in \mathbb{C} :\; \re(z) > 1/2 \right\}, & r = 1, \\
		\left\{ z \in \mathbb{C} :\; \left| z + \frac{1}{r^2-1} \right| > \frac{r}{r^2-1} \right\}, & r > 1.
		\end{cases}
		\end{equation}
		Define the $n$-dimensional analogy of the set \eqref{eq:ineq_r} as $\mathcal{U}(r) := \left\{ \mb{z} \in \mathbb{C}^n :\; \left| z_i^*-1\right| / \left|z_i^*\right| < r \right\}$. Assume that each diagonal entry $(u_i^*-1)/u_i^*$ of the diagonal matrix $\mathbf{I} - \diag^{-1}(\mb{u}^*)$ in \eqref{eq:pfu} lies between $(-r, r)$ for some $r > 0$, then we have $1 - \tilde z_i^\top \mb{\sigma}_L^* + \tilde z_i^\top (\mathbf{I} - \diag^{-1}(\mb{u}^*))\mb{S}_L^* \subseteq \mathcal{D}(r)$, which means $F( \mathcal{U}(r) ) \subseteq \mathcal{D}(r)$. Therefore, to show $F(\bar{\mathcal{D}}(r)) \subseteq \mathcal{D}(r)$ for some $r > 0$, we simply need to show $\bar{\mathcal{D}}(r) \subseteq \mathcal{U}(r)$ for the given $r$. Based on \eqref{eq:ineq_r}, we discuss in three distinct cases depending on whether $r < 1$, $r > 1$, or $r = 1$.
		
		First, we consider the case when $r < 1$. Based on \eqref{eq:ineq_r}, the condition $\bar{\mathcal{D}}(r) \subseteq \mathcal{U}(r)$ simply indicates that for each $i$, the closed ball centered at $1-\eta_i(\mb{\sigma}_L)$ with radius $r\xi_i(\mb{S}_L)$ is contained in the open ball centered at $(1/(1-r^2), 0)$ with radius $r/(1-r^2)$, which is equivalent to the condition that the distance between the two centers is less than the difference of their radii: 
		\begin{equation} \label{eq:ineq6}
		\left|1 - \eta_i(\mb{\sigma}_L) - \frac{1}{1-r^2}\right| < \frac{r}{1-r^2} - r\xi_i(\mb{S}_L), \qquad \forall i.
		\end{equation}
		
		Multiply $(1 - r^2)$ and square both sides of the inequality, and note the right hand side has to be positive, we obtain the following equivalent representation of \eqref{eq:ineq6}:
		\begin{subequations} \label{eq:multsq}
			\begin{align}
			\xi_i(\mb{S}_L)^2r^4 + (\gamma_i(\mb{S}_L, \mb{\sigma}_L) - 1)r^2 + |\eta_i(\mb{\sigma}_L)|^2 &< 0, \qquad \forall i, \label{eq:multsqa} \\
			\xi_i(\mb{S}_L)(1 - r^2) - 1 &< 0, \qquad \forall i. \label{eq:multsqb}
			\end{align}	
		\end{subequations}
		
		Next, we consider the case when $r > 1$. The argument is very similar: based on \eqref{eq:ineq_r}, the condition $\bar{\mathcal{D}}(r) \subseteq \mathcal{U}(r)$ indicates that for each $i$, the closed ball centered at $1-\eta_i(\mb{\sigma}_L)$ with radius $r\xi_i(\mb{S}_L)$ lies outside the open ball centered at $(1/(1-r^2), 0)$ with radius $r/(r^2-1)$, which is equivalent to the condition that the distance between the two centers is greater than the sum of their radii: 
		\begin{equation}
		\left|1 - \eta_i(\mb{\sigma}_L) + \frac{1}{r^2-1}\right| > \frac{r}{r^2-1} + r\xi_i(\mb{S}_L), \qquad \forall i,
		\end{equation} 
		After simplifications, we obtain the same inequality as \eqref{eq:multsqa} (note there is no counterpart for \eqref{eq:multsqb} since $r/(r^2-1) + r\xi_i(\mb{S}_L)$ is always positive).
		
		Lastly, when $r = 1$, the condition $\bar{\mathcal{D}}(r) \subseteq \mathcal{U}(r)$ is satisfied when the closed ball centered at $1-\eta_i(\mb{\sigma}_L)$ with radius $\xi_i(\mb{S}_L)$ lies in the half plane $\{ u \in \mathbb{C} \mid \re(u) > 1/2 \}$ for every $i$, which is
		\begin{equation}
		\xi_i(\mb{S}_L) + \re(\eta_i(\mb{\sigma}_L)) < 1/2, \qquad \forall i.
		\end{equation}
		However, it is easy to verify that this condition is identical to \eqref{eq:multsqa} when $r = 1$.
		
		In summary, we have shown that there exists $r > 0$ such that $\bar{\mathcal{D}}(r) \subseteq \mathcal{U}(r)$ if and only if there exists $r > 0$ such that \eqref{eq:multsq} holds (note that \eqref{eq:multsqb} always holds when $r \ge 1$). Since $\xi(\mb{S}_L)$, $\eta(\mb{\sigma}_L)$, and $\gamma(\mb{S}_L, \mb{\sigma}_L)$ are the maxima of the corresponding quantities over all load bus $i$, \eqref{eq:multsqa} is implied by
		\begin{equation} \label{eq:b10}
		\xi(\mb{S}_L)^2r^4 + (\gamma(\mb{S}_L, \mb{\sigma}_L) - 1)r^2 + \eta(\mb{\sigma}_L)^2 < 0.
		\end{equation}
		Therefore, to prove the lemma, we only need to show \eqref{eq:cond} implies \eqref{eq:b10} and \eqref{eq:multsqb} for some $r > 0$. Condition \eqref{eq:b10} is a quadratic inequality in $r^2$ and it can be easily checked that condition \eqref{eq:cond_bl} implies $r^2 = \frac{1}{2} \left( 1-\gamma(\mb{S}_L, \mb{\sigma}_L) \right) / \xi(\mb{S}_L)^2 > 0$ satisfies \eqref{eq:b10}. When $\left( 1 - \gamma(\mb{S}_L, \mb{\sigma}_L) \right) / \xi(\mb{S}_L)^2 \ge 2$, \eqref{eq:multsqb} always holds. Otherwise it is implied by the following inequality:
		\begin{equation} \label{eq:b11}
		\xi(\mb{S}_L) \left(1 - \frac{1-\gamma(\mb{S}_L, \mb{\sigma}_L)}{2\xi(\mb{S}_L)^2}\right) - 1 < 0
		\end{equation}
		since we replace each $\xi_i(\mb{S}_L)$ in \eqref{eq:multsqb} by $\xi(\mb{S}_L)$. Condition \eqref{eq:b11} can be rewritten as $\gamma(\mb{S}_L, \mb{\sigma}_L) + 2\xi(\mb{S}_L)\left( \xi(\mb{S}_L) - 1 \right) < 1$, which is implied by \eqref{eq:cond}. We have thus shown condition \eqref{eq:cond} implies $\bar{\mathcal{D}}(r) \subseteq \mathcal{U}(r)$, which then implies $F(\bar{\mathcal{D}}(r)) \subseteq \mathcal{D}(r)$ as we have mentioned above. In addition, the bounds on $r$ in \eqref{eq:rbound} are simply the square roots of the solutions to the quadratic equation corresponding to \eqref{eq:b10}.
	\end{proof}

	Combining Brouwer fixed point theorem (Theorem \ref{thm:Brouwer}) and Lemma \ref{thm:invariant}, we arrive at the main result of this section --- existence of power flow solutions:
	
	\begin{thm}[Existence of power flow solutions] \label{thm:exist}
		Given load power $\mb{S}_L^0$ and power flow solution $\mb{v}_L^0$ satisfying \eqref{eq:pfv0} and define $\Delta$ as in \eqref{eq:Delta},
		then the power flow equation $\mb{u} = F(\mb{u})$ defined in \eqref{eq:pfu} with load power $\mb{S}_L = \mb{S}_L^0 + \mb{\sigma}_L$ admits at least one solution in $\bar{\mathcal{D}}(r)$ where
		\begin{equation} \label{eq:rdef}
		\begin{cases} 
		r = \sqrt{ \frac{1 - \gamma(\mb{S}_L, \mb{\sigma}_L) - \sqrt{\Delta}} {2 \xi(\mb{S}_L)^2} }, & \xi(\mb{S}_L) > 0,  \\ 
		r > 0, & \xi(\mb{S}_L) = 0,
		\end{cases}
		\end{equation}
		when condition \eqref{eq:cond} holds.
	\end{thm}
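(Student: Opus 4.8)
The plan is to obtain the fixed point by combining the Brouwer fixed point theorem (Theorem \ref{thm:Brouwer}) with the invariance already established in Lemma \ref{thm:invariant}: the lemma supplies an invariant set, and Brouwer then forces a fixed point inside it. I would split into the degenerate case $\xi(\mb{S}_L) = 0$, handled directly, and the main case $\xi(\mb{S}_L) > 0$, where $\bar{\mathcal{D}}(r)$ is a genuine closed polydisc to which Brouwer applies.

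First I would dispose of the degenerate case. When $\xi(\mb{S}_L) = 0$ every $\xi_i(\mb{S}_L)$ vanishes, so $\mb{S}_L = \mb{0}$ because $\tilde{\mb{Z}}$ is nonsingular; then $F(\mb{u}) = \mb{1} - \tilde{\mb{Z}}\mb{\sigma}_L^*$ is constant, with $i$th entry $1 - \tilde z_i^\top \mb{\sigma}_L^* = 1 - \eta_i(\mb{\sigma}_L)$. This constant is exactly the point to which $\mathcal{D}(r)$ degenerates in each coordinate, so it is automatically a fixed point lying in $\bar{\mathcal{D}}(r)$ for every $r > 0$, and the statement holds trivially.

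For $\xi(\mb{S}_L) > 0$ I would identify $\mathbb{C}^n$ with $E^{2n}$ in the usual way, under which $\bar{\mathcal{D}}(\ubar{r})$ — a product of $n$ closed disks of radii $\ubar{r}\,\xi_i(\mb{S}_L)$ — is compact and convex, meeting the domain hypotheses of Theorem \ref{thm:Brouwer}. What remains is continuity of $F$ on $\bar{\mathcal{D}}(\ubar{r})$ and the self-map property $F(\bar{\mathcal{D}}(\ubar{r})) \subseteq \bar{\mathcal{D}}(\ubar{r})$. The hard part is that Lemma \ref{thm:invariant} only guarantees $F(\bar{\mathcal{D}}(r)) \subseteq \mathcal{D}(r)$ for $r$ in the \emph{open} interval $(\ubar{r}, \bar{r})$, whereas the theorem asks for a fixed point in the closed disc of the \emph{smallest} admissible radius $\ubar{r}$, precisely the left endpoint where the strict inequality \eqref{eq:b10} becomes an equality and the lemma's argument no longer applies directly.

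I would close this gap with a nested-intersection argument. Since the disks are concentric, $\ubar{r} < r$ gives $\bar{\mathcal{D}}(\ubar{r}) \subseteq \bar{\mathcal{D}}(r)$; hence for any $\mb{u} \in \bar{\mathcal{D}}(\ubar{r})$ and any $r \in (\ubar{r}, \bar{r})$, Lemma \ref{thm:invariant} yields $F(\mb{u}) \in \mathcal{D}(r)$. Because $\bigcap_{r \in (\ubar{r}, \bar{r})} \mathcal{D}(r) = \bar{\mathcal{D}}(\ubar{r})$ (the intersection of the open disks of radii $r\,\xi_i(\mb{S}_L)$ is the closed disk of radius $\ubar{r}\,\xi_i(\mb{S}_L)$, as $\inf_{r > \ubar{r}} r\,\xi_i(\mb{S}_L) = \ubar{r}\,\xi_i(\mb{S}_L)$), I conclude $F(\mb{u}) \in \bar{\mathcal{D}}(\ubar{r})$, that is $F(\bar{\mathcal{D}}(\ubar{r})) \subseteq \bar{\mathcal{D}}(\ubar{r})$. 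For continuity, note $F$ can fail to be continuous only where a coordinate vanishes; but $\bar{\mathcal{D}}(\ubar{r}) \subseteq \bar{\mathcal{D}}(r) \subseteq \mathcal{U}(r)$ for any $r \in (\ubar{r}, \bar{r})$ by Lemma \ref{thm:invariant}, and membership in $\mathcal{U}(r)$ forces every coordinate to be nonzero, so $F$ is continuous on $\bar{\mathcal{D}}(\ubar{r})$. With all three hypotheses verified, Theorem \ref{thm:Brouwer} then produces $\mb{u}^\star \in \bar{\mathcal{D}}(\ubar{r})$ with $\mb{u}^\star = F(\mb{u}^\star)$, which is the desired solution, the radius $\ubar{r}$ coinciding with the one in \eqref{eq:rdef}.
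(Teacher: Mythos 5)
Your proof is correct and follows the paper's own route: the degenerate case is dispatched by noting $F$ is constant there, and the main case combines the invariance from Lemma \ref{thm:invariant} with Brouwer's theorem (Theorem \ref{thm:Brouwer}). The one place you genuinely diverge is at the endpoint radius $\ubar{r}$, and your version is the better one. The paper applies Brouwer separately to $\bar{\mathcal{D}}(r)$ for every $r$ in the open interval $(\ubar{r},\bar{r})$ and then asserts a solution in $\bigcap_{r}\mathcal{D}(r)=\bar{\mathcal{D}}(\ubar{r})$; as written this needs an extra observation — the sets $\mathrm{Fix}(F)\cap\bar{\mathcal{D}}(r)$ form a nested family of nonempty compact sets, since a fixed point produced at one radius need not lie in the smaller discs — which the paper leaves unstated. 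You instead push the intersection through $F$ first, showing $F(\bar{\mathcal{D}}(\ubar{r}))\subseteq\bigcap_{r\in(\ubar{r},\bar{r})}\mathcal{D}(r)=\bar{\mathcal{D}}(\ubar{r})$, so Brouwer is invoked exactly once on the closed polydisc $\bar{\mathcal{D}}(\ubar{r})$ itself; this both simplifies the logic and closes the small gap. Your explicit verification that $F$ is continuous on $\bar{\mathcal{D}}(\ubar{r})$ (every coordinate is nonzero because $\bar{\mathcal{D}}(\ubar{r})\subseteq\bar{\mathcal{D}}(r)\subseteq\mathcal{U}(r)$) is likewise a detail the paper only implies. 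One tiny mismatch: your case split is on $\xi(\mb{S}_L)=0$ versus $\xi(\mb{S}_L)>0$, but individual coordinates can have $\xi_i(\mb{S}_L)=0$ while $\xi(\mb{S}_L)>0$, in which case $\bar{\mathcal{D}}(\ubar{r})$ is a product of discs and degenerate points rather than a full polydisc; it is still compact and convex and those coordinates are fixed by $F$, so nothing breaks, but it is worth a sentence (the paper disposes of these dimensions separately before assuming all $\xi_i>0$).
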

	\begin{proof}
		When $\xi_i(\mb{S}_L) = 0$, the $i$th dimension of $\mathcal{D}(r)$ degenerates to a point $u_i = 1 - \tilde{z}_i^\top \mb{\sigma}_L^*$ and $F_i(\mb{u}) = 1 - \tilde{z}_i^\top \mb{\sigma}_L^*$ for all $\{\mb{u} \in \mathbb{C}^n \mid u_i \ne 0\}$, so the theorem trivially holds for these dimensions. Hence, we may assume for the rest of the proof that $\xi_i(\mb{S}_L) \ne 0$ for all $i$.
		
		By Lemma \ref{thm:invariant}, $\bar{\mathcal{D}}(r)$ where $r$ satisfies \eqref{eq:rbound} is a compact and convex invariant set for $F(\mb{u})$ when condition \eqref{eq:cond} holds, so we know from Theorem \ref{thm:Brouwer} that the power flow equation $\mb{u} = F(\mb{u})$ admits a solution in $\bar{\mathcal{D}}(r)$. Since $F$ maps $\bar{\mathcal{D}}(r)$ to $\mathcal{D}(r)$, the solution has to lie in $\mathcal{D}(r)$. If we denote the infimum and supremum of $r$ in \eqref{eq:rbound} by $\ubar{r}$ and $\bar{r}$, then it follows that $\mb{u} = F(\mb{u})$ admits a solution in $\bigcap_{\ubar{r}<r<\bar{r}} \mathcal{D}(r) = \bar{\mathcal{D}}(\ubar{r})$.
	\end{proof}

	The following proposition exploits some implications of Theorem \ref{thm:exist} when no power flow solutions are available \emph{a priori}, i.e., when $\mb{S}^0_L = \mb{0}$. See Main Text for further discussions.
	
	\begin{pro} \label{thm:increase}
		Given a vector of load powers $\mb{s}_L$ such that $\xi(\mb{s}_L) - \eta(\mb{s}_L) = 1$, the scalar function $f(\lambda) := \gamma(\lambda\mb{s}_L) + 2\xi(\lambda\mb{s}_L)\eta(\lambda\mb{s}_L)$ is increasing on $\lambda \in [0,1]$ and $f(1) \ge 1$.
	\end{pro}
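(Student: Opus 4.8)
The plan is to exploit the homogeneity of the stress measures in the scaling factor. Since $\mb{S}^0_L = \mb{0}$ here we have $\mb{\sigma}_L = \mb{s}_L$, and replacing $\mb{s}_L$ by $\lambda\mb{s}_L$ with $\lambda \ge 0$ scales the nodal measures linearly, $\eta_i(\lambda\mb{s}_L) = \lambda\eta_i(\mb{s}_L)$ and $\xi_i(\lambda\mb{s}_L) = \lambda\xi_i(\mb{s}_L)$, hence $\eta(\lambda\mb{s}_L) = \lambda\eta$ and $\xi(\lambda\mb{s}_L) = \lambda\xi$, where I abbreviate $\eta := \eta(\mb{s}_L)$ and $\xi := \xi(\mb{s}_L)$. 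Consequently $\gamma_i(\lambda\mb{s}_L) = 2\lambda(\xi_i + \re(\eta_i)) - \lambda^2(\xi_i^2 + |\eta_i|^2)$ is a parabola in $\lambda$ vanishing at the origin, and $2\xi(\lambda\mb{s}_L)\eta(\lambda\mb{s}_L) = 2\lambda^2\xi\eta$. I would therefore write $f(\lambda) = \max_{i} g_i(\lambda)$ with $g_i(\lambda) := \gamma_i(\lambda\mb{s}_L) + 2\lambda^2\xi\eta$.

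For the monotonicity claim the key reduction is that a pointwise maximum of functions each non-decreasing on $[0,1]$ is itself non-decreasing, so it suffices to show every $g_i$ is non-decreasing on $[0,1]$. Because $g_i'$ is affine in $\lambda$, and an affine function that is nonnegative at both endpoints of an interval is nonnegative throughout, this reduces to checking $g_i'(0) \ge 0$ and $g_i'(1) \ge 0$. The first is immediate: $g_i'(0) = 2(\xi_i + \re(\eta_i)) \ge 0$, since $\re(\eta_i) \ge -|\eta_i| \ge -\xi_i$ by the triangle-inequality definition of $\xi_i$.

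The endpoint at $\lambda = 1$ is the main obstacle, and it is exactly where the hypothesis $\xi - \eta = 1$ is used. Here $g_i'(1) = 2(\xi_i + \re(\eta_i)) + 2(2\xi\eta - \xi_i^2 - |\eta_i|^2)$, and I would substitute $\xi = 1 + \eta$ so that $2\xi\eta = 2\eta + 2\eta^2$, then bound each term over its feasible range using $|\eta_i| \le \xi_i \le \xi = 1 + \eta$ and $|\eta_i| \le \eta$: the map $t \mapsto t - t^2$ attains its minimum over $[0,1+\eta]$ at $t = 1 + \eta$, giving $\xi_i - \xi_i^2 \ge -\eta - \eta^2$, while $t \mapsto t + t^2$ attains its maximum over $[0,\eta]$ at $t = \eta$, giving $|\eta_i| + |\eta_i|^2 \le \eta + \eta^2$. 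Adding these bounds to $2\xi\eta = 2\eta + 2\eta^2$ (and using $\re(\eta_i) \ge -|\eta_i|$) yields $g_i'(1) \ge 0$, with equality attainable, so the constant $1$ in the hypothesis is precisely the tight value — a useful sanity check.

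Finally, for $f(1) \ge 1$ I would evaluate $g_i(1) = \gamma_i(\mb{s}_L) + 2\xi\eta$ at the index $k$ achieving $\xi_k = \xi$ (well defined since $\xi = 1 + \eta \ge 1 > 0$). Substituting $\xi = 1 + \eta$ and simplifying gives $g_k(1) = 1 + 2\eta + 2\re(\eta_k) + \eta^2 - |\eta_k|^2$, and bounding $\re(\eta_k) \ge -|\eta_k|$ rewrites the excess over $1$ as $(\eta - |\eta_k|)(2 + \eta + |\eta_k|) \ge 0$, since $\eta = \max_i |\eta_i| \ge |\eta_k|$. Hence $f(1) = \max_i g_i(1) \ge g_k(1) \ge 1$. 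The degenerate cases ($\eta = 0$, or some $\xi_i = 0$) reduce to constant or trivially monotone $g_i$ and do not affect the argument.
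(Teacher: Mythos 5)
Your proof is correct and takes essentially the same route as the paper's: both reduce $f$ to a pointwise maximum of per-bus quadratics in $\lambda$ and establish monotonicity of each on $[0,1]$ using $|\eta_i|\le\xi_i\le\xi$, $|\eta_i|\le\eta$, $\re(\eta_i)\ge-|\eta_i|$, and the hypothesis $\xi-\eta=1$, then evaluate at $k=\argmax_i\xi_i$ to get $f(1)\ge 1$. The only cosmetic difference is that you certify monotonicity by checking the affine derivative at the endpoints $\lambda=0$ and $\lambda=1$, whereas the paper locates the parabola's axis of symmetry via a sign-based case split on the leading coefficient; these are equivalent facts about quadratics.
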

	\begin{proof}
		To prove the proposition, we only need to show the functions $f_i(\lambda) = \gamma_i(\lambda \mb{s}_L) + 2\xi(\lambda\mb{s}_L)\eta(\lambda\mb{s}_L)$ are increasing on $\lambda \in [0, 1]$ for all $i \in \mathcal{N}_L$, as $f(\lambda)$ is simply the maximum of $f_i(\lambda)$. Since $\xi(\lambda\mb{s}_L) = \lambda \xi(\mb{s}_L)$ and $\eta(\lambda\mb{s}_L) = \lambda \eta$, we have 
		\begin{equation}
		f_i(\lambda) = - \left( \xi_i(\mb{s}_L)^2 + |\eta_i(\mb{s}_L)|^2 - 2\xi(\mb{s}_L)\eta(\mb{s}_L) \right)\lambda^2 + 2\left( \xi_i(\mb{s}_L) + \re(\eta_i(\mb{s}_L)) \right)\lambda.
		\end{equation}
		There are three cases to consider depending on whether $\xi_i(\mb{s}_L)^2 + |\eta_i(\mb{s}_L)|^2 - 2\xi(\mb{s}_L)\eta(\mb{s}_L)$ is equal to zero, less than zero, or greater than zero. For notational simplicity, we make the dependence of $\mb{s}_L$ implicit in all functions in the remainder of the proof.
		
		When $\xi_i^2 + |\eta_i|^2 - 2\xi\eta = 0$, $f_i(\lambda) = 2(\xi_i + \re(\eta_i))\lambda$ is increasing since $2(\xi_i + \re(\eta_i)) \ge 0$. When $\xi_i^2 + |\eta_i|^2 - 2\xi\eta < 0$, the axis of symmetry of the parabola $f_i(\lambda) = 0$ is less than or equal to zero and consequently $f_i(\lambda)$ is increasing for $\lambda \ge 0$. When $\xi_i^2 + |\eta_i|^2 - 2\xi\eta > 0$, we need to show the axis of symmetry $\lambda_{\mathrm{as}}$ of $f_i(\lambda) = 0$ is greater than or equal to 1. This is indeed the case as
		\begin{equation} \label{eq:b16}
		\lambda_{\mathrm{as}} = 
		\frac{\xi_i + \re(\eta_i)}{\xi_i^2 + |\eta_i|^2 - 2\xi\eta} 
		\ge \frac{\xi_i - |\eta_i|}{\xi_i\xi + |\eta_i|\eta - \xi_i\eta - |\eta_i|\xi}
		= \frac{\xi_i - |\eta_i|}{(\xi_i - |\eta_i|)(\xi - \eta)} = 1,
		\end{equation}
		as desired. To show $f(1) \ge 1$, let $k = \argmax \xi_i$, then it follows from \eqref{eq:b16} that $\xi_k + \re(\eta_k) \ge \xi_k^2 + |\eta_k|^2 - 2\xi\eta$. Move all terms to the left and add $\xi_k + \re(\eta_k)$ on both sides, we obtain $\gamma_k + 2\xi\eta \ge \xi_k + \re(\eta_k) = \xi + \re(\eta_k) \ge 1$, which implies $f(1) = \gamma + 2\xi\eta \ge 1$.
	\end{proof}

	\section{Uniqueness of Power Flow Solutions and Convergence of Power Flow Iteration} \label{sect:uniqueness}
	
	We show in this section the uniqueness of high-voltage power flow equation. Specifically, in the first subsection, we show the general results on uniqueness of solution to fixed point equations and convergence of the fixed point iteration. The results are then applied to the power flow equations in the second subsection.
	
	\subsection{General theory of uniqueness of fixed point in polydisc and convergence of fixed point iteration} \label{sect:unique}
	
	As opposed to previous approaches \cite{Bolognani16,Simpson-Porco16,WangC16,Simpson17b} which rely on the contraction properties of the form $\|f'(x)\| < 1$ for the power flow equations, we take an alternative route. As noted in \cite{Henrici74}, by making a more efficient use of properties of holomorphic functions, the uniqueness of fixed point can be proved without making explicit contraction conditions on the boundedness of $\|f'(x)\|$.
	
	We have the following standard result in complex analysis:
	\begin{thm}[\hspace{1sp}{\cite[Thm. 6.12a]{Henrici74}}] \label{thm:Henrici}
		Let $f$ be holomorphic in a simply connected region $S \subseteq \mathbb{C}$ and continuous on the closure $\bar{S}$ of $S$, and let $\bar{f}(S)$ be a bounded set contained in $S$. Then $f$ has exactly one fixed point.
	\end{thm}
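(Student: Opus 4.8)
The plan is to prove existence and uniqueness simultaneously by exhibiting $f$ as a \emph{strict} contraction of the Poincaré (hyperbolic) metric of $S$, exploiting the crucial hypothesis that $\overline{f(S)}$ is a \emph{compact} subset of $S$ (it is bounded and closed in $\mathbb{C}$, hence compact, and contained in $S$). First I would dispose of the degenerate case $S=\mathbb{C}$: then $f$ is entire with bounded image, so $f$ is constant by Liouville's theorem and its unique fixed point is that constant value. Henceforth $S\subsetneq\mathbb{C}$ is a nonempty simply connected domain, so the Riemann mapping theorem furnishes a conformal bijection $\varphi:S\to\mathbb{D}$. Since $f(S)\subseteq\overline{f(S)}\subseteq S$, the map $f$ is a holomorphic self-map of $S$, and $F:=\varphi\circ f\circ\varphi^{-1}$ is a holomorphic self-map of $\mathbb{D}$ whose image $F(\mathbb{D})=\varphi(f(S))$ lies in the compact set $\varphi(\overline{f(S)})\subseteq\mathbb{D}$. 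A compact subset of $\mathbb{D}$ is contained in some concentric disc $\{|w|\le s\}$ with $s<1$, so $F(\mathbb{D})\subseteq\{|w|\le s\}$. Fixed points of $f$ in $S$ correspond bijectively to fixed points of $F$ in $\mathbb{D}$, so it suffices to treat $F$.

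The heart of the argument is the contraction estimate. Let $\rho$ denote the hyperbolic distance on $\mathbb{D}$. I would factor $F=m_s\circ h$, where $m_s(u)=su$ and $h:=F/s$; the bound $|F|\le s$ gives $|h|\le 1$, so $h$ is a holomorphic self-map of $\mathbb{D}$ (with interior image unless $F$ is constant, in which case the claim is trivial). The Schwarz--Pick lemma then gives $\rho(h(z_1),h(z_2))\le\rho(z_1,z_2)$. For the outer factor I would compute the hyperbolic derivative of $m_s$: writing the density as $\lambda(w)=(1-|w|^2)^{-1}$,
\begin{equation}
\frac{\lambda(su)\,|m_s'(u)|}{\lambda(u)}=s\,\frac{1-|u|^2}{1-s^2|u|^2}\le s<1,\qquad u\in\mathbb{D},
\end{equation}
so $m_s$ contracts the hyperbolic metric globally by the factor $s$, i.e.\ $\rho(m_s(a),m_s(b))\le s\,\rho(a,b)$. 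Composing the two estimates yields $\rho(F(z_1),F(z_2))\le s\,\rho(z_1,z_2)$ for all $z_1,z_2\in\mathbb{D}$, so $F$ is a strict contraction with constant $s<1$.

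With the contraction in hand the conclusion is routine. Uniqueness is immediate: two fixed points $\hat z_1,\hat z_2$ satisfy $\rho(\hat z_1,\hat z_2)=\rho(F(\hat z_1),F(\hat z_2))\le s\,\rho(\hat z_1,\hat z_2)$, which forces $\rho(\hat z_1,\hat z_2)=0$. For existence I would iterate from any $z_0\in\mathbb{D}$, setting $z_{n+1}=F(z_n)$; every $z_n$ with $n\ge 1$ lies in the compact set $\{|w|\le s\}$, and $\rho(z_{n+1},z_{n+2})\le s^{\,n}\rho(z_1,z_2)$, so $\sum_n\rho(z_n,z_{n+1})<\infty$ and $(z_n)$ is Cauchy in $\rho$. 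On the compact set $\{|w|\le s\}$ the hyperbolic and Euclidean metrics are comparable, so $(z_n)$ converges to a limit $\hat z$ in that set, and continuity of $F$ gives $F(\hat z)=\hat z$. Pulling back through $\varphi^{-1}$ produces the unique fixed point of $f$ in $S$. I expect the main obstacle to be the quantitative contraction step, namely converting the qualitative hypothesis that $\overline{f(S)}$ is compactly contained in $S$ into an explicit factor $s<1$; the factorization $F=m_s\circ h$ together with the density computation is what makes this step clean, while everything else (the Riemann map, Schwarz--Pick, and the Cauchy/limit argument) is standard.
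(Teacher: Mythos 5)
Your proof is correct, but it follows a genuinely different route from the one the paper (following Henrici) takes. The paper's argument is analytic--topological: after normalizing to the unit disc it sets $\mu := \sup_{z}|f(z)| < 1$ and applies Rouch\'e's theorem on a circle $|z|=\rho$ with $\mu<\rho<1$, comparing the identity $I(z)=z$ with $-f(z)$ to conclude that $z-f(z)$ has exactly as many zeros as $z$, namely one; uniqueness and existence come out simultaneously as a zero count, and this is precisely the step the paper later generalizes to $\mathbb{C}^n$ via the generalized Rouch\'e theorem of Lloyd (the convergence of the iteration is then handled separately with M\"obius maps and Schwarz's lemma). You instead make the map a \emph{strict} contraction of the hyperbolic metric: after the Riemann map reduces $S$ to $\mathbb{D}$ (and Liouville disposes of $S=\mathbb{C}$), the factorization $F=m_s\circ h$ with $h=F/s$ combines Schwarz--Pick for $h$ with the explicit hyperbolic derivative bound $s(1-|u|^2)/(1-s^2|u|^2)\le s<1$ for $m_s$, giving $\rho(F(z_1),F(z_2))\le s\,\rho(z_1,z_2)$; all the steps check out, including the passage from hyperbolic-Cauchy to Euclidean convergence on the compact set $\{|w|\le s\}$ (where $\lambda\ge 1$ makes $\rho$ dominate the Euclidean distance). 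What each approach buys: yours delivers existence, uniqueness, \emph{and} geometric convergence of the fixed-point iteration in a single Banach-style stroke, and avoids the argument principle entirely; the paper's Rouch\'e route is the one that transports cleanly to $\mathbb{C}^n$ (where the paper needs it for Theorem \ref{thm:ndimHenrici}), since the multidimensional zero-counting machinery is available off the shelf, whereas your route would require a polydisc version of Schwarz--Pick (e.g.\ the Kobayashi metric) to generalize. Both proofs invoke the Riemann mapping theorem to handle a general simply connected $S$, so neither is more elementary on that score.
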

	
	The uniqueness of fixed point is a direct consequence of Rouch\'e's theorem. To show the uniqueness of fixed point of the power flow equations, we generalize Theorem \ref{thm:Henrici} to functions defined on subsets of $\mathbb{C}^n$ by noting the following generalized Rouch\'e's theorem:
	\begin{thm}[Generalized Rouch\'e's theorem \cite{Lloyd79}] \label{thm:ndimRouche}
		Let $D$ be a bounded, open subset of $\mathbb{C}^n$ and suppose that $f, g$ are continuous functions of $\bar{D}$ into $\mathbb{C}^n$ that are holomorphic in $D$ such that
		\begin{equation}
		|g(\mb{z})| < |f(\mb{z})| \qquad \mb{z} \in \partial D
		\end{equation}
		for some norm $|\cdot|$. Then $f$ has finitely many zeros in $D$, and counting multiplicity, $f$ and $f + g$ have the same number of zeros in $D$.
	\end{thm}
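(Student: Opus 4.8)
The plan is to recognize this as the several-complex-variables argument principle and reduce it to the homotopy invariance of the topological (Brouwer) degree. Viewing $f$ and $g$ as continuous maps $\bar{D} \subseteq \mathbb{R}^{2n} \to \mathbb{R}^{2n}$, I would introduce the straight-line homotopy $h_t := f + t g$ for $t \in [0,1]$ and first check that it never vanishes on the boundary. On $\partial D$ the reverse triangle inequality gives
\begin{equation}
|h_t(\mb{z})| \ge |f(\mb{z})| - t\,|g(\mb{z})| \ge |f(\mb{z})| - |g(\mb{z})| > 0, \qquad \mb{z} \in \partial D,
\end{equation}
so $\mb{0} \notin h_t(\partial D)$ for every $t$. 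Since all norms on $\mathbb{R}^{2n}$ are equivalent, the particular norm $|\cdot|$ in the hypothesis is immaterial for this nonvanishing conclusion.

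With nonvanishing on the boundary secured, I would invoke the standard properties of the Brouwer degree. Because $\mb{0} \notin h_t(\partial D)$, the degree $d(t) := \deg(h_t, D, \mb{0})$ is well defined for each $t$; because $(t,\mb{z}) \mapsto h_t(\mb{z})$ is continuous and nowhere zero on the compact set $[0,1] \times \partial D$, homotopy invariance forces the integer-valued $d(\cdot)$ to be constant. Hence $\deg(f, D, \mb{0}) = \deg(f+g, D, \mb{0})$, which is the desired equality of zero counts \emph{provided} the degree is shown to count zeros.

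The genuinely complex-analytic step is to identify this common degree with the number of zeros counted with multiplicity. Here I would use that holomorphic maps are orientation-preserving, so that at an isolated zero $\mb{a}$ of a holomorphic $F$ the local degree of $F$ over a small ball about $\mb{a}$ equals the intersection multiplicity, which is a strictly positive integer (this can also be read off from a Bochner--Martinelli integral representation). Summing over the zeros, the global degree equals the total multiplicity. Finiteness then follows from compactness: since $f \ne \mb{0}$ on the compact boundary $\partial D$, the zero set is closed in the interior and bounded away from $\partial D$, so once the zeros are known to be isolated they cannot accumulate and there are only finitely many.

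The hard part will be precisely this last step, because a holomorphic map $\mathbb{C}^n \to \mathbb{C}^n$ can have a positive-dimensional zero set (for example $(z_1,z_2) \mapsto (z_1^2, z_1 z_2)$), for which ``number of zeros counting multiplicity'' is not literally meaningful. I would resolve this by appealing to the local structure theory of analytic sets together with stability of the degree under small perturbations: approximate $f$ by a holomorphic $f_\varepsilon$ with only nondegenerate, hence isolated, zeros while preserving the strict boundary inequality, so that $\deg(f_\varepsilon, D, \mb{0}) = \deg(f, D, \mb{0})$ and the classical multiplicity count applies. For the intended power-flow application one may alternatively restrict to maps whose zero sets are already discrete, in which case the local multiplicities are the usual ones and the clean zero-counting statement follows directly.
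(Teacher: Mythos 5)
The paper itself offers no proof of this statement: it is quoted as a known result from the cited reference of Lloyd on generalizing Rouch\'e's theorem, so there is no internal argument to compare against. Your degree-theoretic route is nevertheless the standard one (and essentially the one in that reference): the boundary estimate $|f(\mb{z}) + t g(\mb{z})| \ge |f(\mb{z})| - t|g(\mb{z})| > 0$ on $\partial D$ is valid for any norm, homotopy invariance of the Brouwer degree then yields $\deg(f, D, \mb{0}) = \deg(f+g, D, \mb{0})$, and at an isolated zero of a holomorphic map the local degree is a positive integer equal to the intersection multiplicity, so the common degree counts zeros with multiplicity.

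The one genuine gap is the finiteness claim, and the perturbation you propose does not close it: replacing $f$ by a nearby $f_\varepsilon$ with nondegenerate zeros preserves the degree but says nothing about the zero set of $f$ itself, which is what the theorem asserts is finite. The correct repair is already implicit in your boundary observation. Since $|f| > |g| \ge 0$ on $\partial D$, the zero set $Z(f)$ is a closed subset of $\bar{D}$ disjoint from $\partial D$, hence a \emph{compact} analytic subvariety of the open set $D$. A compact analytic subvariety of an open subset of $\mathbb{C}^n$ has no positive-dimensional irreducible component (restrict a coordinate function to such a component and apply the maximum principle to force it to be a point), so $Z(f)$ consists of finitely many isolated points and the multiplicity count is literally meaningful with no perturbation needed. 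In particular your worry about $(z_1, z_2) \mapsto (z_1^2, z_1 z_2)$ is moot under the hypotheses: its zero set $\{ z_1 = 0 \}$ is unbounded and connected, so it must meet $\partial D$ for any bounded $D$ it enters, which would violate $|f| > 0$ on $\partial D$. With that step supplied, your outline is complete.
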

	
	The convergence of fixed point iteration for real-valued functions defined in a complete metric space can generally be shown through Banach fixed point theorem \cite[Thm. 5.1.3]{Ortega70}.
	
	Here, we show that the fixed point iteration of the complex function defined in the last section also converges to the unique fixed point. The proof is, similar to the proof of uniqueness of fixed point in subsection \ref{sect:unique}, also an extension of the result in \cite{Henrici74} to higher dimensions. To pave the way for the proof, we first present a generalization of Schwarz's lemma in several variables:
	
	\begin{lem}[Schwarz's lemma in several variables] \label{thm:ndimSchwarz}
		Suppose $f: \mathbb{C}^n \to \mathbb{C}$ is holomorphic in a neighborhood of $\bar{\mathbb{D}}^n$, $f(\mb{0}) = 0$, further suppose for all $\mb{z} \in \mathbb{D}^n$, $|f(\mb{z})| \le M$ for some $M$, then
		\begin{equation}
		|f(\mb{z})| \le M |\mb{z}|
		\end{equation}
		for all $\mb{z} \in \bar{\mathbb{D}}^n$.
	\end{lem}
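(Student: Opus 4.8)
The plan is to reduce the several-variables statement to the classical one-variable Schwarz lemma by restricting $f$ to complex lines through the origin. The key structural fact is that the domain is the polydisc $\mathbb{D}^n$, i.e. the unit ball of the $\ell_\infty$ norm $|\cdot|$, so a line $\zeta \mapsto \zeta\mb{w}$ with $|\mb{w}| = 1$ carries the unit disk $\mathbb{D}$ into $\mathbb{D}^n$; this compatibility between the norm and the polydisc is exactly what lets the one-dimensional result transfer without loss.

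First I would dispose of the trivial case $\mb{z} = \mb{0}$, for which $f(\mb{0}) = 0$ gives the bound immediately. For a fixed $\mb{z} \in \bar{\mathbb{D}}^n \setminus \{\mb{0}\}$, I would normalize by setting $\mb{w} := \mb{z}/|\mb{z}|$ so that $|\mb{w}| = 1$, and define the single-variable function $\phi(\zeta) := f(\zeta\mb{w})$. Then I would verify three things: (i) $\phi$ is holomorphic in a neighborhood of $\bar{\mathbb{D}}$, since $\{\zeta\mb{w} : |\zeta| \le 1\} \subseteq \bar{\mathbb{D}}^n$ lies in the open set on which $f$ is holomorphic, and compactness of this line segment supplies a slightly larger disk $\{|\zeta| < 1+\epsilon\}$ on which $\phi$ is holomorphic; (ii) $\phi(0) = f(\mb{0}) = 0$; and (iii) $|\phi(\zeta)| \le M$ for $|\zeta| < 1$, because $|\zeta\mb{w}| = |\zeta|\,|\mb{w}| = |\zeta| < 1$ places $\zeta\mb{w} \in \mathbb{D}^n$, where the hypothesis $|f| \le M$ applies.

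Next I would invoke the one-variable Schwarz lemma for $\phi$. Its proof is the standard one: $g(\zeta) := \phi(\zeta)/\zeta$ extends holomorphically across the removable singularity at the origin (using $\phi(0)=0$), and on each circle $|\zeta| = \rho < 1$ the maximum modulus principle gives $|g(\zeta)| \le M/\rho$; letting $\rho \to 1$ yields $|g(\zeta)| \le M$, i.e. $|\phi(\zeta)| \le M|\zeta|$ on $\mathbb{D}$, which extends to $\bar{\mathbb{D}}$ by continuity of $\phi$. Evaluating at the real point $\zeta = |\mb{z}| \in (0,1]$ recovers $\phi(|\mb{z}|) = f(|\mb{z}|\,\mb{w}) = f(\mb{z})$ and hence $|f(\mb{z})| \le M|\mb{z}|$, as claimed; the boundary case $|\mb{z}| = 1$ is covered by the same continuity argument.

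The argument is largely a routine reduction, and I do not expect a serious obstacle. The one point that requires care --- and where the choice of the $\ell_\infty$ norm is essential --- is the normalization $\mb{w} = \mb{z}/|\mb{z}|$: scaling so that $\mb{z}$ corresponds to the disk parameter $\zeta = |\mb{z}|$ rather than to $\zeta = 1$ is precisely what makes the homogeneous factor $|\mb{z}|$ appear in the final bound, instead of the weaker estimate $|f(\mb{z})| \le M$ that the naive restriction $\zeta \mapsto \zeta\mb{z}$ would produce. A secondary technical point is ensuring holomorphy of $\phi$ up to and slightly beyond $|\zeta| = 1$, so that the one-variable lemma applies and its conclusion extends continuously to the closed disk.
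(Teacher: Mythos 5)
Your proposal is correct and follows essentially the same route as the paper: restrict $f$ to the complex line $\zeta \mapsto \zeta\mb{w}$ through the origin with $\mb{w} \in \partial\mathbb{D}^n$ (i.e. $|\mb{w}|=1$ in the $\ell_\infty$ norm), apply the classical one-variable Schwarz lemma, and recover the bound at $\mb{z}$ by taking $\mb{w} = \mb{z}/|\mb{z}|$ and $\zeta = |\mb{z}|$. Your write-up is in fact somewhat more careful than the paper's, explicitly handling $\mb{z}=\mb{0}$, the boundary case, and the holomorphy of the restriction on a neighborhood of $\bar{\mathbb{D}}$.
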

	\begin{proof}
		We define the function $g: \mathbb{D} \to U \subset \mathbb{C}$ as $g(s) = f(s\mb{w})$ where $\mb{w} \in \partial \mathbb{D}^n$, then we know $g$ is holomorphic, $g(0) = 0$ and $|g(s)| \le M$ for all $s \in \mathbb{D}$. It follows from Schwarz's lemma that 
		\begin{equation}
		|g(s)| \le M|s|, \quad \forall s \in \mathbb{D},
		\end{equation}
		or
		\begin{equation}
		|f(s\mb{w})| \le M|s| =  M | s\mb{w} |.
		\end{equation}
		Since $\mb{w}$ is arbitrary, the result is thus implied from the above inequality.
	\end{proof}
	
	In addition, we also examine the convergence rate of the power flow iteration. Solving power flow equations is the most fundamental task in power systems analysis. The fixed point iteration introduced above serves as an alternative approach to solve the power flow equations besides the most frequently used Newton-Raphson method. In this section, we discuss the rate of convergence of the fixed point iteration, which is of great practical importance concerning the applicability of the fixed point iteration in solving power flow equations. Specifically, we will show that the fixed point iteration exhibits linear convergence rate.
	
	With the above theorems, we are now ready to present the main result in this section:
	\begin{thm}[Uniqueness of fixed point in $n$-dimension] \label{thm:ndimHenrici}
		Given vectors $\mb{c} \in \mathbb{C}^n$ and $\mb{r} \in \mathbb{R}_{++}^n$, let $f : \bar{\mathbb{D}}^n(\mb{c}, \mb{r}) \to \mathbb{D}^n(\mb{c}, \mb{r})$ be a function holomorphic in $\mathbb{D}^n(\mb{c}, \mb{r})$ and continuous on the closure $\bar{\mathbb{D}}^n(\mb{c}, \mb{r})$, and $\bar{f}(\mathbb{D}^n(\mb{c}, \mb{r}))$ is contained in $\mathbb{D}^n(\mb{c}, \mb{r})$. Then $f$ has exactly one fixed point in $\mathbb{D}^n(\mb{c}, \mb{r})$. Moreover, the sequence $\{ \mb{z}^n \}$ defined as 
		\begin{equation}
		\mb{z}^{n+1} = f(\mb{z}^n), \quad n = 0, 1, 2, \ldots
		\end{equation}
		converges to the unique fixed point $\mb{w}$ given any $\mb{z}^0 \in \mathbb{D}^n(\mb{c}, \mb{r})$ in such a manner that
		\begin{equation}
		|\mb{z}^n - \mb{w}| < |\mb{r}|(1 + \mu) \left( \frac{2\mu}{1 + \mu^2} \right)^n,\quad  n = 0, 1, 2, \ldots
		\end{equation}
		for some number $0 \le \mu < 1$.
	\end{thm}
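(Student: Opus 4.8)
The plan is to prove the two assertions separately: first existence and uniqueness of the fixed point via the generalized Rouché theorem (Theorem \ref{thm:ndimRouche}), then geometric convergence of the iteration via the several-variable Schwarz lemma (Lemma \ref{thm:ndimSchwarz}), mirroring the one-dimensional argument behind Theorem \ref{thm:Henrici} but replacing each scalar step by its polydisc analogue. For existence and uniqueness I would work on $D := \mathbb{D}^n(\mb{c}, \mb{r})$ equipped with the weighted norm $\|\mb{z}\|_* := \max_i |z_i|/r_i$, so that $D$ is exactly the open unit ball of $\|\cdot\|_*$ and $\partial D = \{\mb{z} : \max_i |z_i - c_i|/r_i = 1\}$. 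Set $F(\mb{z}) := \mb{z} - \mb{c}$ and $G(\mb{z}) := \mb{c} - f(\mb{z})$, both continuous on $\bar{D}$ and holomorphic in $D$, so that the zeros of $F + G = \mb{z} - f(\mb{z})$ are precisely the fixed points of $f$. On $\partial D$ one has $\|F(\mb{z})\|_* = 1$, while $f(\bar D) \subseteq D$ forces $|f_i(\mb{z}) - c_i| < r_i$ for every $i$, hence $\|G(\mb{z})\|_* < 1 = \|F(\mb{z})\|_*$. Theorem \ref{thm:ndimRouche} then gives that $F + G$ and $F$ have the same finite number of zeros in $D$; since $F(\mb{z}) = \mb{z} - \mb{c}$ has the single simple zero $\mb{c}$, the map $f$ has exactly one fixed point $\mb{w} \in D$.

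For the convergence estimate I would normalize, writing $\zeta_i = (z_i - c_i)/r_i$ and $g(\mb{\zeta}) := \diag^{-1}(\mb{r})(f(\mb{c} + \diag(\mb{r})\mb{\zeta}) - \mb{c})$, so that $g : \bar{\mathbb{D}}^n \to \mathbb{D}^n$, the fixed point becomes $\mb{\omega} := \diag^{-1}(\mb{r})(\mb{w} - \mb{c})$, and the hypothesis $\bar{f}(\mathbb{D}^n(\mb{c},\mb{r})) \subseteq \mathbb{D}^n(\mb{c},\mb{r})$ together with continuity on the compact $\bar{\mathbb{D}}^n$ yields a uniform bound $\mu := \max_i \sup_{\mb{\zeta} \in \bar{\mathbb{D}}^n} |g_i(\mb{\zeta})| < 1$; in particular $|\omega_i| \le \mu$ for all $i$. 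The goal is then to show that $g$ contracts, by the factor $q := 2\mu/(1+\mu^2)$, the aggregate quantity $P(\mb{\zeta}) := \max_i \rho(\zeta_i, \omega_i)$, where $\rho(a,b) := |a - b|/|1 - \bar{b}a|$ is the scalar pseudohyperbolic distance. The number $q$ is exactly the pseudohyperbolic diameter of the disk $\{|s| \le \mu\}$, attained at the antipodal pair $\pm\mu$, which is where the precise form of the claimed rate originates.

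The heart of the argument, and where the several-variable content genuinely enters, is the per-step contraction $P(g(\mb{\zeta})) \le q\, P(\mb{\zeta})$. For each component $j$ I would form $\sigma_j(\mb{\zeta}) := (g_j(\mb{\zeta}) - \omega_j)/(1 - \bar{\omega}_j g_j(\mb{\zeta}))$, which is holomorphic since the denominator never vanishes ($|\bar\omega_j g_j| \le \mu^2 < 1$), satisfies $\sigma_j(\mb{\omega}) = 0$ because $g_j(\mb{\omega}) = \omega_j$, and obeys $|\sigma_j(\mb{\zeta})| = \rho(g_j(\mb{\zeta}), \omega_j) \le q$ because both $g_j(\mb{\zeta})$ and $\omega_j$ lie in $\{|s| \le \mu\}$. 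Composing with the coordinatewise polydisc automorphism $\Phi(\mb{\zeta}) := \left( (\zeta_i - \omega_i)/(1 - \bar\omega_i \zeta_i) \right)_{i}$, which sends $\mb{\omega} \mapsto \mb{0}$, the scalar function $\tau_j := \sigma_j \circ \Phi^{-1}$ is holomorphic on $\mathbb{D}^n$, vanishes at $\mb{0}$, and is bounded by $q$; Lemma \ref{thm:ndimSchwarz} (applied on $\bar{\mathbb{D}}^n(\mb0;\rho')$ with $\rho' \uparrow 1$ to accommodate the regularity hypothesis) then gives $|\tau_j(\mb{s})| \le q |\mb{s}|$ in the $\ell_\infty$ norm. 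Evaluating at $\mb{s} = \Phi(\mb{\zeta})$ yields $\rho(g_j(\mb{\zeta}), \omega_j) = |\sigma_j(\mb{\zeta})| \le q |\Phi(\mb{\zeta})| = q \max_i \rho(\zeta_i, \omega_i) = q\,P(\mb{\zeta})$, and taking the maximum over $j$ delivers the contraction.

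Finally I would iterate: $P(\mb{\zeta}^n) \le q^n P(\mb{\zeta}^0) < q^n$ since $P(\mb{\zeta}^0) < 1$, and recover the Euclidean bound from $|\zeta_i^n - \omega_i| = |1 - \bar\omega_i\zeta_i^n|\,\rho(\zeta_i^n, \omega_i) \le (1 + \mu) P(\mb{\zeta}^n)$, so that $|\mb{\zeta}^n - \mb{\omega}| \le (1+\mu) q^n$. Un-normalizing through $z_i^n - w_i = r_i(\zeta_i^n - \omega_i)$ and using $\max_i r_i |\zeta_i^n - \omega_i| \le |\mb{r}|\,|\mb{\zeta}^n - \mb{\omega}|$ gives the stated estimate $|\mb{z}^n - \mb{w}| < |\mb{r}|(1+\mu)(2\mu/(1+\mu^2))^n$. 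I expect the main obstacle to be exactly the several-variable upgrade of the pseudohyperbolic contraction: in one dimension it follows immediately from scalar Schwarz--Pick after a Möbius change of variable, whereas here one must route the vector-valued, coordinate-coupled map $g$ through the scalar functions $\tau_j$ to which Lemma \ref{thm:ndimSchwarz} applies, and must recognize that the genuinely contracted object is the $\ell_\infty$-type aggregate $P(\mb{\zeta}) = \max_i \rho(\zeta_i, \omega_i)$ rather than any single pseudohyperbolic distance. Verifying that this aggregate is simultaneously compatible with the automorphism $\Phi$ and with the $\ell_\infty$ norm appearing in the several-variable Schwarz lemma is the crux of the proof.
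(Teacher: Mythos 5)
Your proposal is correct and follows essentially the same route as the paper: the generalized Rouch\'e theorem for existence and uniqueness, and a coordinatewise M\"obius conjugation combined with the several-variable Schwarz lemma to produce the contraction factor $2\mu/(1+\mu^2)$ --- your pseudohyperbolic aggregate $P(\mb{\zeta}) = \max_i \rho(\zeta_i,\omega_i)$ is exactly the $\ell_\infty$ norm of the paper's conjugated map $t \circ f \circ t^{-1}$. The only cosmetic differences are that you apply Rouch\'e once on the full polydisc with a weighted sup-norm (the paper normalizes to the unit polydisc and compares on an intermediate polydisc of radius $\rho \in (\mu,1)$), and that you are slightly more careful about the regularity hypothesis of the Schwarz lemma by exhausting with radii $\rho' \uparrow 1$.
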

	\begin{proof}
		
		We first consider the case in which $\mb{c} = \mb{0}$ and $\mb{r} = \mb{1}$ such that $\mathbb{D}^n(\mb{c}, \mb{r}) = \mathbb{D}^n$ is the $n$-dimensional unit polydisc. The condition $\bar{f}(\mathbb{D}^n) \subseteq \mathbb{D}^n$ implies
		\begin{equation} \label{eq:mu}
		\mu := \sup_{\mb{z} \in \mathbb{D}^n} |f(\mb{z})| < 1.
		\end{equation}
		We may assume $\mu > 0$ since otherwise $f$ is a constant function and the result holds trivially. The point $\mb{w}$ is a fixed point of $f$ if and only if it is a zero of the function $\mb{z} - f(\mb{z})$. To prove the existence of a zero we apply generalized Rouch\'e's theorem (Theorem \ref{thm:ndimRouche}) to the boundary of the polydisc $\mathbb{D}^n(\mb{0}, \rho \mb{1})$ where $\mu < \rho < 1$. On the boundary $\partial \mathbb{D}^n(\mb{0}, \rho \mb{1}) = \{ \mb{z} \in \bar{\mathbb{D}}^n(\mb{0}, \rho \mb{1}) :\; |z_i| = \rho \text { for some } i \}$, the norm of identity function $I(\mb{z}) := \mb{z}$ has larger magnitude than that of $-f(\mb{z})$ since $|I(\mb{z})| = \rho > \mu \ge |-f(\mb{z})|$, so the hypotheses of generalized Rouch\'e's theorem are satisfied. It follows that $I(\mb{z})$ and $I(\mb{z}) - f(\mb{z}) = \mb{z} - f(\mb{z})$ have the same number of zeros in $\mathbb{D}^n(\mb{0}, \rho\mb{1})$, namely one. Obviously $I(\mb{z}) - f(\mb{z})$ has no zeros in $\mathbb{D}^n \setminus \mathbb{D}^n(\mb{0}, \rho\mb{1})$, thus $f(\mb{z})$ has exactly one fixed point in $\mathbb{D}^n$.

		To be able to show convergence, we apply Schwarz's lemma. Define a new function holomorphic in $\mathbb{D}^n$ with a zero at $\mb{0}$. To this end, let $t_i : \mathbb{C}^n \to \mathbb{C}$ be a M\"obius transformation which maps $\mathbb{D}^n$ onto itself and sends $w_i$ to zero for every $i = 1, \ldots, n$. Specifically, let
		\begin{equation}
		t_i(\mb{z}) = \frac{(\mb{e}^i)^\top (\mb{z} - \mb{w})}{1 - \mb{w}^H\mb{e}^i(\mb{e}^i)^\top \mb{z}}.
		\end{equation}
		Define the function $t := (t_1, \ldots, t_n)^\top$, then it is seen that the function $g := t \circ f \circ t^{-1}$ is holomorphic in $\mathbb{D}^n$ and has fixed point $\mb{0}$. Moreover, it is bounded by a proper subset of the unit polydisc since
		\begin{equation}
		g(\mathbb{D}^n) = t \circ f \circ t^{-1}(\mathbb{D}^n) \subseteq t(\mathbb{D}^n(\mb{0}, \mu\mb{1})) \subseteq \mathbb{D}^n(\mb{0}, \kappa\mb{1})
		\end{equation}
		for some $\kappa < 1$, where the first containment is due to \eqref{eq:mu}, the second is due to the fact that
		\begin{equation} \label{eq:maxt}
		\max_{|\mb{z}| = \rho} |t(\mb{z})| = \frac{\rho + |\mb{w}|}{1 + \rho|\mb{w}|}
		\end{equation}
		and $\max_{|\mb{z}| = \rho} |t(\mb{z})|$ is increasing in $\rho$ for $0 \le \rho < 1$.
		
		The equation \eqref{eq:maxt} provides an upper bound for $\kappa$, which can be obtained by substituting $\rho$ and $|\mb{w}|$ by $\mu$:
		\begin{equation}
		\kappa \le \frac{2\mu}{1 + \mu^2}.
		\end{equation}
		We may assume $\kappa \neq 0$ since otherwise $f$ is constant and the convergence is trivial. Since $g_i(\mb{0}) = 0$ and $|g_i(\mb{s})| \le \kappa$ for $\mb{s} \in \mathbb{D}^n$, Lemma \ref{thm:ndimSchwarz} ensures that $|g(\mb{s})| \le \kappa |\mb{s}|$ for all $\mb{s} \in \mathbb{D}^n$. We may denote $\mb{s}^n := t(\mb{z}^n)$ for $n = 0, 1, 2, \ldots$, it then follows that
		\begin{equation}
		\mb{s}^n = t(\mb{z}^n) = t \circ f(\mb{z}^{n-1}) = t \circ f \circ t^{-1}(\mb{s}^{n-1}) = g(\mb{s}^{n-1}).
		\end{equation}
		Since $|g(\mb{s})| \le \kappa |\mb{s}|$ and $\kappa < 1$ for $\mb{s} \in \mathbb{D}^n$, we know $|\mb{s}^n| \le \kappa^n |\mb{s}^0| \to 0$ as $n \to \infty$ for $\mb{s}^0 \in \mathbb{D}^n$. 
		
		On the other hand, we have
		\begin{equation} \label{eq:z-w}
		z_i^n - w_i = t^{-1}(s_i^n) - t^{-1}(0) = \frac{1 - |w_i|^2}{1 + w_i^*s_i^n}s_i^n,
		\end{equation}
		for $i = 1, 2, , \ldots, n$. To get an upper bound for $|z_i^n - w_i|$, notice that $|s_i^n| < 1$ for all $n \ge 0$, $|w_i| \le \mu < 1$, so we have
		\begin{equation} \label{eq:telescope}
		\left| \frac{1 - |w_i|^2}{1 + w_i^*s_i^n} \right| < \frac{(1 - |w_i|)(1 + |w_i|)}{1 - |w_i|} = 1 + |w_i| \le 1 + \mu.
		\end{equation}
		It then follows from \eqref{eq:z-w} and \eqref{eq:telescope} that the sequence $\{ \mb{z}^n \}$ converges to the fixed point $\mb{w}$ for any $\mb{s}^0 \in \mathbb{D}^n$ since
		\begin{equation}
		|\mb{z}^n - \mb{w}| < (1 + \mu)\kappa^n|\mb{s}^0| \le (1 + \mu) \left( \frac{2\mu}{1 + \mu^2} \right)^n,
		\end{equation}
		for $n = 0, 1, 2, \ldots$.


		Now let $\mathbb{D}^n(\mb{c}, \mb{r})$ be arbitrary $n$-dimensional polydisc with radius $\mb{r}$ centered at $\mb{c}$. Let $q_i(\mb{z}) = (\mb{e}^i)^\top(\mb{z} - \mb{c})/r_i :\; \mathbb{D}^n(\mb{c}, \mb{r}) \to \mathbb{D}$ be the affine map that projects the polydisc to the $i$th coordinate and then sends the $i$th disk $\mathbb{D}(c_i, r_i)$ into the unit disk $\mathbb{D}$. Denote $q := (q_1, q_2, \ldots, q_n) : \mathbb{D}^n(\mb{c}, \mb{r}) \to \mathbb{D}^n$. The assertion that $\mb{w} \in \mathbb{D}^n(\mb{c}, \mb{r})$ is a fixed point of $f$ is equivalent to the assertion that $q(\mb{w})$ is a fixed point of $h := q \circ f \circ q^{-1}$ since if $\mb{w} = f(\mb{w})$, then
		\begin{equation} \label{eq:hgfg-1}
		h(q(\mb{w})) = q \circ f \circ q^{-1} (q(\mb{w})) = q \circ f(\mb{w}) = q(\mb{w}),
		\end{equation}
		and if $q(\mb{w})$ is a fixed point of $h$, then
		\begin{equation} \label{eq:fg-1hg}
		f(\mb{w}) = q^{-1} \circ h \circ q(\mb{w}) = q^{-1} \circ q(\mb{w}) = \mb{w}.
		\end{equation}
		
		The definition of $h$ above implies that: 1) the function $h$ is holomorphic in $\mathbb{D}^n$ and continuous on $\bar{\mathbb{D}}^n$ since $q$ is biholomorphic on $\bar{\mathbb{D}}^n$, $f$ is holomorphic in $\mathbb{D}^n(\mb{c}, \mb{r})$, continuous on $\bar{\mathbb{D}}^n(\mb{c}, \mb{r})$ and maps the closure $\bar{\mathbb{D}}^n(\mb{c}, \mb{r})$ into $\mathbb{D}^n(\mb{c}, \mb{r})$; and 2) $\bar{h}(\mathbb{D}^n)$ is contained in $\mathbb{D}^n$. The second statement follows from
		\begin{align}
		\bar{h}(\mathbb{D}^n) = \bar{q} \circ f \circ q^{-1} (\mathbb{D}^n) = \bar{q} \circ f(\mathbb{D}^n(\mb{c}, \mb{r})) \subseteq \mathbb{D}^n,
		\end{align}
		We have thus shown that $h$ satisfies the hypotheses of the theorem in the special case in which $\mathbb{D}^n(\mb{c}, \mb{r})$ is the unit polydisc and thus has exactly one fixed point. Thanks to \eqref{eq:hgfg-1} and \eqref{eq:fg-1hg}, $f$ has exactly one fixed point.

		To show the moreover statement, let $\mb{s}^n := q(\mb{z}^n)$ for $n = 0, 1, 2, \ldots$. Similar to the argument above, we know $\mb{s}^n = h(\mb{s}^{n-1})$ for $n = 1, 2, \ldots$. As we have shown, $\{ \mb{s}^n \}$ converges to $q(\mb{w})$ for any $\mb{s}^0 \in \mathbb{D}^n$ in such a manner that
		\begin{equation} \label{eq:s-q}
		|\mb{s}^n - q(\mb{w})| < (1 + \mu) \left( \frac{2\mu}{1 + \mu^2} \right)^n, \quad n = 0, 1, 2, \ldots
		\end{equation}
		where $\mu = \sup_{\mb{z} \in \mathbb{D}^n} |h(\mb{z})| < 1$. To show the convergence of $\{\mb{z}^n\}$, note that $\mb{s}^n = q(\mb{z}^n)$, so
		\begin{equation} \label{eq:s-q2}
		|\mb{s}^n - q(\mb{w})| = |\diag^{-1}(\mb{r})(\mb{z}^n - \mb{w})| \ge |(\mb{z}^n - \mb{w})| / |\mb{r}|,
		\end{equation}
		denoting $j := \arg\max |z_i^n - w_i|$ and $k := \arg\max r_i$, it is easy to see the last inequality holds since
		\begin{equation}
		\frac{|\mb{z}^n - \mb{w}|}{|\mb{r}|} = \frac{|z_j^n - w_j|}{r_k} \le \frac{|z_j^n - w_j|}{r_j} \le |\diag^{-1}(\mb{r})(\mb{z}^n - \mb{w})|.
		\end{equation}
		Integrating \eqref{eq:s-q} and \eqref{eq:s-q2}, we arrive at
		\begin{equation}
		|\mb{z}^n - \mb{w}| < |\mb{r}|(1 + \mu) \left( \frac{2\mu}{1 + \mu^2} \right)^n, \quad n = 0, 1, 2, \ldots
		\end{equation}
		for any $\mb{z}^0 \in \mathbb{D}^n(\mb{c}, \mb{r})$. The sequence $\{ \mb{z}^n \}$ converges to $\mb{w}$ since $|2\mu / (1 + \mu^2)| < 1$ for any $0 \le \mu < 1$, which completes the proof.
		%
		%
	\end{proof}

	\subsection{Application to power flow equations}
	
	We present the main result of the paper in this section. Specifically, we provide a complete characterization of the \emph{existence} and \emph{uniqueness} of power flow solution in a specific region, \emph{convergence} of fixed point iteration to the solution, as well as the `\emph{solutionless}' region in voltage space where no solution lies, all of which follow from Lemma \ref{thm:invariant}, Theorem \ref{thm:exist} and Theorem \ref{thm:ndimHenrici}.
	
	\begin{thm}[Existence, uniqueness, and convergence of fixed point power flow solution] \label{thm:mainb}
		Given load power $\mb{S}_L^0$ and power flow solution $\mb{v}_L^0$ satisfying \eqref{eq:pfv0}, let the power flow equation $\mb{u} = F(\mb{u})$ be defined as in \eqref{eq:pfu} with load power $\mb{S}_L = \mb{S}_L^0 + \mb{\sigma}_L$. Define the positive number $\Delta$ as in \eqref{eq:Delta}, $\ubar{r}, \bar{r}$ as the infimum and supremum of $r$ in \eqref{eq:rdef}, and $\mathcal{U}(r) := \{\mb{u} \in \mathbb{C}^n :\; |u_i-1|/|u_i| < r \}$.
		Suppose the load powers satisfy condition \eqref{eq:cond}, then the following statements concerning the power flow solution $\hat{\mb{u}}$ hold:
		\begin{enumerate}
			\item There exists a unique solution $\hat{\mb{u}}$ in $\bar{\mathcal{D}} (\ubar{r})$;
			\item There are no solutions in $\mathcal{U}(\bar{r}) \setminus \bar{\mathcal{D}} (\ubar{r})$;
			\item The fixed point iteration $\mb{u}^{n+1} = F(\mb{u}^n)$ converges to $\hat{\mb{u}} \in \bar{\mathcal{D}} (\ubar{r})$ for any $\mb{u}^0 \in \mathcal{U}(\bar{r})$ in such a manner that
			\begin{equation}
			|\mb{u}^n - \hat{\mb{u}}| < \bar{r} \xi(\mb{S}_L) ( 1 + \mu ) \left( \frac{2\mu}{ 1 + \mu^2} \right)^{n/2}, \quad n = 0, 1, 2, \ldots,
			\end{equation}
			for some number $0 \le \mu < 1$.
		\end{enumerate}
	\end{thm}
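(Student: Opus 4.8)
The plan is to assemble statement from the three results already available: the invariant-set construction (Lemma \ref{thm:invariant}), the existence result (Theorem \ref{thm:exist}), and the $n$-dimensional uniqueness/convergence theorem (Theorem \ref{thm:ndimHenrici}). The coordinates with $\xi_i(\mb{S}_L) = 0$ are pinned to the single value $u_i = 1 - \eta_i(\mb{\sigma}_L)$ (so $F_i$ is constant there) and may be discarded at the outset; I therefore assume $\xi_i(\mb{S}_L) > 0$ for all $i$, so that each $\mathcal{D}(r)$ is a genuine polydisc $\mathbb{D}^n(\mb{c}, r\mb{\xi})$ with $c_i = 1 - \eta_i(\mb{\sigma}_L)$ and radius vector $r\mb{\xi}$.

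The main obstacle is that Theorem \ref{thm:ndimHenrici} demands a \emph{holomorphic} map, whereas $F$ in \eqref{eq:pfu} enters $\mb{u}$ only through $\diag^{-1}(\mb{u}^*)$ and is thus anti-holomorphic. The key device is to pass to $G := F \circ F$: writing $F(\mb{u}) = \Phi(\overline{\mb{u}})$ with $\Phi$ holomorphic, one has $G(\mb{u}) = \Phi\bigl(\overline{\Phi(\overline{\mb{u}})}\bigr)$, and since conjugating both the argument and the value of a holomorphic map yields a holomorphic map, $G$ is holomorphic wherever defined; this doubling is precisely what generates the exponent $n/2$ in the convergence estimate. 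I would then invoke the two containments established inside the proof of Lemma \ref{thm:invariant}, namely $F(\mathcal{U}(r)) \subseteq \mathcal{D}(r)$ (for every $r>0$) and $\bar{\mathcal{D}}(r) \subseteq \mathcal{U}(r)$ (for $r \in (\ubar{r}, \bar{r})$). These give $F(\bar{\mathcal{D}}(r)) \subseteq \mathcal{D}(r)$ and $F(\mathcal{D}(r)) \subseteq \mathcal{D}(r)$, hence $G(\bar{\mathcal{D}}(r)) \subseteq \mathcal{D}(r)$; compactness of $\bar{\mathcal{D}}(r)$ then yields $\overline{G(\mathcal{D}(r))} \subseteq \mathcal{D}(r)$, and on $\mathcal{D}(r) \subseteq \mathcal{U}(r)$ the map $G$ is holomorphic with continuous $\mathcal{D}(r)$-valued extension to $\bar{\mathcal{D}}(r)$. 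Thus $G$ satisfies every hypothesis of Theorem \ref{thm:ndimHenrici} on the polydisc $\mathcal{D}(r)$, giving a unique $G$-fixed point in $\mathcal{D}(r)$ for each $r \in (\ubar{r}, \bar{r})$. As $\mathcal{D}(r)$ increases in $r$ and each such fixed point lies in $\bar{\mathcal{D}}(\ubar{r})$, they all coincide in one point $\hat{\mb{u}} \in \bar{\mathcal{D}}(\ubar{r})$.

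For statement (i), Theorem \ref{thm:exist} supplies a fixed point of $F$ in $\bar{\mathcal{D}}(\ubar{r})$; every $F$-fixed point is a $G$-fixed point, so by the uniqueness just obtained it must equal $\hat{\mb{u}}$, and no second $F$-fixed point can exist in $\mathcal{D}(r)$. Here the existence result is indispensable: it certifies that the unique $G$-fixed point is a genuine fixed point of $F$ rather than a spurious period-two point. For statement (ii), any fixed point $\mb{p}$ of $F$ in $\mathcal{U}(\bar{r})$ obeys $\mb{p} = F(\mb{p}) \in F(\mathcal{U}(\bar{r})) \subseteq \mathcal{D}(\bar{r}) = \bigcup_{r < \bar{r}} \mathcal{D}(r)$, so $\mb{p} \in \mathcal{D}(r)$ for some $r \in (\ubar{r}, \bar{r})$; being a $G$-fixed point there, it equals $\hat{\mb{u}} \in \bar{\mathcal{D}}(\ubar{r})$, so no fixed point lies in $\mathcal{U}(\bar{r}) \setminus \bar{\mathcal{D}}(\ubar{r})$.

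For statement (iii), fix $\mb{u}^0 \in \mathcal{U}(\bar{r})$. Then $\mb{u}^1 = F(\mb{u}^0) \in \mathcal{D}(\bar{r}) = \bigcup_{r < \bar{r}} \mathcal{D}(r)$, so $\mb{u}^1 \in \mathcal{D}(r)$ for some $r \in (\ubar{r}, \bar{r})$; since $\mathcal{D}(r) \subseteq \mathcal{U}(r)$ and $F(\mathcal{U}(r)) \subseteq \mathcal{D}(r)$, the whole tail $\{\mb{u}^n\}_{n \ge 1}$ remains in $\mathcal{D}(r)$. The even- and odd-indexed tails are orbits of $G$ started from $\mb{u}^2$ and $\mb{u}^1$, both in $\mathcal{D}(r)$, so the convergence estimate of Theorem \ref{thm:ndimHenrici} applied to $G$ on $\mathcal{D}(r)$ bounds each $G^k$-iterate by $r\xi(\mb{S}_L)(1 + \mu)\bigl(2\mu/(1 + \mu^2)\bigr)^{k}$ for some $\mu \in [0, 1)$. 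Reindexing $k$ in terms of $n$ — one $G$-step per two $F$-steps — yields the half-power exponent $\bigl(2\mu/(1+\mu^2)\bigr)^{n/2}$, and replacing $r$ by $\bar{r}$ absorbs the bounded one-step shift into the prefactor $\bar{r}\xi(\mb{S}_L)(1+\mu)$, giving the stated bound. The delicate points, both handled above, are verifying holomorphy and the self-mapping of $G$ on $\mathcal{D}(r)$, and excluding spurious period-two solutions by appeal to the existence result.
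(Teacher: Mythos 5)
Your proposal is correct and follows essentially the same route as the paper's own proof: passing to the holomorphic composition $F \circ F$, applying the $n$-dimensional uniqueness/convergence theorem on the polydiscs $\mathcal{D}(r)$ for $r \in (\ubar{r}, \bar{r})$, using the Brouwer-based existence result to identify the unique period-two fixed point with a genuine fixed point of $F$, and splitting the iteration into even and odd subsequences to obtain the $n/2$ exponent. The only differences are cosmetic (you prove statement (ii) directly rather than by contradiction, and the paper is somewhat more explicit in choosing the radius $r'$ and the constant $\mu$ for the rate estimate).
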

	
	\begin{proof} Similar to the proof of Theorem \ref{thm:exist}, we may assume throughout the proof that $\xi_i(\mb{S}_L) \ne 0$ for all load bus $i$. In addition, let $r$ be a positive number such that $r \in (\ubar{r}, \bar{r})$.
		\begin{enumerate}
			\item Theorem \ref{thm:ndimHenrici} provides sufficient condition on the uniqueness of fixed point of holomorphic functions. The problem of directly applying the theorem to show the uniqueness of fixed point of power flow equations lies in the fact that $F(\mb{u})$ is not holomorphic due to the presence of complex conjugation. However, the problem can be circumvented by defining the iterated power flow equations as the composition of $F$ with itself as $F^2(\mb{u}) = F \circ F(\mb{u})$.
			We easily see that 1) the function $F^2(\mb{u})$ is holomorphic on $\bar{\mathcal{D}}(r)$ and 2) based on Lemma \ref{thm:invariant}, $F^2(\bar{\mathcal{D}}(r)) \subseteq \mathcal{D}(r)$. To prove $F^2(\mb{u})$ has a unique fixed point in $\mathcal{D}(r)$ by Theorem \ref{thm:ndimHenrici}, we need to show $\bar{F}^2(\mathcal{D}(r)) \subseteq \mathcal{D}(r)$. It follows from point 2) above that this amounts to showing $\bar{F}^2(\mathcal{D}(r)) \subseteq F^2(\bar{\mathcal{D}}(r))$, which holds true since $\bar{F}^2(\mathcal{D}(r))$ is the intersection of all closed sets containing $F^2(\mathcal{D}(r))$ including $F^2(\bar{\mathcal{D}}(r))$ (which is closed as it is the image of a continuous function over compact set). It then follows that Theorems \ref{thm:exist} and \ref{thm:ndimHenrici} ensures the existence and uniqueness of fixed point for $\mb{u} = F^2(\mb{u})$ in $\mathcal{D}(r)$, respectively.
			
			Therefore, we know $\mb{u} = F(\mb{u})$ has at least one fixed point in $\mathcal{D}(r)$ due to Theorem \ref{thm:exist} and $\mb{u} = F^2(\mb{u})$ has exactly one fixed point in $\mathcal{D}(r)$. Since any fixed point of $\mb{u} = F(\mb{u})$ is also a fixed point of $\mb{u} = F^2(\mb{u})$, $\mb{u} = F(\mb{u})$ has exactly one fixed point in $\mathcal{D}(r)$ for $r \in (\ubar{r}, \bar{r})$. Therefore, $\mb{u} = F(\mb{u})$ admits a unique solution in $\bigcap_{r \in (\ubar{r},\bar{r})} \mathcal{D}(r) = \bar{\mathcal{D}}(\ubar{r})$.

			\item Suppose for the sake of contradiction there exists a fixed point $\hat{\mb{u}} \in \mathcal{U}(\bar{r}) \setminus \bar{\mathcal{D}}(\ubar{r})$. Since $F(\mathcal{U}(\bar{r}) \setminus \bar{\mathcal{D}}(\ubar{r})) \subseteq F(\mathcal{U}(\bar{r})) \subseteq \mathcal{D}(\bar{r})$, we know $\hat{\mb{u}}$ lies in $\mathcal{D}(\bar{r}) \setminus \bar{\mathcal{D}}(\ubar{r})$. This is impossible since we know from item (i) above that the unique solution to $F(\mb{u}) = \mb{u}$ in $\mathcal{D}(\bar{r})$ lies in $\bar{\mathcal{D}}(\ubar{r})$.

			\item Given the sequence $\{ \mb{u}^n \}$ defined by the power flow iteration $\mb{u}^{n+1} = F(\mb{u}^n)$, the subsequence comprising all odd terms of $\{ \mb{u}^n \}$ can be generated by the iteration $\mb{u}^{2k} = F^2(\mb{u}^{2k-2})$ for $k = 1, 2, \ldots$ while the subsequence comprising all even terms can be generated by the iteration $\mb{u}^{2k+1} = F^2(\mb{u}^{2k-1})$ for $k = 1, 2, \ldots$. For $\mb{u}^0 \in \mathcal{U}(\bar{r})$, we have $F(\mb{u}^0) \in \mathcal{D}(\bar{r})$ and subsequently $\mb{u}^k \in \mathcal{D}(\bar{r})$ for any $k > 0$ based on Lemma \ref{thm:invariant}. In particular, both $\mb{u}^1$ and $\mb{u}^2$ are in $\mathcal{D}(\bar{r})$. It follows from Theorem \ref{thm:ndimHenrici} that both subsequences converge to the unique fixed point in $\mathcal{D}(\bar{r})$, which means the sequence $\{ \mb{u}^n \}$ itself converges to the unique fixed point in $\mathcal{D}(\bar{r})$. Furthermore, the fixed point is in $\bar{\mathcal{D}}(\ubar{r})$ based on item (i) above.
			
			Now we show the convergence rate. Given $\mb{u}^0 \in \mathcal{U}(\bar{r})$ and denote $r := |\mathbf{I} - \diag^{-1}(\mb{u}^0)|$, there exists $\epsilon > 0$ such that $r' := \bar{r} - \epsilon > \max\{ \ubar{r}, r \}$. Define
			\begin{equation} \label{eq:c24}
			\mu := \frac{\max_{\mb{u} \in \bar{\mathcal{D}}(r')}  |\mathbf{I} - \diag^{-1}(\mb{u}^*)| }{r'},
			\end{equation}
			we know from the proof of Theorem \ref{thm:invariant} that $\bar{\mathcal{D}}(r') \subseteq \mathcal{U}(r')$ and consequently $\mu < 1$. In addition, we know
			\begin{equation} \label{eq:Fsq}
			F^2(\bar{\mathcal{D}}(r')) \subseteq F(\mathcal{D}(r')) \subseteq \left\{ \mb{u} \in \mathbb{C}^n :\; |1 - \eta_i(\mb{\sigma}_L) - u_i| < \mu \cdot r' \xi_i(\mb{S}_L) \right\},
			\end{equation}
			where the second set inclusion comes from \eqref{eq:c24}. Let $q_i(\mb{u}) = \frac{ u_i - \eta_i(\mb{\sigma}_L) }{ r'\xi_i(\mb{S}_L) } :\; \mathcal{D}(r') \to \mathbb{D}$  be the affine map that projects the polydisc $\mathcal{D}(r')$ to the $i$th dimension and then sends the $i$th disk $\mathbb{D}(\eta_i(\mb{\sigma}_L), r'\xi_i(\mb{S}_L))$ into the unit disk $\mathbb{D}$. Denote $q := (q_1, q_2, \ldots, q_n) : \mathcal{D}(r') \to \mathbb{D}^n$.
			If we define $h(\mb{z}) := q \circ F^2 \circ q^{-1}(\mb{z})$, it follows from \eqref{eq:Fsq} that $\mu$ is an upper bound of $\| h(\mb{z}) \|_\infty$ for $\mb{z} \in \mathbb{D}^n$ since
			\begin{equation}
			\sup_{\mb{z} \in \mathbb{D}^n} \| h(\mb{z}) \|_\infty = \sup_{\mb{z} \in \mathcal{D}(r')} \| q \circ F^2(\mb{z}) \|_\infty < \mu.
			\end{equation}
			We then know from the proof of Theorem \ref{thm:ndimHenrici} that for the sequence $\{ \mb{u}^{2n} \}$ generated by $\mb{u}^{2n+2} = F^2(\mb{u}^{2n}), n = 0, 1, 2, \ldots$, we have
			\begin{equation}
			|\mb{u}^{2n} - \hat{\mb{u}}| < r'\xi(\mb{S}_L) (1 + \mu) \left( \frac{2\mu}{ 1 + \mu^2 } \right)^n, n = 0, 1, 2, \ldots
			\end{equation}
			for any $\mb{u}^0 \in \mathcal{U}(r')$. Given $\mb{u}^1 = F(\mb{u}^0) \in \mathcal{D}(r')$, we can verify that 
			\begin{equation}
			|\mb{u}^1 - \hat{\mb{u}}| < r'\xi(\mb{S}_L)(1 + \mu) \sqrt{ \frac{2\mu}{1 + \mu^2} }.
			\end{equation}
			In addition, the sequence $\{ \mb{u}^{2n+1} \}$ by $\mb{u}^{2n+1} = F^2(\mb{u}^{2n-1})$, $n = 1, 2, 3, \ldots$ has the same convergence rate. It follows that the sequence $\{ \mb{u}^n \}$ generated by $\mb{u}^{n+1} = F(\mb{u}^n)$ has the following convergence rate:
			\begin{equation} \label{eq:c29}
			|\mb{u}^{n} - \hat{\mb{u}}| < r'\xi(\mb{S}_L)(1 + \mu) \left( \frac{2\mu}{ 1 + \mu^2 } \right)^{n/2}
			\end{equation}
			for the given $\mb{u}^0$. In fact, for any $\mb{u}^0 \in \mathcal{U}(\bar{r})$, there exists such $0 \le \mu < 1$ and the $r'$ factor in \eqref{eq:c29} is upper bounded by $\bar{r}$. \qedhere
		\end{enumerate}
	\end{proof}

	\section{Relationship to Existing Conditions} \label{sect:relation}
	
	In this section we compare the proposed condition \eqref{eq:cond} with two sharpest results in the literature known so far: conditions in \cite{WangC16} and \cite{Dvijotham18}. The two conditions are incomparable, while it was empirically shown that the certified solvability set by condition \cite{WangC16} is generally `smaller' than the one by \cite{Dvijotham18}. We briefly introduce the two existing conditions in Theorems \ref{thm:WBBP} and \ref{thm:DNT} below. To be consistent with the adopted model and notations in Section \ref{sect2}, the two results are slightly rephrased and generalized without proof. We then give proof of dominance of the proposed condition \eqref{eq:cond} over the two existing ones in Proposition \ref{thm:compare} by showing the certified solvability set by the proposed condition \eqref{eq:cond} contain those given by the two existing conditions.
	
	\begin{thm}[\hspace{1sp}{\cite[Thm. 1]{WangC16}}] \label{thm:WBBP}
		Given load power $\mb{S}_L^0$ and power flow solution $\mb{v}_L^0$ satisfying \eqref{eq:pfv0} where $\xi(\mb{S}_L^0) < 1$, the power flow equation $\mb{u} = F(\mb{u})$ defined in \eqref{eq:pfu} with load power $\mb{S}_L = \mb{S}_L^0 + \mb{\sigma}_L$ admits a unique solution in $\{ \mb{u} \in \mathbb{C}^n \mid   1-r \le |u_i| \le 1+r \}$ where
		\begin{equation}
		r = \frac{1 - \xi(\mb{S}_L^0) - \sqrt{\left( 1 - \xi(\mb{S}_L^0) \right)^2 - 4\xi(\mb{\sigma}_L)} }{2}
		\end{equation}
		when
		\begin{equation} \label{eq:wang_condition}
		\left( 1 - \xi(\mb{S}_L^0) \right)^2 - 4\xi(\mb{\sigma}_L) > 0.
		\end{equation}
	\end{thm}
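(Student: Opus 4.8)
The plan is to prove this classical Wang--Bernstein--Le Boudec--Paolone result via the Banach fixed point theorem applied to $F$ on the closed polydisc $\mathcal{P}(r) := \{ \mb{u} \in \mathbb{C}^n :\; |u_i - 1| \le r,\ \forall i \in \mathcal{N}_L \}$ equipped with the $\ell_\infty$ metric, which is a complete metric space. Note first that $\mathcal{P}(r) \subseteq \{ \mb{u} :\; 1 - r \le |u_i| \le 1 + r \}$ by the triangle inequality, so producing the unique fixed point inside $\mathcal{P}(r)$ already lands it in the stated set. Under \eqref{eq:wang_condition} the discriminant is positive, so the stated $r$ is the smaller (hence real, and as shown below strictly less than $1$) root of $r^2 - (1 - \xi(\mb{S}_L^0))r + \xi(\mb{\sigma}_L) = 0$. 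Because $F$ contains the conjugate $\mb{u}^*$ it is not holomorphic, but holomorphy is irrelevant here---only self-mapping and contraction estimates are needed, which is precisely why this earlier result rests on Banach's theorem rather than the sharper machinery of Theorem \ref{thm:ndimHenrici}.

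First I would establish the self-mapping property $F(\mathcal{P}(r)) \subseteq \mathcal{P}(r)$. Writing \eqref{eq:pfu} coordinate-wise gives $F_i(\mb{u}) - 1 = -\eta_i(\mb{\sigma}_L) + \sum_{j} \tilde{z}_{ij}\, \frac{u_j^* - 1}{u_j^*}\, S_{Lj}^*$. On $\mathcal{P}(r)$ one has $|u_j - 1| \le r$ and $|u_j| \ge 1 - r > 0$, so $|(u_j^*-1)/u_j^*| \le r/(1-r)$; combining this with $|\eta_i(\mb{\sigma}_L)| \le \xi_i(\mb{\sigma}_L)$ and the subadditivity $\xi_i(\mb{S}_L) \le \xi_i(\mb{S}_L^0) + \xi_i(\mb{\sigma}_L)$ yields
\begin{equation}
|F_i(\mb{u}) - 1| \le \xi_i(\mb{\sigma}_L) + \frac{r}{1-r}\bigl(\xi_i(\mb{S}_L^0) + \xi_i(\mb{\sigma}_L)\bigr) = \frac{\xi_i(\mb{\sigma}_L) + r\,\xi_i(\mb{S}_L^0)}{1-r}.
\end{equation}
Maximizing over $i$, the right-hand side is at most $r$ exactly when $\xi(\mb{\sigma}_L) + r\,\xi(\mb{S}_L^0) \le r(1-r)$, i.e. when $r^2 - (1-\xi(\mb{S}_L^0))r + \xi(\mb{\sigma}_L) \le 0$. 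The stated $r$ is precisely the smaller root at which equality holds, so $F$ maps $\mathcal{P}(r)$ into itself.

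Next I would verify that $F$ is a strict contraction on $\mathcal{P}(r)$. For $\mb{u}, \mb{u}' \in \mathcal{P}(r)$ only the $\diag^{-1}(\mb{u}^*)$ term depends on $\mb{u}$, so $F_i(\mb{u}) - F_i(\mb{u}') = \sum_j \tilde{z}_{ij}\, \frac{u_j^* - u_j'^*}{u_j^* u_j'^*}\, S_{Lj}^*$, whence $|F_i(\mb{u}) - F_i(\mb{u}')| \le \frac{\xi_i(\mb{S}_L)}{(1-r)^2}\,\|\mb{u}-\mb{u}'\|_\infty$ using $|u_j|,|u_j'| \ge 1-r$; the contraction factor is thus at most $\xi(\mb{S}_L)/(1-r)^2$. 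From the equality in the self-mapping step, $\xi(\mb{\sigma}_L) = r\bigl(1 - r - \xi(\mb{S}_L^0)\bigr)$, so $\xi(\mb{S}_L) \le \xi(\mb{S}_L^0) + \xi(\mb{\sigma}_L) = (1-r)\bigl(\xi(\mb{S}_L^0) + r\bigr)$ and the factor is bounded by $(\xi(\mb{S}_L^0)+r)/(1-r)$, which is strictly below $1$ since the smaller root obeys $r < (1 - \xi(\mb{S}_L^0))/2$ whenever the discriminant in \eqref{eq:wang_condition} is strictly positive (this also gives $r < 1/2 < 1$, validating the reciprocal estimates). Banach's theorem then delivers a unique fixed point $\hat{\mb{u}} \in \mathcal{P}(r)$.

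The main obstacle I anticipate is reconciling the uniqueness \emph{region}: the contraction argument only certifies uniqueness within the polydisc $\mathcal{P}(r)$, whereas the statement asserts uniqueness in the larger magnitude annulus $\{ 1-r \le |u_i| \le 1+r \}$, which is non-convex and on which neither the self-map estimate (there $|u_j-1|$ is no longer controlled by $r$) nor Banach directly applies. Closing this gap requires an extra argument---for instance showing that any fixed point with $1-r \le |u_i| \le 1+r$ must in fact satisfy $|u_i - 1| \le r$ by feeding it once through $F$ and invoking the self-map image bound, or else reading the theorem's uniqueness claim as holding on $\mathcal{P}(r)$ with the magnitude bounds as a corollary. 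Verifying that the base-point quantities $\xi(\mb{S}_L^0)$ and $\xi(\mb{\sigma}_L)$ assemble correctly into the total-load factor $\xi(\mb{S}_L)$ through the subadditivity above is the remaining bookkeeping point to pin down.
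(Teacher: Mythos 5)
The paper does not actually prove this statement---it is imported from \cite{WangC16} and explicitly stated ``without proof''---so there is no in-paper argument to compare against; your Banach fixed-point route is the standard one for this result and is the right approach. Your computations on $\mathcal{P}(r)=\{\mb{u}\in\mathbb{C}^n : |u_i-1|\le r,\ \forall i\}$ check out: the self-map bound $|F_i(\mb{u})-1|\le\bigl(\xi_i(\mb{\sigma}_L)+r\,\xi_i(\mb{S}_L^0)\bigr)/(1-r)$, the identification of the stated $r$ as the smaller root of $r^2-(1-\xi(\mb{S}_L^0))r+\xi(\mb{\sigma}_L)=0$ (which also gives $r<(1-\xi(\mb{S}_L^0))/2<1/2$, so the denominators are under control), and the contraction modulus $(\xi(\mb{S}_L^0)+r)/(1-r)<1$, strict precisely because the discriminant in \eqref{eq:wang_condition} is strictly positive.

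The one genuine gap is the one you flagged yourself, and your proposed repair does not work as literally stated. Feeding a putative fixed point $\hat{\mb{u}}$ with $1-r\le|\hat{u}_i|\le 1+r$ ``once through $F$'' and ``invoking the self-map image bound'' is circular: the image bound $|F_i(\mb{u})-1|\le r$ was derived only for $\mb{u}\in\mathcal{P}(r)$, i.e.\ it presupposes $|u_j-1|\le r$, which is exactly what you need to prove. The correct closing step is an a priori bound that uses only the magnitude lower bound available on the annular set. Set $\rho:=\max_i|\hat{u}_i-1|$, which is finite since $|\hat{u}_i|\le 1+r$. The fixed-point identity together with $|\hat{u}_j|\ge 1-r$ gives
\begin{equation*}
\rho \;\le\; \xi(\mb{\sigma}_L) + \frac{\xi(\mb{S}_L)}{1-r}\,\rho,
\end{equation*}
and since $\xi(\mb{S}_L)/(1-r)\le\xi(\mb{S}_L^0)+r<1$, this resolves to $\rho\le\xi(\mb{\sigma}_L)/\bigl(1-\xi(\mb{S}_L^0)-r\bigr)=r$ upon substituting $\xi(\mb{\sigma}_L)=r\bigl(1-r-\xi(\mb{S}_L^0)\bigr)$ from the root equation. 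Hence every fixed point in the stated set $\{1-r\le|u_i|\le 1+r\}$ already lies in $\mathcal{P}(r)$, and the Banach uniqueness on $\mathcal{P}(r)$ transfers to uniqueness on the stated set. With this paragraph added, your proof is complete.
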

	
	\begin{thm}[\hspace{1sp}{\cite[Sect. IV-A]{Dvijotham18}}] \label{thm:DNT}
		Given load power $\mb{S}_L^0$ and power flow solution $\mb{v}_L^0$ satisfying \eqref{eq:pfv0}, the power flow equation $\mb{u} = F(\mb{u})$ defined in \eqref{eq:pfu} with load power $\mb{S}_L = \mb{S}_L^0 + \mb{\sigma}_L$ admits at least one solution in $\{ \mb{u} \in \mathbb{C}^n \mid   1/(1+r) \le |u_i| \le 1/(1-r) \}$ where
		\begin{equation}
		r = \frac{1 - \xi(\mb{S}_L) - \eta(\mb{\sigma}_L) - \sqrt{( 1 - \xi(\mb{S}_L) - \eta(\mb{\sigma}_L) )^2 - 4\xi(\mb{S}_L)\eta(\mb{\sigma}_L)}}{2\xi(\mb{S}_L)}
		\end{equation}
		when
		\begin{equation} \label{eq:dj_condition}
		\sqrt{\xi(\mb{S}_L)} + \sqrt{\eta(\mb{\sigma}_L)} \le 1.
		\end{equation}
	\end{thm}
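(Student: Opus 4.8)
The plan is to prove Theorem~\ref{thm:DNT} by the same invariant-set/Brouwer paradigm used for Theorem~\ref{thm:exist}, but with a deliberately cruder self-map estimate, so that the resulting condition is the weaker \eqref{eq:dj_condition}. Since the claimed region $\{\mb{u}:\ 1/(1+r)\le|u_i|\le 1/(1-r)\}$ is an annulus in each coordinate and hence non-convex, Brouwer's theorem (Theorem~\ref{thm:Brouwer}) cannot be applied to it directly. First I would instead work with the convex set $\mathcal{U}(r)=\{\mb{u}:\ |u_i-1|/|u_i|< r\}$ that already appears in the proof of Lemma~\ref{thm:invariant}; by \eqref{eq:ineq_r}, for $r<1$ this is, coordinatewise, the disk $\mathbb{D}(1/(1-r^2);\,r/(1-r^2))$, and every point of that disk satisfies $1/(1+r)\le |u_i|\le 1/(1-r)$. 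Thus $\bar{\mathcal{U}}(r)$ is a compact convex subset of the target annulus, and any fixed point found in it automatically lies in the annulus.

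Next I would establish the self-map property $F(\bar{\mathcal{U}}(r))\subseteq\bar{\mathcal{U}}(r)$. Using $F$ from \eqref{eq:pfu} together with the identity $|1-1/u_j^*|=|u_j-1|/|u_j|\le r$ valid on $\bar{\mathcal{U}}(r)$, the triangle inequality gives, for each load bus $i$,
\begin{equation}
|F_i(\mb{u})-1|\ \le\ |\eta_i(\mb{\sigma}_L)| + \big\|\tilde{z}_i^\top\diag(\mb{S}_L^*)\big\|_1\, r\ \le\ \eta(\mb{\sigma}_L)+\xi(\mb{S}_L)\,r .
\end{equation}
Hence $F(\bar{\mathcal{U}}(r))$ lies coordinatewise in the disk $\mathbb{D}(1;\,\eta(\mb{\sigma}_L)+\xi(\mb{S}_L)r)$ centered at $1$. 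The crucial (and deliberately lossy) step is to bound by a disk centered at $1$ rather than by the tighter disk centered at $1-\eta_i(\mb{\sigma}_L)$ with radius $r\xi_i$ used in Lemma~\ref{thm:invariant}: absorbing $\eta_i$ into the radius is exactly what degrades the estimate and produces the weaker condition. Requiring this disk to sit inside $\mathbb{D}(1/(1-r^2);\,r/(1-r^2))$, i.e.\ that the distance $r^2/(1-r^2)$ between the centers plus the smaller radius not exceed the larger radius, reduces to $\eta(\mb{\sigma}_L)+\xi(\mb{S}_L)r\le r/(1+r)$, equivalently
\begin{equation}
\xi(\mb{S}_L)\,r^2-\big(1-\xi(\mb{S}_L)-\eta(\mb{\sigma}_L)\big)r+\eta(\mb{\sigma}_L)\ \le\ 0 .
\end{equation}

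I would then note that this upward-opening quadratic in $r$ admits a positive root with $1-\xi(\mb{S}_L)-\eta(\mb{\sigma}_L)\ge 2\sqrt{\xi(\mb{S}_L)\eta(\mb{\sigma}_L)}$, which rearranges to $\sqrt{\xi(\mb{S}_L)}+\sqrt{\eta(\mb{\sigma}_L)}\le 1$, i.e.\ precisely \eqref{eq:dj_condition}; the smaller root is exactly the $r$ displayed in the theorem. Choosing this $r$ makes $\bar{\mathcal{U}}(r)$ a compact convex invariant set for $F$, so Brouwer's theorem yields a fixed point $\hat{\mb{u}}\in\bar{\mathcal{U}}(r)$, which lies in the annulus by the first paragraph.

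The main obstacle I anticipate is geometric rather than analytic: handling the non-convex annulus by passing to the inscribed convex disk $\mathcal{U}(r)$, and being careful with strict-versus-closed inequalities when the chosen $r$ is the boundary root, where the containment is tight and one must argue with $\bar{\mathcal{U}}(r)$ rather than $\mathcal{U}(r)$. A secondary point worth flagging is that this argument only certifies \emph{existence} via Brouwer, not uniqueness, consistent with the ``at least one solution'' phrasing; uniqueness would instead require the complex-analytic machinery of Theorem~\ref{thm:ndimHenrici}. Finally, the comparison is transparent from the derivation: because the present bound replaces the tight disk centered at $1-\eta_i$ (radius $r\xi_i$) by the looser disk centered at $1$ (radius $\eta(\mb{\sigma}_L)+\xi(\mb{S}_L)r$), the certified region here is contained in the one from \eqref{eq:cond}, which is the mechanism behind the dominance asserted in Theorem~\ref{thm:compare:main}.
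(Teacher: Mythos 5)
First, a point of reference: the paper does not prove Theorem~\ref{thm:DNT} at all. It is imported from \cite{Dvijotham18}, and the text immediately preceding Theorems~\ref{thm:WBBP} and \ref{thm:DNT} states that both results are ``slightly rephrased and generalized without proof.'' So there is no in-paper argument to compare yours against; what you have written is a self-contained derivation in the spirit of the paper's own Lemma~\ref{thm:invariant} and Theorem~\ref{thm:exist} (invariant polydisc plus Brouwer, Theorem~\ref{thm:Brouwer}), with the deliberately lossy step of absorbing $\eta_i(\mb{\sigma}_L)$ into the radius of a disk centered at $1$. That core is sound: the containment $\bar{\mathbb{D}}(1;\,\eta+\xi r)\subseteq\bar{\mathbb{D}}\bigl(1/(1-r^2);\,r/(1-r^2)\bigr)$ is exactly $\eta+\xi r\le r/(1+r)$, the quadratic $\xi r^2-(1-\xi-\eta)r+\eta\le 0$ admits a positive real root precisely when $\sqrt{\xi}+\sqrt{\eta}\le 1$, its smaller root is the $r$ in the statement, and $F$ is continuous on $\bar{\mathcal{U}}(r)$ because $|u_i|\ge 1/(1+r)>0$ there.

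The gap is the standing assumption $r<1$, which you use twice (to identify $\bar{\mathcal{U}}(r)$ with a product of closed disks via \eqref{eq:ineq_r}, and to read off the annulus bounds $1/(1\pm r)$) but never verify. It is not implied by \eqref{eq:dj_condition}: $\eta$ is measured with respect to the increment $\mb{\sigma}_L$ while $\xi$ is measured with respect to the total load $\mb{S}_L=\mb{S}_L^0+\mb{\sigma}_L$, so $\eta(\mb{\sigma}_L)>\xi(\mb{S}_L)$ is possible, and then the smaller root can exceed $1$. For instance $\xi=0.13$, $\eta=0.4$ gives $\sqrt{\xi}+\sqrt{\eta}\approx 0.993\le 1$ but $r\approx 1.37$; there $\mathcal{U}(r)$ is the unbounded, non-convex complement of a disk by \eqref{eq:ineq_r}, Brouwer no longer applies to it, and the stated annulus is empty since $1/(1-r)<0$ --- so in that regime the transcribed statement is itself vacuous or false, a defect of the restatement rather than of your proof, but one a complete proof must confront. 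In the case $\mb{S}_L^0=\mb{0}$ one has $\eta=\eta(\mb{S}_L)\le\xi(\mb{S}_L)$, the product of the two roots is $\eta/\xi\le 1$, hence the smaller root is $\le 1$ and your argument goes through, apart from the residual boundary case $r=1$ (e.g.\ $\xi=\eta=1/4$), where $\mathcal{U}(1)$ is a half-plane and a separate limiting argument is needed. You should either add the hypothesis $r<1$, restrict to $\mb{S}_L^0=\mb{0}$, or treat the $r\ge 1$ branch of \eqref{eq:ineq_r} explicitly.
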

	
	
	
	
	\subsection{Theoretical justification}
	
	Given load power $\mb{S}_L^0$ and power flow solution $\mb{v}_L^0$ satisfying \eqref{eq:pfv0}, let the solvability sets $\mathcal{S}_p$, $\mathcal{S}_w$, and $\mathcal{S}_d$ be the sets of incremental load power $\mb{\sigma}_L$ satisfying \eqref{eq:cond}, \eqref{eq:wang_condition}, and \eqref{eq:dj_condition}, respectively. Note that both Theorems \ref{thm:mainb} and \ref{thm:WBBP} provide uniqueness guarantees while Theorem \ref{thm:DNT} only guarantees solution existence. This causes $\mathcal{S}_p$ and $\mathcal{S}_w$ to be open while $\mathcal{S}_d$ is closed. Therefore, when comparing the strength of conditions \eqref{eq:cond} and \eqref{eq:dj_condition}, we compare $\mathcal{S}_d$ with $\bar{\mathcal{S}}_p$ --- the closure of $\mathcal{S}_p$. It should be noted that this treatment is a mere technicality and have negligible consequence in practice.
	
	\begin{pro} \label{thm:compare}
		Given load power $\mb{S}_L^0$ and power flow solution $\mb{v}_L^0$ satisfying \eqref{eq:pfv0}, let the solvability sets $\mathcal{S}_p$, $\mathcal{S}_w$, and $\mathcal{S}_d$ be the sets of incremental power injection $\mb{\sigma}_L$ satisfying \eqref{eq:cond}, \eqref{eq:wang_condition}, and \eqref{eq:dj_condition}, respectively, then $\mathcal{S}_w \subseteq \mathcal{S}_p$ and $\mathcal{S}_d \subseteq \bar{\mathcal{S}}_p$ hold.
		
		Moreover, the proposed condition strictly dominates conditions \eqref{eq:wang_condition} and \eqref{eq:dj_condition}, or $\mathcal{S}_w \subsetneq \mathcal{S}_p$ and $\mathcal{S}_d \subsetneq \bar{\mathcal{S}}_p$, when $\{ \mb{0} \} \subsetneq \mathcal{S}_p$.
	\end{pro}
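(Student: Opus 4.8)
The plan is to prove both containments with the same two-step recipe --- bound the quantity $\gamma(\mb{S}_L,\mb{\sigma}_L)+2\xi(\mb{S}_L)\eta(\mb{\sigma}_L)$ from \eqref{eq:cond_bl} from above by a data expression controlled by the competing condition, then dispatch the mild side constraint \eqref{eq:condb} --- and to treat strictness separately. Three elementary inequalities do the heavy lifting. First, because each $\xi_i$ is an $\ell_1$ norm of a row of $\tilde{\mb{Z}}\diag(\cdot)$, the triangle inequality gives $\xi_i(\mb{S}_L)\le\xi_i(\mb{S}_L^0)+\xi_i(\mb{\sigma}_L)$ and hence $\xi(\mb{S}_L)\le\xi(\mb{S}_L^0)+\xi(\mb{\sigma}_L)$. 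Second, $|\eta_i(\mb{\sigma}_L)|=|\tilde{z}_i^\top\mb{\sigma}_L^*|\le\|\tilde{z}_i^\top\diag(\mb{\sigma}_L^*)\|_1$, so $\eta(\mb{\sigma}_L)\le\xi(\mb{\sigma}_L)$. Third, writing $\gamma_i=(2\xi_i(\mb{S}_L)-\xi_i(\mb{S}_L)^2)+(2\re\eta_i-|\eta_i|^2)$ and using $\re\eta_i\le|\eta_i|$ together with the monotonicity of $t\mapsto 2t-t^2$ on $[0,1]$, I can bound each bracketed term by its value at the relevant global maximum, provided all magnitudes involved lie in $[0,1]$.

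For the Wang containment $\mathcal{S}_w\subseteq\mathcal{S}_p$, I would first note that \eqref{eq:wang_condition} forces $\xi(\mb{S}_L^0)<1$ and $\xi(\mb{\sigma}_L)<(1-\xi(\mb{S}_L^0))^2/4$, so that $\xi(\mb{S}_L)\le\xi(\mb{S}_L^0)+\xi(\mb{\sigma}_L)<1$ and all magnitudes lie in $[0,1)$. The per-index bound then gives $\gamma\le(2\xi(\mb{S}_L)-\xi(\mb{S}_L)^2)+(2\xi(\mb{\sigma}_L)-\xi(\mb{\sigma}_L)^2)$; combining with $\eta(\mb{\sigma}_L)\le\xi(\mb{\sigma}_L)$, and using that the resulting expression is increasing in $\xi(\mb{S}_L)$ on $[0,1)$ so that the worst case is $\xi(\mb{S}_L)=\xi(\mb{S}_L^0)+\xi(\mb{\sigma}_L)$, collapses everything to $\gamma(\mb{S}_L,\mb{\sigma}_L)+2\xi(\mb{S}_L)\eta(\mb{\sigma}_L)\le 2\xi(\mb{S}_L^0)+4\xi(\mb{\sigma}_L)-\xi(\mb{S}_L^0)^2$. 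By \eqref{eq:wang_condition} the right-hand side is strictly below $1$, so \eqref{eq:cond_bl} holds, and \eqref{eq:condb} is immediate from $\xi(\mb{S}_L)-\eta(\mb{\sigma}_L)\le\xi(\mb{S}_L)<1$.

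For the Dvijotham containment $\mathcal{S}_d\subseteq\bar{\mathcal{S}}_p$, the key is the algebraic equivalence of \eqref{eq:dj_condition}, namely $\sqrt{\xi(\mb{S}_L)}+\sqrt{\eta(\mb{\sigma}_L)}\le1$, with the pair $\xi(\mb{S}_L)+\eta(\mb{\sigma}_L)\le1$ and $(1-\xi(\mb{S}_L)-\eta(\mb{\sigma}_L))^2-4\xi(\mb{S}_L)\eta(\mb{\sigma}_L)\ge0$. All magnitudes are again at most $1$, so the per-index bound yields $\gamma\le(2\xi(\mb{S}_L)-\xi(\mb{S}_L)^2)+(2\eta(\mb{\sigma}_L)-\eta(\mb{\sigma}_L)^2)$; adding $2\xi(\mb{S}_L)\eta(\mb{\sigma}_L)$ and rearranging gives exactly $1-\bigl(\gamma+2\xi\eta\bigr)\ge(1-\xi(\mb{S}_L)-\eta(\mb{\sigma}_L))^2-4\xi(\mb{S}_L)\eta(\mb{\sigma}_L)\ge0$. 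Thus $\gamma+2\xi\eta\le1$, so $\mb{\sigma}_L\in\bar{\mathcal{S}}_p$, while \eqref{eq:condb} again follows from $\xi(\mb{S}_L)-\eta(\mb{\sigma}_L)\le\xi(\mb{S}_L)\le1$.

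The hard part, which I expect to be the main obstacle, is the strict dominance under $\{\mb{0}\}\subsetneq\mathcal{S}_p$. Here I would argue along rays: fix a direction $\mb{\sigma}_L\neq\mb{0}$, observe $\xi(\mb{\sigma}_L)>0$ since $\tilde{\mb{Z}}$ is full rank, and scale by $\lambda\ge0$. The competing sets exit at explicit thresholds (where \eqref{eq:wang_condition}, resp. \eqref{eq:dj_condition}, becomes equality), and at those thresholds the bounds derived above equal precisely $1$. Each inequality in the chain --- the triangle bound for $\xi$, the estimate $\re\eta_i\le|\eta_i|$, and $|\eta_i|\le\xi_i(\mb{\sigma}_L)$ --- is strict unless the loads are perfectly phase-aligned in a degenerate way; at such a generic threshold one has $\gamma+2\xi\eta<1$ strictly, which places a boundary point of $\mathcal{S}_w$ (resp. $\mathcal{S}_d$) inside the open set $\mathcal{S}_p$ and exhibits a point of $\mathcal{S}_p\setminus\mathcal{S}_w$ (resp. $\bar{\mathcal{S}}_p\setminus\mathcal{S}_d$). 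The delicate point is to certify that a non-degenerate direction actually exists whenever $\mathcal{S}_p$ is nontrivial: I would exploit that $\mathcal{S}_w$ depends on $\mb{\sigma}_L$ only through the seminorm $\xi(\mb{\sigma}_L)$ and is therefore balanced, whereas $\gamma+2\xi\eta$ genuinely depends on the phases of $\mb{\sigma}_L$ through $\re\eta_i$ (and, when $\mb{S}_L^0\neq\mb{0}$, on the asymmetric interaction of $\mb{\sigma}_L$ with $\mb{S}_L^0$ inside $\xi(\mb{S}_L)$). Nontriviality of $\mathcal{S}_p$ rules out the fully degenerate alignment and forces $\mathcal{S}_p\neq\mathcal{S}_w$, after which a continuity and scaling argument upgrades the inequality of sets to the claimed proper inclusions $\mathcal{S}_w\subsetneq\mathcal{S}_p$ and $\mathcal{S}_d\subsetneq\bar{\mathcal{S}}_p$.
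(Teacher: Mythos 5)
Your proposal is correct and follows essentially the same route as the paper: both containments are obtained by bounding $\gamma_i$ per index via $\re\eta_i\le|\eta_i|$ and the monotonicity of $t\mapsto 2t-t^2$ on $[0,1]$, then reducing $\gamma+2\xi\eta$ to $2\xi(\mb{S}_L^0)+4\xi(\mb{\sigma}_L)-\xi(\mb{S}_L^0)^2$ for the Wang case and to $2(\xi+\eta)-(\xi-\eta)^2\le 1$ for the Dvijotham case (your completion-of-the-square is algebraically identical to the paper's substitution with the slack parameter $p$). Your strictness argument, though left as a sketch, rests on exactly the mechanism the paper uses --- the competing conditions are invariant under multiplying $\mb{\sigma}_L$ by a unimodular scalar while $\gamma$ is not, so a boundary point of $\mathcal{S}_w$ (resp. $\mathcal{S}_d$) can be rotated to make $\re\eta_k<|\eta_k|$ strict and land inside $\mathcal{S}_p$.
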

	\begin{proof}
		
		To show the proposed condition dominates \eqref{eq:wang_condition}, or $\mathcal{S}_w \subseteq \mathcal{S}_p$, we show any $\mb{\sigma}_L^w$ contained in $\mathcal{S}_w$ is in $\mathcal{S}_p$ as well. Let $\mb{S}_L^w := \mb{S}_L^0 + \mb{\sigma}_L^w$. We know  $\mb{\sigma}_L^w$ satisfies \eqref{eq:condb} since
		\begin{equation} \label{eq:d3}
		\xi(\mb{S}_L^w) - \eta(\mb{\sigma}_L^w) \le \xi(\mb{S}_L^w) + \eta(\mb{\sigma}_L^w) \le \xi(\mb{S}_L^0) + 2\xi(\mb{\sigma}_L^w) \le  \xi(\mb{S}_L^0) + 2\sqrt{\xi(\mb{\sigma}_L^w)} < 1,
		\end{equation}
		where the third inequality comes from the observation that \eqref{eq:wang_condition} implies $\xi(\mb{\sigma}_L^w) < 1$, and the last inequality is obtained by moving the second term in \eqref{eq:wang_condition} to the right, take square root on both sides, and rearrange terms. To show $\mb{\sigma}_L^w$ satisfies \eqref{eq:cond_bl}, we can instead show the left hand side of \eqref{eq:cond_bl} is less than or equal to $4\xi(\mb{\sigma}_L^w) - \xi(\mb{S}_L^0)^2 + 2\xi(\mb{S}_L^0)$, which is less than 1 by \eqref{eq:wang_condition}. For $\mb{\sigma}_L^w$ satisfying \eqref{eq:wang_condition} and any $i \in \mathcal{N}_L$, we have 
		\begin{subequations} \label{eq:d4}
			\begin{align}
			\gamma_i(\mb{S}_L^w, \mb{\sigma}_L^w) &= 2(\xi_i(\mb{S}_L^w) + \re(\eta_i(\mb{\sigma}_L^w))) - \xi_i(\mb{S}_L^w)^2 - |\eta_i(\mb{\sigma}_L^w)|^2 \\
			&\le 2(\xi_i(\mb{S}_L^w) + |\eta_i(\mb{\sigma}_L^w)|) - \xi_i(\mb{S}_L^w)^2 - |\eta_i(\mb{\sigma}_L^w)|^2 \label{eq:d4b} \\
			&\le 2(\xi(\mb{S}_L^w) + \xi(\mb{\sigma}_L^w)) - \xi(\mb{S}_L^w)^2 - \xi(\mb{\sigma}_L^w)^2, \label{eq:d4c}
			\end{align}
		\end{subequations}
		where we replace $\xi_i(\mb{S}_L^w)$ and $|\eta_i(\mb{\sigma}_L^w)|$ by $\xi(\mb{S}_L^w)$ and $\xi(\mb{\sigma}_L^w)$ in the second inequality since $-x^2 + 2x$ is increasing for $x < 1$ and $\xi(\mb{S}_L^w), \xi(\mb{\sigma}_L^w) < 1$ based on \eqref{eq:d3}. Subtract $4\xi(\mb{\sigma}_L^w) - \xi(\mb{S}_L^0)^2 + 2\xi(\mb{S}_L^0)$ from the left hand side of \eqref{eq:cond_bl}, denote the difference by $\delta$, and apply \eqref{eq:d4c}, we have
		\begin{equation}
		\delta \le \left( \xi(\mb{S}_L^w) - \xi(\mb{S}_L^0) - \xi(\mb{\sigma}_L^w) \right)\left( 2 + \xi(\mb{\sigma}_L^w) - \xi(\mb{S}_L^w) - \xi(\mb{S}_L^0) \right),
		\end{equation}
		which is nonpositive since the first term is nonpositive and the second term is lower bounded by $2 - 2\xi(\mb{S}_L^0) > 0$. This shows $\mathcal{S}_w \subseteq \mathcal{S}_p$.
		
		We now show $\mathcal{S}_d \subseteq \bar{\mathcal{S}}_p$. It is clear from \eqref{eq:dj_condition} that for any $\mb{\sigma}_L^d \in \mathcal{S}_d$, \eqref{eq:condb} is satisfied. We are left to show $\mb{\sigma}_L^d$ satisfies \eqref{eq:cond_bl}. Let $\mb{S}_L^d := \mb{S}_L^0 + \mb{\sigma}_L^d$. Similar to \eqref{eq:d4}, we have
		\begin{equation} \label{eq:d6}
		\gamma(\mb{S}_L^d, \mb{\sigma}_L^d) + 2\xi(\mb{S}_L^d) \eta(\mb{\sigma}_L)^d \le 
		2(\xi(\mb{S}_L^d) + \eta(\mb{\sigma}_L^d)) - (\xi(\mb{S}_L^d) - \eta(\mb{\sigma}_L^d))^2,
		\end{equation}
		so we only need to show the right hand side of \eqref{eq:d6} is less than or equal to 1. The following two conditions hold for some nonnegative number $p$ by respectively squaring once and twice on both sides of \eqref{eq:dj_condition} and rearrange terms:
		\begin{subequations}
			\begin{align}
			\xi(\mb{S}_L^d) + \eta(\mb{\sigma}_L^d) &= 1 - p - 2 \sqrt{\xi(\mb{S}_L^d) \eta(\mb{\sigma}_L^d)}, \\
			\left(\xi(\mb{S}_L^d) - \eta(\mb{\sigma}_L^d)\right)^2 &= (1-p)^2 - 4(1-p)\sqrt{\xi(\mb{S}_L^d)\eta(\mb{\sigma}_L^d)}.
			\end{align}
		\end{subequations}
		Substitute the two relations into the right hand side of \eqref{eq:d6} confirms it is indeed less than or equal to 1:
		\begin{equation}
		1 - p^2 - 4p\sqrt{\xi(\mb{S}_L^d)\eta(\mb{\sigma}_L^d)} \le 1.
		\end{equation}

		
		
		
		In summary, we have shown above that $\mathcal{S}_w \subseteq \mathcal{S}_p$ and $\mathcal{S}_d \subseteq \bar{\mathcal{S}}_p$. Now we show $\mathcal{S}_w \subsetneq \mathcal{S}_p$ when $\{\mb{0}\} \subsetneq \mathcal{S}_p$. It is easy to see that there exists $\mb{\sigma}_L^e$ such that \eqref{eq:wang_condition} holds with equality. Let $k := \argmax |\eta_k(\mb{\sigma}_L^e)|$, we may assume $\eta_k(\mb{\sigma}_L^e)$ has nonzero imaginary part (otherwise we can multiply $\mb{\sigma}_L^e$ by some complex number with unity modulus without changing equality of \eqref{eq:wang_condition}), then the inequality \eqref{eq:d4b} is strict and $\mb{\sigma}_L^e \in \mathcal{S}_p$. So we have identified $\mb{\sigma}_L^e \in \mathcal{S}_p \setminus \mathcal{S}_{w}$. The proof that $\mathcal{S}_d \subsetneq \bar{\mathcal{S}}_p$ is similar: we can find $\mb{\sigma}_L^d$ on the boundary of $\mathcal{S}_d$ such that $\eta_k(\mb{\sigma}_L^d)$ has nonzero imaginary part where $k := \argmax |\eta_k(\mb{\sigma}_L^d)|$. It follows from the telescoping of \eqref{eq:d6} that $\mathcal{S}_d \in \mathcal{S}_p$. By continuity, there is a $\mb{\sigma}_L$ in the neighborhood of $\mb{\sigma}_L^d$ contained in $\mathcal{S}_p \setminus \mathcal{S}_d$. This completes the proof.
	\end{proof}
	
	\subsection{Computational results}
	
	We perform computational experiment to numerically compare the maximum load powers certified by the three conditions. Ten standard IEEE test systems are used for the experiment, the data of which are available in \textsc{Matpower} package \cite{Zimmerman11}. We assume the power flow solution $\mb{v}_L^0$ to the base loading $\mb{S}_L^0$ provided in the data sets are known and use it to construct the normalized impedance matrix $\tilde{\mb{Z}}$. We are interested in certifying the maximum scaling factor $\lambda$ such that the power flow is still guaranteed to be solvable with loading $(1+\lambda)\mb{S}^0_L$ by the three conditions. The computational results are shown in Tables \ref{tb:compa} and \ref{tb:compb}. It is seen that the proposed condition consistently outperforms the other two, which serves as partial numerical evidence of the dominance of the proposed condition.
	
	\begin{table}[ht]
		\centering
		\caption{Lower bounds of solvability limits with base loading obtained using the proposed condition and two existing conditions versus the true solvability limits, the table shows the maximum scaling factors $1+\lambda$ certified by each condition.}
		\begin{tabular}{lccccc}
			Test case       & Proposed & \cite{Dvijotham18} & \cite{WangC16} & Actual value \\ \midrule
			9-bus system & $2.4676$ & $2.0493$ & $2.0399$ & $2.6577$ \\ 
			14-bus system & $4.3862$ & $3.7605$ & $3.6144$ & $5.3320$ \\ 
			24-bus system & $2.4101$ & $1.9656$ & $1.9091$ & $2.7928$ \\ 
			30-bus system & $5.4665$ & $4.9966$ & $4.9346$ & $6.0160$ \\ 
			39-bus system & $2.1826$ & $1.7650$ & $1.6846$ & $2.4730$ \\ 
			57-bus system & $1.4719$ & $1.3764$ & $1.3454$ & $1.9074$ \\ 
			118-bus system & $4.7987$ & $4.1189$ & $3.8447$ & $5.4479$ \\ 
			300-bus system & $1.0558$ & $1.0284$ & $1.0047$ & $1.6585$ \\ 
			1354-bus system & $1.3595$ & $1.2012$ & $1.1597$ & $1.5332$ \\ 
			2383-bus system & $1.5708$ & $1.3955$ & $1.3683$ & $1.9739$ \\  \bottomrule
		\end{tabular}
		\label{tb:compa}
	\end{table}
	
	\begin{table}[ht]
		\centering
		\caption{Relative errors of solvability limit approximations with base loading $\mb{S}_L^0$ obtained using the proposed condition and two existing conditions.}
		\begin{tabular}{lccc}
			Test case       & Proposed & \cite{Dvijotham18} & \cite{WangC16} \\ \midrule
			9-bus system & $7.15\%$ & $22.89\%$ & $23.24\%$ \\ 
			14-bus system & $17.74\%$ & $29.47\%$ & $32.21\%$ \\ 
			24-bus system & $13.70\%$ & $29.62\%$ & $31.64\%$ \\ 
			30-bus system & $9.13\%$ & $16.95\%$ & $17.98\%$ \\ 
			39-bus system & $11.74\%$ & $28.63\%$ & $31.88\%$ \\ 
			57-bus system & $22.83\%$ & $27.84\%$ & $29.46\%$ \\ 
			118-bus system & $11.92\%$ & $24.40\%$ & $29.43\%$ \\ 
			300-bus system & $36.34\%$ & $37.99\%$ & $39.42\%$ \\ 
			1354-bus system & $11.33\%$ & $21.65\%$ & $24.36\%$ \\ 
			2383-bus system & $20.42\%$ & $29.30\%$ & $30.68\%$ \\ 
			\textbf{Average} & $\mb{16.23\%}$ & $\mb{26.87\%}$ & $\mb{29.03\%}$ \\ \bottomrule
		\end{tabular}
		\label{tb:compb}
	\end{table}

\end{document}